\newcommand{\ket}[1]{|#1\rangle}
\newcommand{\bra}[1]{\langle #1 |}
\newcommand{\ii}{\text{i}}
\begin{document}

\title{Non-Hermitian Absorption Spectroscopy}
\author{Kai Li$^{1}$}
\author{Yong Xu$^{1,2}$}
\email{yongxuphy@tsinghua.edu.cn}
\affiliation{$^{1}$Center for Quantum Information, IIIS, Tsinghua University, Beijing 100084, People's Republic of China}
\affiliation{$^{2}$Shanghai Qi Zhi Institute, Shanghai 200030, People's Republic of China}

\begin{abstract}
While non-Hermitian Hamiltonians have been experimentally realized in cold atom systems,
it remains an outstanding open question of how to experimentally measure their complex energy spectra
in momentum space for a realistic system with boundaries. The existence of non-Hermitian skin effects
may make the question even more difficult to address given the fact that
energy spectra for a system with open boundaries are dramatically different from those in momentum space;
the fact may even lead to the notion that momentum-space band structures are not experimentally accessible
for a system with open boundaries.
Here, we generalize the widely used radio-frequency spectroscopy to measure both real and imaginary parts of
complex energy spectra of a non-Hermitian quantum system for either bosonic or fermionic atoms.
By weakly coupling the energy levels of a non-Hermitian system to auxiliary energy levels,
we theoretically derive a formula showing that the decay of atoms on the auxiliary energy levels reflects the real and
imaginary parts of energy spectra in momentum space.
We further prove that measurement outcomes are independent of boundary conditions in the thermodynamic limit,
providing strong evidence that the energy spectrum in momentum space
is experimentally measurable.
We finally apply our non-Hermitian absorption spectroscopy protocol to the Hatano-Nelson model and
non-Hermitian Weyl semimetals to demonstrate its feasibility.
\end{abstract}
\maketitle

Measurements based on spectroscopy providing band structure information play a key role in
identifying various topological phases in condensed matter and cold atom systems~\cite{Torma2016Review, Ding2019NRP, Takagi2020JCMP, NHLinearResponse, Shen2021RMP, Zwierlein2021NP}.
In the past few years, non-Hermitian topological physics has seen a rapid advance~\cite{ChristodoulidesNPReview,XuReview,ZhuReview,UedaReview,BergholtzReview}.
Such systems usually exhibit complex band structures with exceptional points or rings~\cite{Zhen2015nat,Xu2017PRL,Nori2017PRL,Kozii2017,Zyuzin2018PRB,Zhou2018,Cerjan2018PRB,Yoshida2018PRB,Zhao2018PRB,Carlstrom2018PRA,
HuPRB2019,Wang2019PRB,Yoshida2019PRB,Ozdemir2019,Cerjan2019nat,Kawabata2019PRL,Zhang2019PRL,Zhang2020PRL,
Chuanwei2020PRL,Yang2020PRL,Wang2021PRL,Nagai2020PRL,Nori2021PRL,Yuliang2021,An2022PRB}.
In cold atom systems, non-Hermitian Hamiltonians have been experimentally
realized by introducing atom loss~\cite{LuoNC2018,JoarXiv2021,Gadway2019NJP,Yan2020PRL,Takahashi2020PTEP,Esslinger2021arxiv,Esslinger2021PRX}. The parity-time ($\mathcal{PT}$) symmetry breaking has been observed
by measuring the population of an evolving state of a system through quench
dynamics~\cite{LuoNC2018,JoarXiv2021,WeiZhang2021PRL}. However, such a method is very hard to generalize to a generic case without
$\mathcal{PT}$ symmetry. In fact, while there are practical proposals
on how to realize non-Hermitian topological phases in cold atom systems~\cite{Xu2017PRL,Yang2021,Cui2021,Lang2022}, such as
non-Hermitian Weyl semimetals, it remains an outstanding open question of how to
experimentally measure both the real and imaginary parts of energy spectra
in a non-Hermitian cold atom system.
Moreover, one of the most important phenomena in non-Hermitian systems is the non-Hermitian skin effects (NHSEs); 
with such effects, band structures under open boundary conditions (OBCs) are
dramatically different from those under periodic boundary conditions (PBCs)~\cite{Yao2018PRL1,TonyLee,Xiong2018JPC,Torres2018PRB,Kunst2018PRL,Qibo2020,Lang2019PRB,Okuma2020PRL,Slager2020PRL,ChenFang2020PRL}.
The difference naturally leads to a question of whether the existence
of NHSEs makes it impossible to measure the energy
spectra in momentum space for a system with open boundaries.
The question is important in light of the fact that most
experiments are performed in a geometry with boundaries.

\begin{figure}[t]
\includegraphics[width=3.4in]{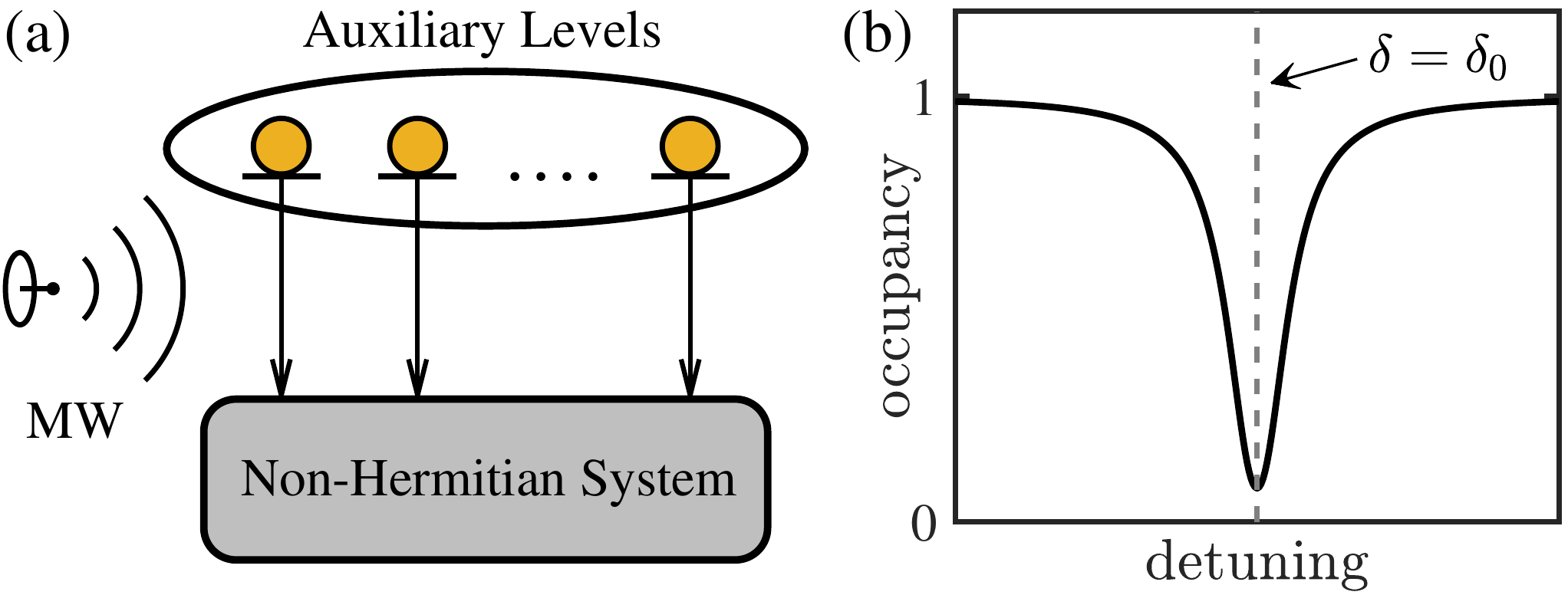}
\caption{(a) Schematics of non-Hermitian absorption spectroscopy where a non-Hermitian system is coupled to auxiliary energy levels
by a microwave field or a laser beam.
Atoms are initially prepared on the auxiliary levels, and their occupancy on the auxiliary levels is
finally measured after a period of time.
(b) A typical example of the measured occupancy as a function of the detuning $\delta$ with an occupancy dip at $\delta=\delta_0$.
The center and the width of the dip are related to the real and the imaginary part of an eigenenergy of a non-Hermitian system, respectively.
}
\label{fig1}
\end{figure}

A widely used spectroscopy in cold atom systems is the
radio-frequency (RF) spectroscopy~\cite{Torma2016Review,Zwierlein2021NP}. There, auxiliary energy levels are
weakly coupled to system energy levels so that atoms will either be driven
from occupied auxiliary levels to empty system levels or from
occupied system levels to empty auxiliary levels, when
the frequency of radio waves or microwaves match the energy difference.
By imaging the transmitted atoms,
such a spectroscopy allows us to map out the energy band dispersion,
similar to angle resolved photoemission spectroscopy (ARPES) in solid-state
materials. In the paper, we generalize the RF spectroscopy to a non-Hermitian
system to allow for measurements of both the real and imaginary parts of
a complex energy spectrum in cold atom systems (see Fig.~\ref{fig1}).
For a non-Hermitian system in cold atoms, there always exists a total loss of atoms on the
system levels, making it impossible to image any atoms on the system levels after a long period of time.
We thus consider initial preparations of atoms on the auxiliary levels followed by
measurements of atoms on these levels instead of system levels.
Using linear response theory,
we derive a formula describing the population of auxiliary levels, based on which
one can extract not only real parts of the system's band structure in momentum space but also its imaginary parts.
We further prove that measurement results are independent of boundary conditions in the thermodynamic limit
despite the existence of NHSEs;
this is in stark contrast to the results in Ref.~\cite{Sato2021PRL} showing that ARPES might be
sensitive to skin effects.
Finally, we utilize the Hatano-Nelson (HN) model and
a non-Hermitian Weyl semimetal to demonstrate the feasibility of the spectroscopy.

\emph{Momentum-resolved non-Hermitian Absorption spectroscopy.}---We start by considering a generic translation-invariant non-Hermitian system described by a Hamiltonian
$\hat{\mathcal{H}}_\text{s} = \sum_{i\alpha,j\beta} [H_\text{s}]_{i\alpha,j\beta} \hat{c}_{i\alpha}^\dagger \hat{c}_{j\beta}$ with $\hat{c}_{i\alpha}^\dagger$ ($\hat{c}_{i\alpha}$) being 
either the fermionic or bosonic creation (annihilation) operator acting on the $\alpha$th degree of freedom of the $i$th unit cell.
We assume that the non-Hermitian Hamiltonian $\hat{\mathcal{H}}_\text{s}$ is purely dissipative, i.e. $\text{Im}(\lambda) < 0$ for any eigenvalue $\lambda$ of $H_\text{s}$.
To measure the system's complex energy spectra, we couple the first degree of freedom of each unit cell to an auxiliary energy level by a microwave field such that the full Hamiltonian under rotating wave approximations becomes ($\hbar = 1$)
\begin{equation}
\label{full_Hamiltonian}
\begin{aligned}
\hat{\mathcal{H}} =
\hat{\mathcal{H}}_\text{s} + \hat{\mathcal{H}}_\text{a} +
({\Omega}/{2}) \sum\nolimits_{j}(\hat{c}_{j1}^{\dagger}\hat{a}_{j}+\hat{a}_{j}^{\dagger}\hat{c}_{j1})
\end{aligned}
\end{equation}
with $\hat{\mathcal{H}}_\text{a} =
\sum_{j} \big[ J (\hat{a}_{j}^{\dagger} \hat{a}_{j+1} + \hat{a}_{j+1}^{\dagger} \hat{a}_{j})
- (\omega - \omega_\text{a}) \hat{a}_{j}^{\dagger} \hat{a}_{j} \big]$ describing the auxiliary levels, where
$\hat{a}_{j}^{\dagger}$ ($\hat{a}_{j}$) is the creation (annihilation) operator acting on the $j$th auxiliary level,
$J$ is the hopping strength between nearest-neighbor auxiliary levels, $\omega_\text{a}$ is the energy
of the auxiliary level of an atom measured relative to the first system energy level of the atom, and $\omega$ is
the frequency of the microwave field ($\delta = \omega - \omega_\text{a}$ is the detuning).
The final term depicts the coupling between system and auxiliary
levels with $\Omega$ being the Rabi frequency of the microwave field. Note that $J$, $\omega$, $\omega_a$
and $\Omega$ are all real numbers.

To measure the energy spectrum of the system, we first prepare a cloud of fermionic atoms on the first band of
auxiliary levels described by the many-body state $|\psi_0^{(M)}\rangle=\prod_k\hat{a}_k^\dagger|0\rangle$
with $\ket{0}$ being the vacuum state at a low temperature. For bosonic atoms, we consider a finite temperature ensemble (see Supplemental Material S-1 B
for detailed discussions).
We then switch on the coupling between system and auxiliary levels by shining a microwave field on
the atoms.
After a long period of time, we perform the time-of-flight measurement to obtain the atom population on the auxiliary levels
at each momentum $k$.
While the dynamics of a dissipative cold atom system is usually described by the master equation,  
	in Supplemental Material S-1, we have proved that
	the atom number on auxiliary levels at momentum $k$ is given by
	$N_{\text{a},k}= \text{Tr}[ \rho(t) \hat{a}_k^\dagger \hat{a}_k ]
	=N_{\text{0},k} \bra{0} \hat{a}_k e^{\ii \hat{\mathcal{H}}^\dagger t} \hat{a}_k^\dagger \hat{a}_k e^{-\ii \hat{\mathcal{H}} t} \hat{a}_k^\dagger \ket{0}$, 
	where $N_{\text{0},k}= \text{Tr}[ \rho(0) \hat{a}_k^\dagger \hat{a}_k ]$. Here $\rho(t)$ is the density matrix evolving from either the initial state
	$\rho(0)=|\psi_0^{(M)}\rangle \langle \psi_0^{(M)}|$ for fermions ($N_{\text{0},k}=1$ in this case) or a finite temperature ensemble for bosons.
	The result indicates that the dynamics is completely determined by the non-Hermitian Hamiltonian $\hat{\mathcal{H}}$.
	This allows us to use a single-particle state
	$\ket{\psi_\text{a}^k} = \hat{a}_k^\dagger|0\rangle$ ($\ket{\psi_\text{a}^k}$ is an eigenstate of $\hat{\mathcal{H}}_\text{a}$ with eigenenergy
	$E_k = -\delta + 2J \cos k $) on auxiliary levels
	as an initial state to derive the protocol.

With an initial state prepared as $\ket{\psi_\text{a}^k}$ on auxiliary levels, we use the linear response theory~\cite{TormaBook} to derive the population of auxiliary levels at time $t$ under PBCs as~\cite{Supplementary} 
\begin{equation}
\label{theor1}
N_\text{a} (t)
= N_\text{a} (0) \exp (-\kappa t)
\end{equation}
with
$
\kappa =- \frac{\Omega^{2}}{2} \sum_m \frac{a_{km}^{(1)}  \gamma_{km} - b_{km}^{(1)} \Delta_{km} }{\Delta_{km}^2 + \gamma_{km}^2}
$ and $N_\text{a} (0)$ being the initial occupancy of the single-particle state of the auxiliary levels at momentum $k$.
The result holds for both bosons and fermions. For simplicity, we will consider $N_\text{a} (0)=1$ henceforth.
Here, $\Delta_{km} = E_k - \varepsilon_{km}$ with $\varepsilon_{km}$ ($-\gamma_{km}$) denoting the real (imaginary) part of eigenenergies
of ${\mathcal{H}}_\text{s}$, which are labeled by the momentum $k$ and the band index $m$;
$a_{km}^{(1)}=\mathrm{Re} [c_{km}^{(1)}]$ and $b_{km}^{(1)}=\mathrm{Im} [c_{km}^{(1)}]$ where
$c_{km}^{(\alpha)} = \langle \psi_\text{s}^{k\alpha} | u_\text{R}^{km} \rangle \langle u_\text{L}^{km} | \psi_\text{s}^{k\alpha} \rangle$
with $\ket{\psi_\text{s}^{k \alpha}} = \hat{c}_{k\alpha}^\dagger\ket{0}=\frac{1}{\sqrt{N}} \sum_j e^{\ii k j} \hat{c}_{j\alpha}^\dagger \ket{0}$,
$\ket{u_\text{R}^{km}}$ being the right eigenstate of the system Hamiltonian, i.e., $\hat{\mathcal{H}}_\text{s} \ket{u_\text{R}^{km}} = (\varepsilon_{km}-\ii \gamma_{km}) \ket{u_\text{R}^{km}}$, and $\bra{u_\text{L}^{km}}$ being the corresponding left one.
In the derivation, we have assumed that $\Omega$ is sufficiently small compared with the decay rates of system states
and $t\gg 1/\gamma_{km}$.
One can also see the similarity between $-4\ln [N_\text{a}(t)]/(\Omega^2 t)$ given by Eq.~(\ref{theor1})
and
the ${\mathcal{H}}_\text{s}$'s spectral function,
$A(k,\omega) = -2\text{Im} \, \text{Tr} [(\omega - {\mathcal{H}}_\text{s})^{-1}] = \sum_m {2\gamma_{km}}/ {[(\omega-\varepsilon_{km})^2 + \gamma_{km}^2]}$
given the fact that $\sum_{\alpha}c_{km}^{(\alpha)}=1$.

We now briefly summarize the derivation of Eq.~(\ref{theor1}) (the details can be found in Supplemental Material S-2).
We first write the full Hamiltonian (\ref{full_Hamiltonian}) as $\hat{\mathcal{H}} = \hat{\mathcal{H}}_\text{0} + \hat{\mathcal{V}}$ with $\hat{\mathcal{H}}_\text{0} = \hat{\mathcal{H}}_\text{s} + \hat{\mathcal{H}}_\text{a}$.
In the interaction picture,
the state at time $t$ is given by $\ket{\psi^I (t)} = \hat{U}^I (t,0) \ket{\psi_\text{a}^k}$ where $\hat{U}^I (t,t_0) = 1 - \ii \int_{t_0}^{t} dt' \hat{\mathcal{V}}^I (t') + \mathcal{O} (\Omega^2)$ is the time evolution operator.
Through careful derivations, we find that $\dot{N}_\text{a} (t) = \bra{\psi^I (t)} \hat{\dot{N}}_\text{a}^I (t) \ket{\psi^I (t)} = -\kappa$.
Considering the fact that $N_\text{a} (t)$ decreases with time, we obtain
$
	\dot{N}_\text{a}(t) = -\kappa N_\text{a}(t)
$, which yields Eq.~(\ref{theor1}) after integration.
The result has also been numerically confirmed.
We note that different from Refs.~\cite{NHLinearResponse,Zhai2021PRL} where a non-Hermitian perturbation is added in a
Hermitian system,
we here include a Hermitian perturbation to measure a non-Hermitian system's properties.

Based on Eq.~(\ref{theor1}), we see that the spectral line of $-\ln [N_\text{a}(t)]$ as a function of $-\delta$ consists of multiple peaks centered at $-\delta = \varepsilon_{km} - 2J \cos k$ with half widths approximated by $2\gamma_{km}$.
In fact, Eq.~(\ref{theor1}) allows us to obtain both $\varepsilon_{km}$ and $\gamma_{km}$ (as well as the quantities $a_{km}$ and $b_{km}$)
by fitting the spectral line using this formula.

\begin{figure}[t]
\includegraphics[width=3.4in]{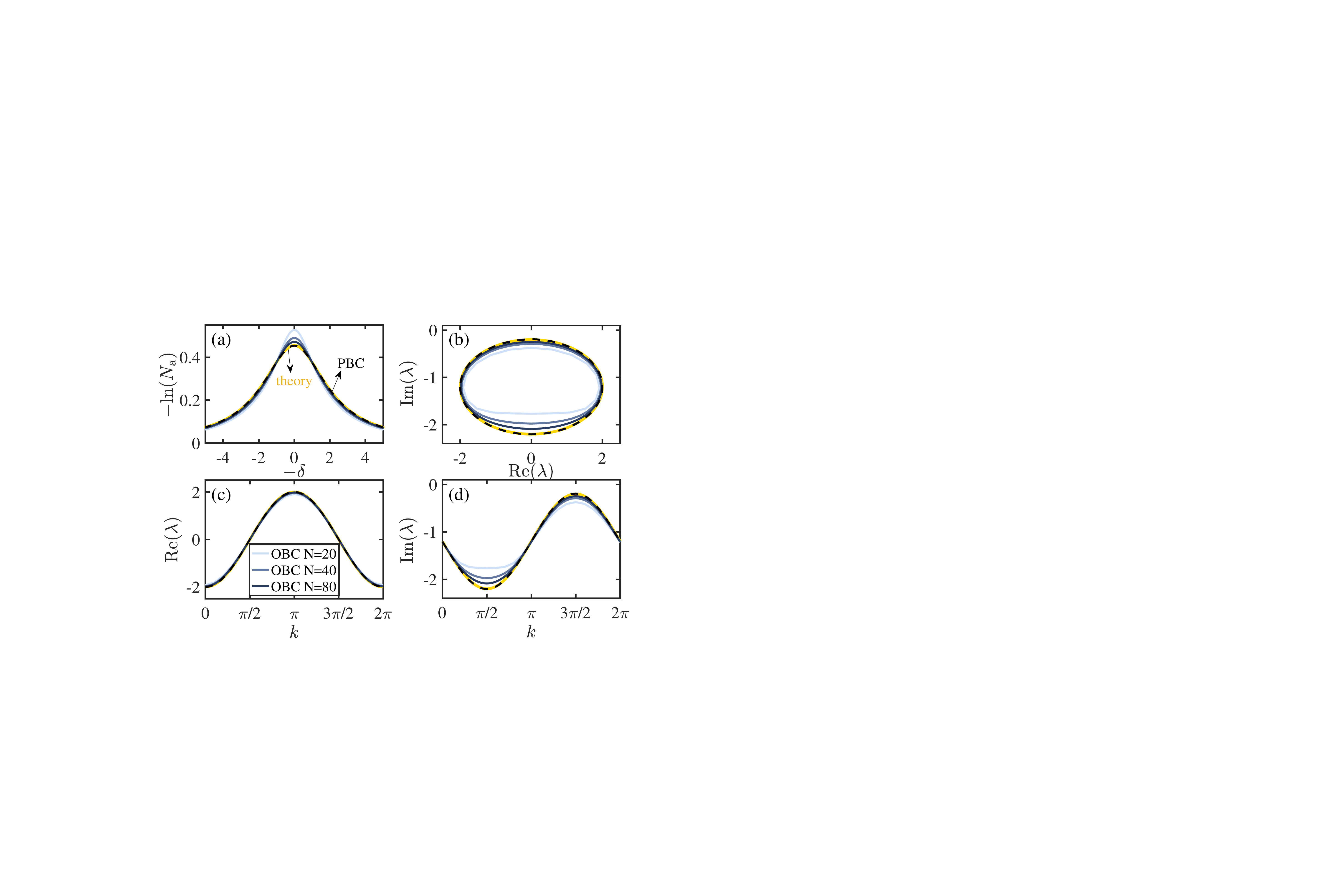}
\caption{
(a) The spectral lines of the HN model at $k=\pi/2$
with the yellow line obtained from Eq.~(\ref{logNa_HN}) and the dashed black and solid blue lines
numerically computed for a system under PBCs and OBCs, respectively.
Energy spectra in the complex energy plane (b), real (c) and imaginary parts (d) of energies with respect to $k$,
where $k$ denotes the Bloch momentum as we consider a Bloch state as an initial state.
The energies are extracted by fitting the numerically calculated spectral lines
under PBCs (dashed black lines) and OBCs (solid blue lines) based on Eq.~(\ref{logNa_HN}) in comparison with
the theoretical result, $\varepsilon_k=2J_s\cos k+2\ii g\sin k-2\ii \gamma$ (the yellow lines).
Here, $J_s=-1$, $J=-0.02$, $g=-0.5$, $\gamma=0.6$, $\Omega=0.1$ and $t=200$.
}
\label{fig2}
\end{figure}

Before applying our method to several paradigmatic models,
we wish to prove that our conclusion is independent of boundary conditions in the thermodynamic limit, although we derive the results under PBCs.
Here we briefly summarize the proof; the detailed one can be found in Supplemental Material S-3.
We first prove that $D = \bra{k \alpha} e^{-\ii \hat{\mathcal{H}}_\text{o} t} - e^{-\ii \hat{\mathcal{H}}_\text{p} t} \ket{k' \alpha'} \propto 1/N$
where $\ket{k\alpha}$ is the k-space basis vector, and letters `o' (`p') are used to denote quantities under OBCs (PBCs).
Assuming the hopping range is finite, we obtain $\bra{k\alpha} \hat{B} \ket{k' \alpha'} = f_{\alpha \alpha'}(k,k')/N$ where $\hat{B} = \hat{\mathcal{H}}_\text{p} - \hat{\mathcal{H}}_\text{o}$ and $f_{\alpha \alpha'} (k,k')$ is independent of $N$.
We then prove that each term in $D_n = \bra{k \alpha} (\hat{\mathcal{H}}_\text{p} - \hat{B})^n - (\hat{\mathcal{H}}_\text{p})^n \ket{k' \alpha'}$ is proportional to $1/N$
and thus conclude that $D = \sum_{n=1}^{\infty} \frac{(-\ii t)^n}{n!} D_n \propto 1/N$.
Since $\dot{N}_\text{a}^\text{b} (t) = (\Omega^2/4) \int_{0}^{t} dt' \Gamma^{\text{b}} (t,t') + \text{H.c.}$ and $\Gamma^{\text{b}} (t,t') = - \sum_{q \tilde{q}} \bra{\psi_\text{a}^k} e^{ \ii \hat{\mathcal{H}}_\text{a}^\text{b} t} \ket{\psi_\text{a}^{q}} \bra{\psi_\text{s}^{q 1}} e^{-\ii \hat{\mathcal{H}}_\text{s}^\text{b} (t - t')} \ket{\psi_\text{s}^{\tilde{q} 1}} \bra{\psi_\text{a}^{\tilde{q}}} e^{-\ii \hat{\mathcal{H}}_\text{a}^\text{b} t'} \ket{\psi_\text{a}^k}$ ($\text{b}=\text{o},\text{p}$), we derive that $(\Gamma^\text{o} - \Gamma^\text{p}) (t,t') \propto 1/N$,
yielding $\dot{N}_\text{a}^\text{o} (t) - \dot{N}_\text{a}^\text{p} (t) \propto 1/N$.
Taking the infinite size limit, we obtain $N_\text{a}^\text{o} (t) = N_\text{a}^\text{p} (t)$,
implying that
the result is independent of boundary conditions in the thermodynamic limit.

\emph{Hatano-Nelson model.}---To demonstrate the feasibility of our spectroscopy protocol, we apply it to the HN model~\cite{HN1996PRL} (the simplest model that supports NHSEs):
\begin{equation}
\label{HNmodel}
\hat{\mathcal{H}}_\text{s}^\text{HN} = \sum\nolimits_j  [(J_s + g) \hat{c}_j^\dagger \hat{c}_{j+1} + (J_s - g) \hat{c}_{j+1}^\dagger \hat{c}_{j}
-2 \ii \gamma \hat{c}_j^\dagger \hat{c}_j ],
\end{equation}
where $J_s$ and $g$ are real parameters describing the hopping strength between nearest-neighbor sites. When $g\neq 0$,
the NHSE occurs due to the asymmetric hopping.
Here we add an onsite dissipation term $-2\ii \gamma \sum_j \hat{c}_j^\dagger \hat{c}_j$ with  $\gamma > |g|$ to ensure that $\hat{\mathcal{H}}_\text{s}^\text{HN}$ is purely dissipative.
The full Hamiltonian with each system site coupled to an auxiliary level is given by
$\hat{\mathcal{H}}^\text{HN} =
\hat{\mathcal{H}}_\text{s}^\text{HN} +
\sum_j [ J ( \hat{a}_{j}^{\dagger} \hat{a}_{j+1} + \hat{a}_{j+1}^{\dagger} \hat{a}_{j} )
 - \delta \hat{a}_{j}^{\dagger} \hat{a}_{j}
+ ({\Omega}/{2}) (\hat{c}_{j}^{\dagger}\hat{a}_{j}+\hat{a}_{j}^{\dagger}\hat{c}_{j}) ]$.
Since $\ket{u_\text{R}^k} = \ket{u_\text{L}^k} = \ket{\psi_\text{s}^k} $ for the HN model, we obtain $a_k=1$, $b_k=0$ and thus
\begin{equation}
\label{logNa_HN}
-\ln [N_\text{a} (t)]
=  \frac{\Omega^{2}t}{2} \frac{ \gamma_k }{(-\delta+2J \cos k -\varepsilon_k)^2 + \gamma_k^2},
\end{equation}
which is exactly the spectral function $A(k,\omega)$ up to a constant factor.

To verify our theory, we numerically calculate $N_\text{a} (t)$ with respect to $\delta$
and find that the results under PBCs agree very well with Eq.~(\ref{logNa_HN}) [see Fig.~\ref{fig2}(a)].
For open boundaries, while the numerical results slightly deviate from the theoretical ones,
the deviation becomes smaller as the system size increases, which is in agreement with our proof that
$N_\text{a} (t)$ under OBCs approaches the result under PBCs as we increase the system size even when the
system exhibits NHSEs.

With each $N_\text{a} (t)$ as a function of $\delta$, one can extract the energy information
by fitting the function based on Eq.~(\ref{logNa_HN}).
Figure~\ref{fig2}(b-d) illustrates the extracted energy spectra under PBCs and OBCs in comparison with
the momentum space energy spectra of the system Hamiltonian $\hat{\mathcal{H}}_\text{s}^\text{HN}$.
We see that the results under PBCs agree perfectly well with the system's complex energy spectra.
For open boundaries, while the extracted energy spectra are slightly different from
the theoretical results, the discrepancy becomes smaller as the system size becomes larger.
We thus conclude that non-Hermitian absorption spectroscopy allows us to extract both real
and imaginary parts of complex energy spectra of a non-Hermitian system in momentum space even when
the non-Hermitian system has open boundaries and NHSEs.

To further confirm that the spectroscopy measures the energy spectra in momentum space,
we study the non-Hermitian Rice-Mele model~\cite{NHRM2020PRL} with NHSEs induced by onsite dissipations in  Supplemental Material S-4,
which is more relevant to cold atom experiments (also see the proposals in Refs.~\cite{Yang2021,Cui2021}).
We show that the boundary effects can always be significantly reduced by increasing system sizes.

\emph{Non-Hermitian Weyl semimetal.}---Next, we study a three-dimensional Weyl semimetal with onsite dissipations~\cite{Xu2017PRL}:
\begin{equation}
\label{WER_Ham}
\hat{\mathcal{H}}_\text{s}^\text{NHWS} = \sum\nolimits_{\bm k}  \hat{c}_{\bm k}^\dagger h_{\bm k} \hat{c}_{\bm k},
\end{equation}
where $\hat{c}_{\bm k}^\dagger = (\hat{c}_{{\bm k}\uparrow}^\dagger \ \, \hat{c}_{{\bm k}\downarrow}^\dagger)$
and $h_{\bm k} = 2 t_\text{SO} (\sin k_x \sigma_x + \sin k_y \sigma_y) + [m_z - 2t_z \cos k_z - 2 t_1 (\cos k_x + \cos k_y)] \sigma_z + \ii \gamma (\sigma_z - \sigma_0)$ with $t_{\text{SO}}$, $t_z$, $t_1$ and $m_z$ being real system parameters,
$\gamma$ depicting the atom loss rate on a hyperfine level~\cite{Xu2017PRL}, and $\{\sigma_{\nu}\}$ ($\nu=x,y,z$) being a set of Pauli matrices.
Without onsite dissipations ($\gamma=0$), this model~\cite{Yong2016PRA,Yong2019PRB,Liu2020SB} has been experimentally realized in
cold atom systems~\cite{3DWeylband2021Science}.
When $\gamma>0$, each Weyl point develops into a Weyl exceptional ring consisting of exceptional points
on which $h_{\bm k}$ is nondiagonalizable~\cite{Xu2017PRL}.
For example, the Weyl point at $(k_x,k_y,k_z) = \{ 0,0,\arccos[ ({m_z - 4t_1})/({2t_z}) ] \}$
deforms into a Weyl exceptional ring which can be approximated by $k_x^2+k_y^2={\gamma^2}/({4 t_\text{SO}^2})$ and
$k_z = \arccos[ ({m_z-4t_1})/({2t_z}) + {t_1 \gamma^2}/({8 t_z t_\text{SO}^2}) ]$ when $\gamma \ll 2 t_\text{SO}$.

\begin{figure}[t]
\includegraphics[width=3.4in]{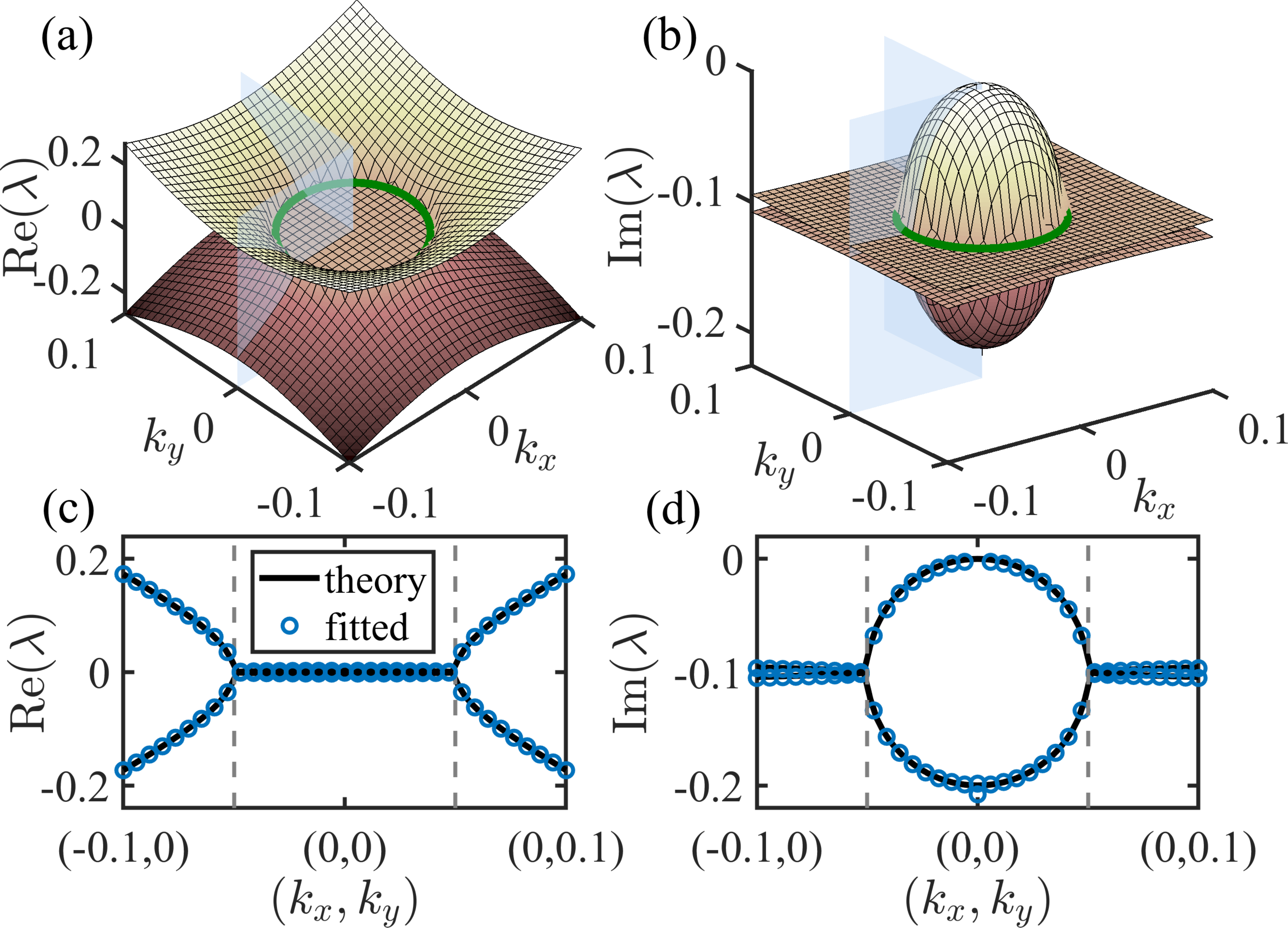}
\caption{
Real (a) and imaginary parts (b) of the extracted eigenenergy of $\hat{\mathcal{H}}_\text{s}^\text{NHWS}$
in the $(k_x,k_y)$ plane obtained by fitting the numerically simulated spectral lines based on Eq.~(\ref{theor1}).
The Weyl exceptional ring is highlighted as a green circle.
(c),(d) The sectional view (blue circles) of the fitted energy spectra on the blue planes in (a) and (b), respectively.
The black lines refer to the eigenenergies of $h_{\bm k}$.
The locations of exceptional points are marked out by vertical dashed lines.
Without loss of generality, we here set $J=0$ in light of the fact that nonzero $J$ only causes a shift of the spectral line by $2J \sum_{i=x,y,z} \cos k_i$.
Here, $t_1=t_z=t_\text{SO} = 1$, $m_z=4$, $\gamma = 0.1$ and $k_z = \arccos({1}/{800})$.
}
\label{fig3}
\end{figure}

To measure the Weyl exceptional ring, we couple the spin down degree of freedom of each atom to an auxiliary level.
Since the measurement protocol is independent of boundary conditions,
we consider the Hamiltonian under PBCs, which reads
$
\hat{\mathcal{H}}^\text{NHWS} = \sum_{\bm k} \hat{c}_{\bm k}^\dagger h_{\bm k} \hat{c}_{\bm k}
+ ({\Omega}/{2})
(\hat{c}_{{\bm k}\downarrow}^\dagger \hat{a}_{\bm k} + \hat{a}_{\bm k}^\dagger \hat{c}_{{\bm k}\downarrow})
- \delta \hat{a}_{\bm k}^\dagger \hat{a}_{\bm k} .
$
For each initial state $\ket{\psi_\text{a}^{\bm k}}$, the full Hamiltonian in k-space is a $3\times3$ matrix in the basis
$\beta_{\bm k} = \{ \hat{c}_{{\bm k}\uparrow}^\dagger \ket{0}, \hat{c}_{{\bm k}\downarrow}^\dagger \ket{0}, \hat{a}_{\bm k}^\dagger \ket{0} \}$.

Figure~\ref{fig3}(a) and (b) show the extracted energy spectra in the $(k_x,k_y)$ plane by fitting the results of $-\ln N_\text{a} (t)$ versus $\delta$,
where a Weyl exceptional ring at $k_x^2+k_y^2={\gamma^2}/{4 t_\text{so}^2}$ is highlighted as a green circle.
We see that the eigenenergies are approximately purely real (imaginary) up to a constant $-\ii \gamma$ outside (inside) the Weyl exceptional ring.
Such features can also be clearly observed in the sectional view [see Fig.~\ref{fig3}(c) and (d)] of the fitted energy spectra on the blue planes in Fig.~\ref{fig3}(a) and (b).
The sectional view further reveals the existence of exceptional points at $k_x^2+k_y^2=(0.05)^2$, the positions of which are
highlighted by vertical dashed lines. The figure illustrates that the fitted (measured) energy spectra are in excellent agreement with
the eigenenergies of $h_{\bm k}$.

\begin{figure}[t]
\includegraphics[width=3.4in]{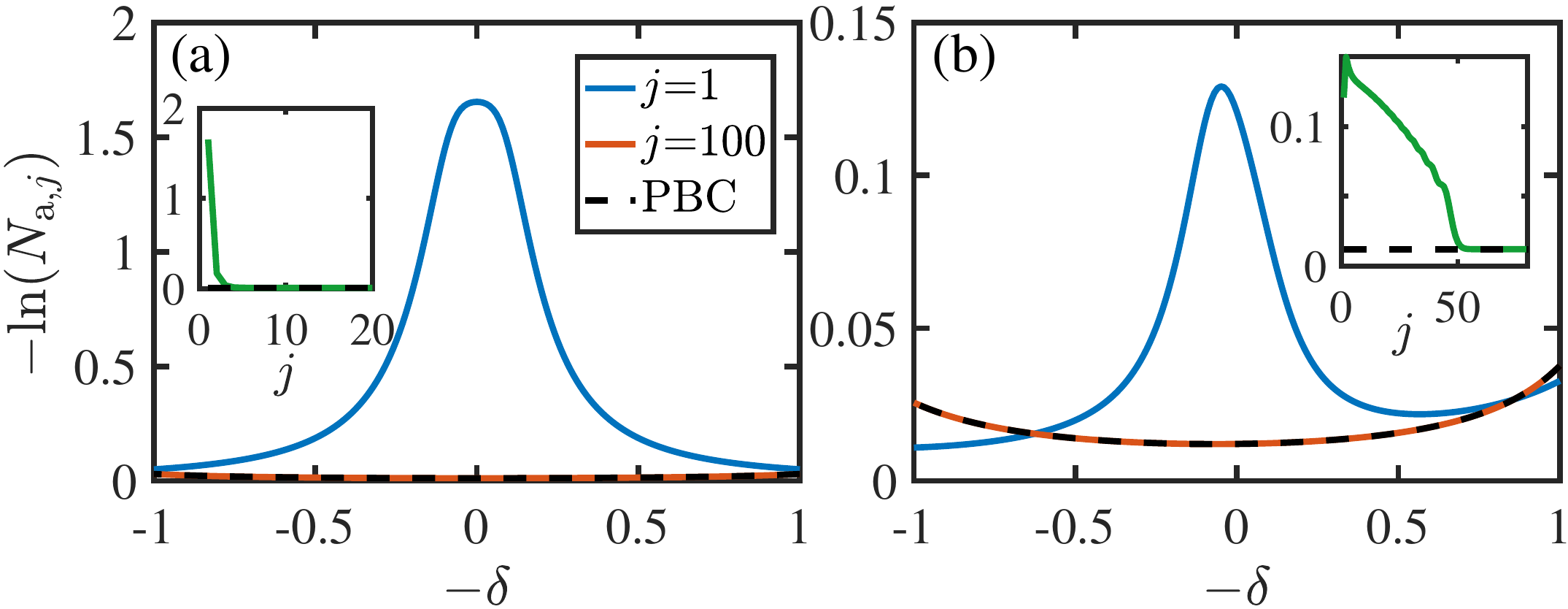}
\caption{Indicator of amounts of local absorbed atoms, $-\ln (N_{\text{a},j})$ versus $-\delta$ on the $j$th auxiliary level
when (a) $J=0$ and (b) $J=-0.05$.
The insets plot $-\ln (N_{\text{a},j})$ as a function of site $j$ when $\delta=0$.
The dashed black lines and other ones are calculated for a system (with the system size $N=200$) under PBCs and OBCs, respectively.
Here, $J_s=-1$, $J_2=-2.5$, $g=-0.1$ and $\gamma=0.1$.
}
\label{fig4}
\end{figure}

\emph{Site-resolved non-Hermitian absorption spectroscopy.}---We have shown that the energy spectra in momentum space
can be extracted by performing time-of-flight measurements of atoms on auxiliary levels.
In this section, we will demonstrate that topological edge modes, such as zero-energy modes, in non-Hermitian systems
can also be measured by probing the local occupancy $N_{\text{a},j}$
after a period of time $t$~\cite{footnote1}.
We expect that the amount of absorbed atoms at boundaries exhibits a peak near $\delta=0$
for a topological system with zero-energy edge modes.

To demonstrate our method, we consider the non-Hermitian Su-Schrieffer-Heeger (SSH) model~\cite{TonyLee,Yao2018PRL1} with two sublattices $A$ and $B$
described by
\begin{eqnarray}
	\label{NHSSH}
		\hat{\mathcal{H}}_\text{s}^\text{NHSSH} =&&
		\sum\nolimits_j [
		(J_s + g) \hat{c}_{jA}^\dagger \hat{c}_{jB}
		+ (J_s - g) \hat{c}_{jB}^\dagger \hat{c}_{jA} \\
		&& + J_2 (\hat{c}_{jB}^\dagger \hat{c}_{j+1\,A} \! + \! \text{H.c.})
		-  \ii \gamma (\hat{c}_{jA}^\dagger \hat{c}_{jA} \! + \! \hat{c}_{jB}^\dagger \hat{c}_{jB}) ]. \nonumber
\end{eqnarray}
We couple the $A$ site of each unit cell to an auxiliary level and calculate the local occupancy $N_{\text{a},j}$
in a nontrivial regime with two topological zero-energy modes localized at both edges.
The site-resolved spectrum in Fig.~\ref{fig4} displays an absorption peak around $\delta=0$
at the left boundary, which does not exist in the bulk sites, revealing the existence
of zero-energy modes. 
Such a feature is further illustrated by the amount of local absorbed atoms
on each auxiliary level when $\delta=0$ (see the insets),
showing that only the left boundary has a significant response 
(the right edge state can be probed if we couple $B$ sites to the auxiliary levels; see Supplemental Material S-6 B for details).
We also find that by turning on $J$, the topological zero modes can be detected in a wider range near the left boundary ($j<50$)
as shown in Fig.~\ref{fig4}(b).
We attribute this to the hopping between boundary sites which balances the particle distribution, while the bulk is unaffected and agrees with the
results for periodic boundaries.

In summary, we have generalized the widely used radio-frequency spectroscopy to a non-Hermitian quantum system
and demonstrated that it
can be employed to measure both real and imaginary parts of complex energy spectra. We theoretically prove and numerically confirm that such measurement results are independent of boundary conditions
even when a non-Hermitian quantum system exhibits NHSEs, thereby providing strong evidence that band structures
in momentum space are experimentally measurable in a generic non-Hermitian system.
In cold atom systems, we may consider either bosonic atoms, such as $^{87}$Rb atoms, or fermionic atoms, such as $^{173}$Yb or $^{40}$K
(see Supplemental Material S-1 C for more details).
Our methods are in fact not limited to cold atom systems, but can also be used in other quantum systems,
such as trapped ions~\cite{WeiZhang2021PRL} or solid-state spin systems~\cite{Du2018Science,Deng2021PRL,Deng2022NC}.
Such a spectroscopy may also be generalized to a dissipative interacting system (see Supplemental Material S-7 for detailed discussions). 
Given the similarity between the radio-frequency spectroscopy and ARPES in solid-state materials,
our results may also have important implications in condensed matter systems.

\begin{acknowledgments}
We thank T. Qin, Y.-L. Tao and J.-H. Wang for helpful discussions.
This work is supported by the National Natural Science Foundation of China (Grant No. 11974201)
and Tsinghua University Dushi Program.
\end{acknowledgments}

\emph{Note added.}---Recently, we became aware of a related work where topological edge states are experimentally measured in ultracold atoms~\cite{Yan2022}.


\begin{thebibliography}{99}

\bibitem{Torma2016Review}
P. T{\"o}rm{\"a},
Phys. Scr. \textbf{91}, 043006 (2016).

\bibitem{Ding2019NRP}
B. Lv, T. Qian, and H. Ding,
Nat. Rev. Phys. \textbf{1}, 609 (2019).

\bibitem{Takagi2020JCMP}
C.-L. Lin, N. Kawakami, R. Arafune, E. Minamitani, and N. Takagi,
J. Condens. Matter Phys. \textbf{32}, 243001 (2020).

\bibitem{NHLinearResponse}
L. Pan, X. Chen, Y. Chen, and H. Zhai,
{Nat. Phys.} \textbf{16}, 767 (2020).

\bibitem{Shen2021RMP}
J. A. Sobota, Y. He, and Z.-X. Shen,
Rev. Mod. Phys. \textbf{93}, 025006 (2021).

\bibitem{Zwierlein2021NP}
C. J. Vale and M. Zwierlein,
Nat. Phys. \textbf{17}, 1305 (2021).



\bibitem{ChristodoulidesNPReview}
R. El-Ganainy, K. G. Makris, M. Khajavikhan, Z. H. Musslimani, S. Rotter, and D. N. Christodoulides,
Nat. Phys. \textbf{14}, 11 (2018).

\bibitem{XuReview}
Y. Xu,
Front. Phys. \textbf{14}, 43402 (2019).

\bibitem{ZhuReview}
D.-W. Zhang, Y.-Q. Zhu, Y. X. Zhao, H. Yan, and S.-L. Zhu,
Adv. Phys. \textbf{67}, 253 (2019).

\bibitem{UedaReview}
Y. Ashida, Z. Gong, and M. Ueda,
Adv. Phys. \textbf{69}, 249 (2020).

\bibitem{BergholtzReview}
E. J. Bergholtz, J. C. Budich, and F. K. Kunst,
Rev. Mod. Phys. \textbf{93}, 015005 (2021).



\bibitem{Zhen2015nat}
B. Zhen, C. W. Hsu, Y. Igarashi, L. Lu, I. Kaminer, A. Pick, S.-L. Chua, J. D. Joannopoulos, and M. Solja\v{c}i\'{c},
Nature (London) \textbf{525}, 354 (2015).

\bibitem{Xu2017PRL}
Y. Xu, S.-T. Wang, and L.-M. Duan,
Phys. Rev. Lett. \textbf{118}, 045701 (2017).

\bibitem{Nori2017PRL}
D. Leykam, K. Y. Bliokh, C. Huang, Y. D. Chong, and F. Nori,
Phys. Rev. Lett. \textbf{118,} 040401 (2017).

\bibitem{Kozii2017} 
V. Kozii and L. Fu,
arXiv:1708.05841 (2017).

\bibitem{Zyuzin2018PRB} 
A. A. Zyuzin and A. Y. Zyuzin,
Phys. Rev. B \textbf{97}, 041203(R) (2018).

\bibitem{Zhou2018}
H. Zhou, C. Peng, Y. Yoon, C. W. Hsu, K. A. Nelson, L. Fu, J. D. Joannopoulos, M. Solja\v{c}i\'{c}, and B. Zhen,
Science \textbf{359}, 1009 (2018).

\bibitem{Cerjan2018PRB}
A. Cerjan, M. Xiao, L. Yuan, and S. Fan,
Phys. Rev. B \textbf{97}, 075128 (2018).

\bibitem{Yoshida2018PRB} 
T. Yoshida, R. Peters, and N. Kawakami,
Phys. Rev. B \textbf{98}, 035141 (2018).

\bibitem{Zhao2018PRB} 
P.-L. Zhao, A.-M. Wang, and G.-Z. Liu,
Phys. Rev. B \textbf{98}, 085150 (2018).

\bibitem{Carlstrom2018PRA} %
J. Carlstr\"{o}m and E. J. Bergholtz,
Phys. Rev. A \textbf{98}, 042114 (2018).

\bibitem{HuPRB2019} %
Z. Yang and J. Hu,
Phys. Rev. B \textbf{99}, 081102(R) (2019).

\bibitem{Wang2019PRB} %
H.-Q. Wang, J.-W. Ruan, and H.-J. Zhang,
Phys. Rev. B \textbf{99}, 075130 (2019).

\bibitem{Yoshida2019PRB} 
T. Yoshida, R. Peters, N. Kawakami, and Y. Hatsugai,
Phys. Rev. B \textbf{99}, 121101(R) (2019).

\bibitem{Ozdemir2019} %
{\c{S}}. K. \"{O}zdemir, S. Rotter, F. Nori, and L. Yang,
Nat. Mater. \textbf{18}, 783 (2019).

\bibitem{Cerjan2019nat} %
A. Cerjan, S. Huang, M. Wang, K. P. Chen, Y. Chong, and M. C. Rechtsman,
Nat. Photon. \textbf{13}, 623 (2019).

\bibitem{Kawabata2019PRL} %
K. Kawabata, T. Bessho, and M. Sato,
Phys. Rev. Lett. \textbf{123}, 066405 (2019).

\bibitem{Zhang2019PRL} %
X. Zhang, K. Ding, X. Zhou, J. Xu, and D. Jin,
Phys. Rev. Lett. \textbf{123}, 237202 (2019).

\bibitem{Zhang2020PRL} %
X.-X. Zhang and M. Franz,
Phys. Rev. Lett. \textbf{124}, 046401 (2020).

\bibitem{Chuanwei2020PRL}
J. Hou, Z. Li, X.-W. Luo, Q. Gu, and C. Zhang,
Phys. Rev. Lett. \textbf{124}, 073603 (2020).

\bibitem{Yang2020PRL} 
Z. Yang, C.-K. Chiu, C. Fang, and J. Hu,
Phys. Rev. Lett. \textbf{124}, 186402 (2020).

\bibitem{Wang2021PRL} %
K. Wang, L. Xiao, J. C. Budich, W. Yi, and P. Xue,
Phys. Rev. Lett. \textbf{127}, 026404 (2021).

\bibitem{Nagai2020PRL} 
Y. Nagai, Y. Qi, H. Isobe, V. Kozii, and L. Fu,
Phys. Rev. Lett. \textbf{125}, 227204 (2020).

\bibitem{Nori2021PRL}
T. Liu, J. J. He, Z. Yang, and F. Nori,
Phys. Rev. Lett. \textbf{127}, 196801 (2021).

\bibitem{Yuliang2021}
Y.-L. Tao, T. Qin, and Y. Xu,
arXiv:2111.03348 (2021).

\bibitem{An2022PRB}
H. Wu and J.-H. An,
Phys. Rev. B \textbf{105}, L121113 (2022).

\bibitem{LuoNC2018}
J. Li, A. K. Harter, J. Liu, L. de Melo, Y. N. Joglekar, and L. Luo,
Nat. Commun. \textbf{10}, 855 (2019).

\bibitem{JoarXiv2021}
Z. Ren, D. Liu, E. Zhao, C. He, K. K. Pak, J. Li, and G.-B. Jo,
Nat. Phys. \textbf{18}, 385 (2022).

\bibitem{Gadway2019NJP}
S. Lapp, J. Ang'ong'a, F. A. An, and B. Gadway,
New J. Phys. \textbf{21}, 045006 (2019).

\bibitem{Yan2020PRL}
W. Gou, T. Chen, D. Xie, T. Xiao, T.-S. Deng, B. Gadway, W. Yi, and B. Yan,
Phys. Rev. Lett. \textbf{124}, 070402 (2020).

\bibitem{Takahashi2020PTEP}
Y. Takasu, T. Yagami, Y. Ashida, R. Hamazaki, Y. Kuno, and Y. Takahashi,
Prog. Theor. Exp. Phys. \textbf{2020}, 12A110 (2020).

\bibitem{Esslinger2021PRX}
F. Ferri, R. Rosa-Medina, F. Finger, N. Dogra, M. Soriente, O. Zilberberg, T. Donner, and T. Esslinger,
Phys. Rev. X \textbf{11}, 041046 (2021).

\bibitem{Esslinger2021arxiv}
R. Rosa-Medina, F. Ferri, F. Finger, N. Dogra, K. Kroeger, R. Lin, R. Chitra, T. Donner, and T. Esslinger,
arXiv:2108.11888 (2021).

\bibitem{WeiZhang2021PRL}
L. Ding, K. Shi, Q. Zhang, D. Shen, X. Zhang, and W. Zhang,
Phys. Rev. Lett. \textbf{126}, 083604 (2021).



\bibitem{Yang2021}
S. Guo, C. Dong, F. Zhang, J. Hu, and Z. Yang,
arXiv:2111.04220 (2021).

\bibitem{Cui2021}
L. Zhou, H. Li, W. Yi, and X. Cui,
arXiv:2111.04196 (2021).

\bibitem{Lang2022}
Z.-C. Xu, Z. Zhou, E. Cheng, L.-J. Lang, and S.-L. Zhu,
arXiv:2201.01216 (2022).



\bibitem{Yao2018PRL1}
S. Yao and Z. Wang,
Phys. Rev. Lett. \textbf{121}, 086803 (2018).

\bibitem{TonyLee}
T. E. Lee,
Phys. Rev. Lett. \textbf{116}, 133903 (2016).

\bibitem{Xiong2018JPC}
Y. Xiong,
J. Phys. Commun. \textbf{2}, 035043 (2018).

\bibitem{Torres2018PRB}
V. M. Martinez Alvarez, J. E. Barrios Vargas, and L. E. F. Foa Torres
Phys. Rev. B \textbf{97}, 121401(R) (2018).

\bibitem{Kunst2018PRL}
F. K. Kunst, E. Edvardsson, J. C. Budich, and E. J. Bergholtz,
Phys. Rev. Lett. \textbf{121}, 026808 (2018).

\bibitem{Qibo2020}
Q.-B. Zeng, Y.-B. Yang, and Y. Xu,
Phys. Rev. B \textbf{101}, 020201(R) (2020).

\bibitem{Lang2019PRB}
H. Jiang, L.-J. Lang, C. Yang, S.-L. Zhu, and S. Chen, 
Phys. Rev. B \textbf{100}, 054301 (2019).

\bibitem{Okuma2020PRL}N. Okuma, K. Kawabata, K. Shiozaki, and M. Sato,
{Phys. Rev. Lett. \textbf{124,} 086801 (2020).}

\bibitem{Slager2020PRL}
D. S. Borgnia, A. J. Kruchkov, and R.-J. Slager,
{Phys. Rev. Lett. 124, 056802, (2020).}

\bibitem{ChenFang2020PRL}K. Zhang, Z. Yang, and C. Fang,
{Phys. Rev. Lett. \textbf{125,} 126402 (2020).}

\bibitem{Sato2021PRL}
N. Okuma and M. Sato,
Phys. Rev. Lett. \textbf{126}, 176601 (2021).

\bibitem{TormaBook}
P. T\"orm\"a,
Spectroscopies---theory, in \textit{Quantum Gas Experiments---Exploring Many-Body States},
edited by P. T\"orm\"a and K. Sengstock (Imperial College Press, London, 2015).

\bibitem{Supplementary}
See Supplemental Material for more details on the proof that
the many-body dynamics described by the master equation is determined by the single-particle dynamics governed by the effective non-Hermitian 
Hamiltonian in Section S-1,
the derivation of the population of the auxiliary levels based on linear response theory in Section S-2,
the proof that the population is independent of boundary conditions in the thermodynamic limit in Section S-3,
the influence of non-Hermitian skin effects (NHSEs) on non-Hermitian absorption spectroscopy
based on the non-Hermitian Rice-Mele model in Section S-4,
the analysis of the possibility of probing non-Bloch energies for systems that exhibit NHSEs and its problems in Section S-5,
more results about the non-Hermitian SSH model in Section S-6,
and the application of the absorption spectroscopy in interacting systems in Section S-7,
which includes Refs.~\cite{Prosen2008NJP,Wang2019PRL,Ueda2018PRX}.

\bibitem{Prosen2008NJP}
T. Prosen,
{New J. Phys.} \textbf{10}, 043026 (2008).

\bibitem{Wang2019PRL}
F. Song, S. Yao, and Z. Wang,
{Phys. Rev. Lett.} \textbf{123}, 170401 (2019).

\bibitem{Ueda2018PRX}
Z. Gong, Y. Ashida, K. Kawabata, K. Takasan, S. Higashikawa, and M. Ueda, 
{Phys. Rev. X} \textbf{8}, 031079 (2018).

\bibitem{Zhai2021PRL}
T.-S. Deng, L. Pan, Y. Chen, and H. Zhai,
Phys. Rev. Lett. \textbf{127}, 086801 (2021).

\bibitem{HN1996PRL}
N. Hatano and D. R. Nelson,
Phys. Rev. Lett. \textbf{77}, 570 (1996).



\bibitem{NHRM2020PRL}
Y. Yi and Z. Yang,
Phys. Rev. Lett. \textbf{125}, 186802 (2020).




\bibitem{Yong2016PRA}
Y. Xu and L.-M. Duan,
Phys. Rev. A \textbf{94}, 053619 (2016).

\bibitem{Yong2019PRB}
Y. Xu and Y. Hu,
Phys. Rev. B \textbf{99}, 174309 (2019).

\bibitem{Liu2020SB}
Y.-H. Lu, B.-Z. Wang, and X.-J. Liu,
Sci. Bull. \textbf{65}, 2080 (2020).

\bibitem{3DWeylband2021Science}
Z.-Y. Wang, X.-C. Cheng, B.-Z. Wang, J.-Y. Zhang, Y.-H. Lu, C.-R. Yi, S. Niu, Y. Deng, X.-J. Liu, S. Chen, and J.-W. Pan,
Science \textbf{372}, 271 (2021).

\bibitem{footnote1}{
If the non-Hermitian Hamiltonian is quadratic, then we can treat each atom independently so that the local occupancy is given by
$N_{\text{a},j} = \sum_x \bra{0} \hat{a}_x e^{\ii \hat{\mathcal{H}}^\dagger t} \hat{a}_j^\dagger \hat{a}_j e^{-\ii \hat{\mathcal{H}} t} \hat{a}_x^\dagger \ket{0}$~\cite{Supplementary}. }

\bibitem{Du2018Science}
Y. Wu, W. Liu, J. Geng, X. Song, X. Ye, C.-K. Duan, X. Rong, and J. Du,
Science \textbf{364}, 878 (2018).

\bibitem{Deng2021PRL}
W. Zhang, X. Ouyang, X. Huang, X. Wang, H. Zhang, Y. Yu, X. Chang, Y. Liu, D.-L. Deng, and L.-M. Duan,
Phys. Rev. Lett. \textbf{127,} 090501 (2021).

\bibitem{Deng2022NC}
Y. Yu, L.-W. Yu, W. Zhang, H. Zhang, X. Ouyang, Y. Liu, D.-L. Deng, and L.-M. Duan,
arXiv:2112.13785 (2021).

\bibitem{Yan2022}
Q. Liang, D. Xie, Z. Dong, H. Li, H. Li, B. Gadway, W. Yi, and 
B. Yan,
arXiv:2201.09478 (2022).

\end{thebibliography}
\end{document}


\title{Supplemental Material for Non-Hermitian Absorption Spectroscopy}
\author{Kai Li$^{1}$}
\author{Yong Xu$^{1,2}$}
\email{yongxuphy@tsinghua.edu.cn}
\affiliation{$^{1}$Center for Quantum Information, IIIS, Tsinghua University, Beijing 100084, People's Republic of China}
\affiliation{$^{2}$Shanghai Qi Zhi Institute, Shanghai 200030, People's Republic of China}

\begin{abstract}
\end{abstract}
\maketitle

\setcounter{equation}{0} \setcounter{figure}{0} \setcounter{table}{0} %
\renewcommand{\theequation}{S\arabic{equation}} \renewcommand{\thefigure}{S%
	\arabic{figure}}

In the Supplemental Material, we will
map the many-body dynamics described by the master equation to the single-particle dynamics governed by the effective non-Hermitian Hamiltonian in Section S-1,
derive the population of the auxiliary levels based on linear response theory in Section S-2,
prove that the population is independent of boundary conditions in the thermodynamic limit in Section S-3,
discuss the influence of non-Hermitian skin effects (NHSEs) on non-Hermitian absorption spectroscopy
based on the non-Hermitian Rice-Mele model in Section S-4,
show the possibility of probing non-Bloch energies for systems that exhibit NHSEs and its problems in Section S-5,
provide more results about the non-Hermitian SSH model in Section S-6,
and finally discuss the application of the absorption spectroscopy in interacting systems in Section S-7.

\section{S-1. Equivalence between the dynamics of correlation functions governed by the master equation and a non-Hermitian Hamiltonian} 
For a dissipative cold atom system, its dynamics is described by the following master equation, 
\begin{equation}
	\begin{aligned}
		\label{MasterEquation}
		\frac{d\rho(t)}{dt}
		&= -\ii [\hat{\mathcal{H}}_h, \rho(t)] + \sum_\mu (2\hat{L}_\mu \rho(t) \hat{L}_\mu^\dagger - \{ \hat{L}_\mu^\dagger \hat{L}_\mu, \rho(t) \} ) \\
		&=  -\ii [\hat{\mathcal{H}} \rho(t) - \rho(t) \hat{\mathcal{H}}^\dagger] + \sum_\mu 2\hat{L}_\mu \rho(t) \hat{L}_\mu^\dagger,
	\end{aligned}
\end{equation}
where $\rho(t)$ is the density matrix, $\hat{L}_\mu$ is the Lindblad operator (we here consider $\hat{L}_\mu = \sum_i D_{\mu i} \hat{c}_i$ with $D$ being an $L \times L$ matrix), and
$\hat{\mathcal{H}}_h$ describes a generic non-interacting Hermitian system with
$L$ degrees of freedom so that it can be written in a quadratic form as $\hat{\mathcal{H}}_h = \sum_{i,j=1}^L [H_h]_{ij} \hat{c}_i^\dagger \hat{c}_j$
($H_h$ is an $L \times L$ Hermitian matrix), and 
$\hat{\mathcal{H}} = \hat{\mathcal{H}}_h - \ii \sum_\mu \hat{L}_\mu^\dagger \hat{L}_\mu = \sum_{ij} H_{ij} \hat{c}_i^\dagger \hat{c}_j $
is the effective non-Hermitian Hamiltonian with $H_{ij} = [H_h]_{ij} - \ii \sum_\mu D_{\mu i}^* D_{\mu j}$
($H$ is an $L \times L$ non-Hermitian matrix). Here, $\hat{c}_j$ ($\hat{c}_j^\dagger$) is either a fermionic or a bosonic annihilation (creation) operator for the $j$th degree of freedom. In our specific case for measurements, $\hat{\mathcal{H}}_h$ consists of the system Hamiltonian and auxiliary levels.
Because we are mainly interested in the dynamics of the occupancy on the auxiliary levels (or two-point correlation functions in a more general case),
we now ask whether we can reduce the dynamics governed by the master equation to the dynamics governed by the non-Hermitian Hamiltonian $H$.

\subsection{A. Case 1}
We first consider an initial state $\rho_0$ with all auxiliary levels occupied,  
i.e., $\rho_0 = \ket{\psi_0^{(M)}}\bra{\psi_0^{(M)}}$ with $|\psi_0^{(M)}\rangle=\prod_j\hat{a}_j^\dagger|0\rangle=\prod_k\hat{a}_k^\dagger|0\rangle$.
We note that while the results are valid for both bosonic and fermionic cases, it is more realistic to consider a finite temperature
ensemble as an initial state for the bosonic case (see the detailed discussion in the next subsection).

\begin{theorem}\label{Them0}
	If the initial state is a product state $\ket{\psi_0^{(M)}} = \prod_{j=1}^{N_1} \hat{c}_j^\dagger\ket{0}$ with $N_1$ fermionic or bosonic particles ($N_1\le L$), then from Eq.~(\ref{MasterEquation}), we can derive the correlation function at time $t$ as
	\begin{equation}
		C_{ij}=\text{Tr}[\rho(t) \hat{c}_i^\dagger \hat{c}_j ]
		= \sum_{x=1}^{N_1} \bra{x} e^{\ii \hat{\mathcal{H}}^\dagger t} \hat{c}_i^\dagger \hat{c}_j e^{-\ii \hat{\mathcal{H}} t} \ket{x},
	\end{equation}
where $C$ is an $L\times L$ matrix.
The result tells us that one can reduce the dynamics described by the master equation to the dynamics of single-particle states $\ket{x} = \hat{c}_x^\dagger\ket{0}$
governed by the single-particle non-Hermitian Hamiltonian $H$. For a hybrid system consisting of auxiliary and system levels as in our case,
we can encode the auxiliary levels in the first $N$ degrees of freedom ($N_1=N$) and system levels in other degrees of freedom.
In this case, 
given a many-particle initial state $|\psi_0^{(M)}\rangle=\prod_j\hat{a}_j^\dagger|0\rangle=\prod_k\hat{a}_k^\dagger|0\rangle$ on auxiliary energy levels 
($\hat{a}_k^\dagger=\frac{1}{\sqrt{N} }\sum_{x=1}^{N}e^{ikx}\hat{a}_x^\dagger$),
	the atom number on the $j$th auxiliary level is given by $C_{jj} (t)
	=\text{Tr}(\rho(t) \hat{a}_j^\dagger \hat{a}_j )=
	\sum_{x=1}^N \bra{0}\hat{a}_x e^{\ii \hat{\mathcal{H}}^\dagger t} \hat{a}_j^\dagger \hat{a}_j e^{-\ii \hat{\mathcal{H}} t} \hat{a}_x^\dagger\ket{0}$.
	Under PBCs, the atom number on the auxiliary levels at momentum $k$ is given by
	$N_{\text{a},k} = \text{Tr}(\rho(t) \hat{a}_k^\dagger \hat{a}_k)= \bra{0} \hat{a}_k e^{\ii \hat{\mathcal{H}}^\dagger t} \hat{a}_k^\dagger \hat{a}_k e^{-\ii \hat{\mathcal{H}} t} \hat{a}_k^\dagger \ket{0}$, allowing us to use a single-particle state
	$|\psi_0\rangle=\hat{a}_k^\dagger|0\rangle$ as an initial state. For OBCs, we still have this result in the thermodynamic limit. 
\end{theorem}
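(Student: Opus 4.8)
\emph{Proof strategy.}

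The plan is to reduce the many-body master-equation dynamics to a closed, finite-dimensional linear ODE for the correlation matrix $C_{ij}(t)=\Tr[\rho(t)\hat{c}_i^\dagger\hat{c}_j]$, and then to check by direct substitution that the claimed formula is its unique solution. First I would differentiate using Eq.~(\ref{MasterEquation}), giving $\dot C_{ij}=\ii\,\Tr\!\big(\rho\,[\hat{\mathcal{H}}_h,\hat{c}_i^\dagger\hat{c}_j]\big)+\Tr\!\big(\rho\,\mathcal{D}^\dagger[\hat{c}_i^\dagger\hat{c}_j]\big)$, with $\mathcal{D}^\dagger[O]=\sum_\mu\big(2\hat{L}_\mu^\dagger O\hat{L}_\mu-\{\hat{L}_\mu^\dagger\hat{L}_\mu,O\}\big)$. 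The Hamiltonian piece is already quadratic by the identity $[\hat{c}_i^\dagger\hat{c}_j,\hat{c}_k^\dagger\hat{c}_l]=\delta_{jk}\hat{c}_i^\dagger\hat{c}_l-\delta_{il}\hat{c}_k^\dagger\hat{c}_j$, which holds for bosons and fermions alike. The one computational point that matters is that $\mathcal{D}^\dagger[\hat{c}_i^\dagger\hat{c}_j]$ closes as well: since every $\hat{L}_\mu=\sum_i D_{\mu i}\hat{c}_i$ is linear in the annihilation operators, the normal-ordered quartic pieces produced by the three terms of $\mathcal{D}^\dagger$ cancel identically (with the same statistics-independent bookkeeping via $\eta=\pm1$), leaving only quadratic terms proportional to $\Gamma_{ij}=\sum_\mu D_{\mu i}^*D_{\mu j}$. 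Combining both contributions and using $H_{ij}=[H_h]_{ij}-\ii\Gamma_{ij}$, I expect to arrive at the closed matrix equation $\dot C=\ii\,\overline{H}\,C-\ii\,C\,H^{T}$, with initial condition $C(0)=P_{N_1}:=\mathrm{diag}(1,\dots,1,0,\dots,0)$ ($N_1$ ones), read off from $\ket{\psi_0^{(M)}}$.

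Next I would evaluate the right-hand side of the asserted identity and match it to this ODE. On the one-particle sector $\hat{\mathcal{H}}$ acts as the matrix $H$ and annihilates the vacuum, so $e^{-\ii\hat{\mathcal{H}}t}\hat{c}_x^\dagger\ket{0}=\sum_y(e^{-\ii Ht})_{yx}\hat{c}_y^\dagger\ket{0}$; together with the single-particle contraction $\bra{0}\hat{c}_y\hat{c}_i^\dagger\hat{c}_j\hat{c}_z^\dagger\ket{0}=\delta_{yi}\delta_{jz}$ (again valid for both statistics), this gives $\sum_{x=1}^{N_1}\bra{x}e^{\ii\hat{\mathcal{H}}^\dagger t}\hat{c}_i^\dagger\hat{c}_j e^{-\ii\hat{\mathcal{H}}t}\ket{x}=\big(\overline{e^{-\ii Ht}}\,P_{N_1}\,(e^{-\ii Ht})^{T}\big)_{ij}$. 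A one-line differentiation shows this expression solves $\dot C=\ii\,\overline{H}\,C-\ii\,C\,H^{T}$ and equals $P_{N_1}$ at $t=0$, so by uniqueness of solutions of a linear finite-dimensional ODE it coincides with $C(t)$; this is the first displayed identity. Specializing to $N_1=N$ with the auxiliary levels occupying the first $N$ slots and renaming $\hat{c}_x^\dagger\ket{0}=\hat{a}_x^\dagger\ket{0}$ then gives $C_{jj}(t)=\sum_{x=1}^N\bra{0}\hat{a}_x e^{\ii\hat{\mathcal{H}}^\dagger t}\hat{a}_j^\dagger\hat{a}_j e^{-\ii\hat{\mathcal{H}}t}\hat{a}_x^\dagger\ket{0}$.

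For the momentum-resolved statement under PBCs I would invoke translation invariance: $\hat{\mathcal{H}}$ commutes with the total quasimomentum (the Hermitian part and the dissipative coupling are both translation invariant), so a single-particle state launched in the auxiliary momentum mode $k'$ remains a superposition of momentum-$k'$ single-particle states for all $t$, i.e.\ $e^{-\ii\hat{\mathcal{H}}t}\hat{a}_{k'}^\dagger\ket{0}$ lives in the momentum-$k'$ sector. Since $\hat{a}_k^\dagger\hat{a}_k$ and the bra $\bra{0}\hat{a}_{k'}e^{\ii\hat{\mathcal{H}}^\dagger t}$ have nonzero overlap only when $k'=k$, every term with $x\neq k$ in the sum $\sum_{x=1}^N$ drops out and the $N$-particle expression collapses to $N_{\mathrm{a},k}=\bra{0}\hat{a}_k e^{\ii\hat{\mathcal{H}}^\dagger t}\hat{a}_k^\dagger\hat{a}_k e^{-\ii\hat{\mathcal{H}}t}\hat{a}_k^\dagger\ket{0}$, so the single-particle initial state $\ket{\psi_0}=\hat{a}_k^\dagger\ket{0}$ suffices.

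I expect the last sentence of the theorem — that this single-particle reduction persists under OBCs in the thermodynamic limit — to be the genuine obstacle, since it does not follow from the algebra above: under OBCs quasimomentum is no longer conserved, and the bulk of the spectrum can be reorganized by the non-Hermitian skin effect. Rather than attempt it here, I would reduce it to the boundary-condition independence of the auxiliary-level populations established in Section~S-3; the heuristic is that $N_{\mathrm{a},k}$ is built from local, bulk processes of the probe coupling acting over the measurement time, so boundary corrections are suppressed by $1/L$, but the precise statement and its justification are the content of Section~S-3.
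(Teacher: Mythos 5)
Your proposal is correct and follows essentially the same route as the paper: derive the closed equation $\dot C = XC + CX^\dagger$ (with $X=\ii H^*$) from the master equation via cancellation of the quartic terms against the quantum-jump term, identify the single-particle expression as its solution with initial condition $\mathrm{diag}(1,\dots,1,0,\dots,0)$, use translation invariance to collapse the momentum sum under PBCs, and defer the OBC case to the boundary-insensitivity result of Section S-3. The only cosmetic difference is that you verify the candidate solution by substitution and ODE uniqueness, whereas the paper solves the vectorized equation explicitly via Kronecker products.
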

	
\begin{proof}
We start by deriving the time evolution of the correlation function $C_{ij}(t)$ as 
\begin{equation}
	\label{derivation_CMatrixDynamics_first}
\frac{dC_{ij} (t)}{dt}=\text{Tr}(\frac{d\rho(t)}{dt} \hat{c}_i^\dagger \hat{c}_j)
=-i\text{Tr}[(\hat{\mathcal{H}} \rho(t)-\rho(t) \hat{\mathcal{H}}^\dagger ) \hat{c}_i^\dagger \hat{c}_j ]
+2\sum_\mu \text{Tr}[\hat{L}_\mu \rho(t) \hat{L}_\mu^\dagger \hat{c}_i^\dagger \hat{c}_j].
\end{equation}
The first term can be reduced to
\begin{equation}
-i\text{Tr}[(\hat{\mathcal{H}} \rho(t)-\rho(t) \hat{\mathcal{H}}^\dagger ) \hat{c}_i^\dagger \hat{c}_j ]
=[XC]_{ij}+[CX^\dagger]_{ij} \pm 2\sum_{j_1 j_2} M_{j_1 j_2} \text{Tr}[\rho(t) \hat{c}_i^\dagger \hat{c}_{j_1}^\dagger \hat{c}_j \hat{c}_{j_2}],
\end{equation}
where $M_{j_1 j_2}=\sum_\mu D_{\mu j_1}^* D_{\mu j_2}$, $X_{ij} = \ii [H_h]_{ji} - M_{ji}=i[H^*]_{ij}$, and we take $+$ and $-$ for fermions and bosons, respectively. Besides the first two terms, there still exist four-point correlation terms. If we neglect the quantum jump term, then the four-point correlation terms arise if the initial state contains more than one particles (if there is only one particle, this term vanishes). In fact, the four-point correlation terms can be exactly canceled by the quantum jump term, which can be simplified to 
\begin{equation}
2\sum_\mu \text{Tr}[\hat{L}_\mu \rho(t) \hat{L}_\mu^\dagger \hat{c}_i^\dagger \hat{c}_j]
=\mp 2\sum_{j_1 j_2} M_{j_1 j_2} \text{Tr}[\rho(t) \hat{c}_i^\dagger \hat{c}_{j_1}^\dagger \hat{c}_j \hat{c}_{j_2}],
\end{equation}
where we take $-$ and $+$ for fermions and bosons, respectively. 
Thus for the quadratic open quantum system described by \Eq{MasterEquation}, the dynamics of the correlation function $C_{ij}(t)$ is given by \cite{Prosen2008NJP,Wang2019PRL}
\begin{equation}
\label{CMatrixDynamics}
\frac{dC(t)}{dt} = XC(t) + C (t) X^\dagger.
\end{equation}

Now we flatten the matrix $C(t)$ to a column vector $V_C (t)$, transforming \Eq{CMatrixDynamics} to
\begin{equation}
\label{Eq42}
\frac{dV_C (t)}{dt} = (I \otimes X + X^* \otimes I) V_C (t).
\end{equation}
From \Eq{Eq42}, we directly obtain
\begin{equation}
V_C (t) = e^{(I \otimes X + X^* \otimes I) t} \, V_C (0)
= e^{(I \otimes X) t} e^{ (X^* \otimes I) t} \, V_C (0)
= (I \otimes e^{Xt})(e^{X^* t} \otimes I) V_C (0)
= (e^{X^* t} \otimes e^{Xt}) V_C (0).
\end{equation}
Since $C(0) = \text{diag} \{1,1,...,1,0,...,0 \}$ with $N_1$ non-zero entries,
the vectorized initial state is given by $[V_C (0)]^T = (\mathbf{e}_1^T , ... , \mathbf{e}_{N_1}^T , \mathbf{0}^T , ..., \mathbf{0}^T )$, where $\mathbf{e}_i$ is an $L$-dimensional column vector satisfying $[\mathbf{e}_i]_j = \delta_{ij}$.
Let $F=e^{Xt}$, we have
\begin{equation}
F^* \otimes F =
\begin{pmatrix}
F_{11}^* F & \cdots & F_{1L}^* F \\
\vdots     & \ddots & \vdots \\
F_{L1}^* F & \cdots & F_{LL}^* F
\end{pmatrix}
\end{equation}
and
\begin{equation} \label{Vcevolution}
V_C (t) = (F^* \otimes F) V_C (0) =
\begin{pmatrix}
\sum_{x=1}^{N_1} F_{1x}^* F \mathbf{e}_x \\
\vdots \\
\sum_{x=1}^{N_1} F_{Lx}^* F \mathbf{e}_x
\end{pmatrix}.
\end{equation}
Then
\begin{equation}
C (t) = \big( \sum_{x=1}^{N_1} F_{1x}^* F \mathbf{e}_x , ...  , \sum_{x=1}^{N_1} F_{Lx}^* F \mathbf{e}_x \big)
\end{equation}
and
\begin{equation}
C_{ij} (t)
= \sum_{x=1}^{N_1} F_{jx}^* [F \mathbf{e}_x]_i
= \sum_{x=1}^{N_1} F_{jx}^* \sum_k F_{ik} [\mathbf{e}_x]_k
= \sum_{x=1}^{N_1} F_{jx}^* \sum_k F_{ik} \delta_{xk}
= \sum_{x=1}^{N_1} F_{jx}^* F_{ix}.
\end{equation}
Finally, since
\begin{equation}
\begin{aligned}
\sum_{x=1}^{N_1} \bra{x} e^{\ii \hat{\mathcal{H}}^\dagger t} \hat{c}_i^\dagger \hat{c}_j e^{-\ii \hat{\mathcal{H}} t} \ket{x}
& = \sum_{x=1}^{N_1} \bra{x} e^{\ii \hat{\mathcal{H}}^\dagger t} \ket{i} \bra{j} e^{-\ii \hat{\mathcal{H}} t} \ket{x}
= \sum_{x=1}^{N_1} [e^{\ii H^\dagger t}]_{xi} [e^{-\ii H t}]_{jx} \\
& = \sum_{x=1}^{N_1} [e^{X^T t}]_{xi} [e^{X^* t}]_{jx}
= \sum_{x=1}^{N_1} F_{ix} F_{jx}^*,
\end{aligned}
\end{equation}
we conclude that 
\begin{equation}
C_{ij} (t) = \sum_{x=1}^{N_1} \bra{x} e^{\ii \hat{\mathcal{H}}^\dagger t} \hat{c}_i^\dagger \hat{c}_j e^{-\ii \hat{\mathcal{H}} t} \ket{x}.
\end{equation}

We remark that the conclusion holds for a master equation including a quantum jump term. If we neglect the quantum jump term,
the conclusion usually does not hold for a many-particle system since four point-correlation functions
(which are canceled exactly by the contribution from the quantum jump term) arise
in Eq.~(\ref{CMatrixDynamics}).
We can also show this by projecting the master equation Eq.~(\ref{MasterEquation}) to the $N_1$-particle subspace using the projection operator $\hat{P}_{N_1}$.
Since the initial state $\ket{\psi_0^{(M)}}$ is an $N_1$-particle many-body state and the Lindblad operator contains only annihilation operators, we have 
$\hat{P}_{N_1} \hat{L}_\mu \rho(t) \hat{L}_\mu^\dagger \hat{P}_{N_1} = 0$ and
\begin{equation}
\frac{d\rho^{(N_1)}(t)}{dt}
=  -\ii (\hat{\mathcal{H}} \rho^{(N_1)}(t) - \rho^{(N_1)}(t) \hat{\mathcal{H}}^\dagger)
\end{equation}
with $\rho^{(n)}(t) = \hat{P}_n \rho(t) \hat{P}_n$.
Then $\rho^{(N_1)}(t) = e^{-\ii \hat{\mathcal{H}} t} \rho_0 e^{\ii \hat{\mathcal{H}}^\dagger t} = \ket{\psi(t)} \bra{\psi(t)}$ 
with $\ket{\psi(t)}\equiv e^{-\ii \hat{\mathcal{H}} t} \ket{\psi_0^{(M)}}$,
and $\text{Tr}(\rho^{(N_1)}(t) \hat{c}_i^\dagger \hat{c}_j) = \bra{\psi (t)} \hat{c}_i^\dagger \hat{c}_j \ket{\psi(t)}$.
However, $C_{ij}(t)=\text{Tr}(\rho(t) \hat{c}_i^\dagger \hat{c}_j)
=\sum_{n,m=0}^{N_1} \text{Tr}(\hat{P}_{n}\rho(t) \hat{P}_{m}\hat{c}_i^\dagger \hat{c}_j)
=\sum_{n=0}^{N_1} \text{Tr}(\rho^{(n)}(t) \hat{c}_i^\dagger \hat{c}_j)
 \neq \text{Tr}(\rho^{({N_1})}(t) \hat{c}_i^\dagger \hat{c}_j)$ and thus $C_{ij} (t) \neq \bra{\psi (t)} \hat{c}_i^\dagger \hat{c}_j \ket{\psi(t)}$.
When we consider the single-particle case, that is, ${N_1}=1$, we have $C_{ij}(t) = \text{Tr} (\rho^{(1)}(t) \hat{c}_i^\dagger  \hat{c}_j) = \bra{\psi (t)} \hat{c}_i^\dagger \hat{c}_j \ket{\psi(t)}$ because $\text{Tr} (\rho^{(0)}(t) \hat{c}_i^\dagger  \hat{c}_j) = 0$.

We are now in a position to apply the results to a hybrid system consisting of auxiliary and system levels
by treating the first ${N_1}$ degrees of freedom as auxiliary levels ($N_1=N$), i.e., $\hat{a}_j^\dagger = \hat{c}_j^\dagger$.
Given the fact that all the auxiliary levels are initially occupied, the measured local occupancy is given by
$N_{\text{a},j} = \text{Tr}(\rho(t) \hat{a}_j^\dagger \hat{a}_j) = \sum_{x=1}^N \bra{0} \hat{a}_x e^{\ii \hat{\mathcal{H}}^\dagger t} \hat{a}_j^\dagger \hat{a}_j e^{-\ii \hat{\mathcal{H}} t} \hat{a}_x^\dagger \ket{0}$.
We also show that by performing $k$-resolved measurements, we can obtain the occupancy of momentum states which is given by
\begin{equation}
\begin{aligned}
N_{\text{a},k} &= \text{Tr}(\rho(t) \hat{a}_k^\dagger \hat{a}_k)
= \frac{1}{N} \sum_{j j'} e^{\ii k (j-j')} \text{Tr} (\rho(t) \hat{a}_j^\dagger \hat{a}_{j'})
\\&= \frac{1}{N} \sum_{j j'} e^{\ii k (j-j')} \sum_{x=1}^{N} \bra{0} \hat{a}_x e^{\ii \hat{\mathcal{H}}^\dagger t} \hat{a}_j^\dagger \hat{a}_{j'} e^{-\ii \hat{\mathcal{H}} t} \hat{a}_x^\dagger \ket{0}
\\&= \sum_{x=1}^{N} \bra{0} \hat{a}_x e^{\ii \hat{\mathcal{H}}^\dagger t} \hat{a}_k^\dagger \hat{a}_k e^{-\ii \hat{\mathcal{H}} t} \hat{a}_x^\dagger \ket{0}
\\&= \sum_{x=1}^{N} \frac{1}{N} \sum_{q q'} e^{\ii (q-q') x} \bra{0} \hat{a}_q e^{\ii \hat{\mathcal{H}}^\dagger t} \hat{a}_k^\dagger \hat{a}_k e^{-\ii \hat{\mathcal{H}} t} \hat{a}_{q'}^\dagger \ket{0}
\\&= \sum_{q} \bra{0} \hat{a}_q e^{\ii \hat{\mathcal{H}}^\dagger t} \hat{a}_k^\dagger \hat{a}_k e^{-\ii \hat{\mathcal{H}} t} \hat{a}_{q}^\dagger \ket{0},
\end{aligned}
\end{equation}
with $\hat{a}_k^\dagger = \frac{1}{\sqrt{N}} \sum_j e^{\ii k j} \hat{a}_j^\dagger $ being the creation operator of a particle on the auxiliary levels with momentum $k$ and $k=2\pi n_k/N$ ($n_k=0,1,...,N-1$).
If the effective Hamiltonian is under PBCs (denoted by $\hat{\mathcal{H}}_\text{p}$) and preserves the translation symmetry, then we have
\begin{equation}
N_{\text{a},k} 
= \bra{0} \hat{a}_k e^{\ii \hat{\mathcal{H}}_\text{p}^\dagger t} \hat{a}_k^\dagger \hat{a}_k e^{-\ii \hat{\mathcal{H}}_\text{p} t} \hat{a}_k^\dagger \ket{0} 
= \bra{0} \hat{a}_k e^{\ii \hat{\mathcal{H}}_\text{p}^\dagger t} \sum_q \hat{a}_q^\dagger \hat{a}_q e^{-\ii \hat{\mathcal{H}}_\text{p} t} \hat{a}_k^\dagger \ket{0} 
= \bra{0} \hat{a}_k e^{\ii \hat{\mathcal{H}}_\text{p}^\dagger t} \hat{N}_\text{a} e^{-\ii \hat{\mathcal{H}}_\text{p} t} \hat{a}_k^\dagger \ket{0},
\end{equation}
with $\hat{N}_\mathrm{a}=\sum_q \hat{a}_q^\dagger \hat{a}_q = \sum_{j=1}^{N} \hat{a}_j^\dagger \hat{a}_j$ being the total particle number operator on the auxiliary levels.
For the effective Hamiltonian that is under OBCs (denoted by $\hat{\mathcal{H}}_\text{o}$), we can also prove that 
$N_{a,k} = \bra{0} \hat{a}_k e^{\ii \hat{\mathcal{H}}_\text{p}^\dagger t} \hat{N}_\text{a} e^{-\ii \hat{\mathcal{H}}_\text{p} t} \hat{a}_k^\dagger \ket{0}$
in the thermodynamic limit.
Let $\ket{\psi_\text{a}^k}=\hat{a}_k^\dagger \ket{0}$, we have
\begin{equation}
N_{\text{a},k} 
= \sum_{q} \bra{\psi_\text{a}^q} e^{\ii \hat{\mathcal{H}}_\text{o}^\dagger t} \ket{\psi_\text{a}^k} \bra{\psi_\text{a}^k} e^{-\ii \hat{\mathcal{H}}_\text{o} t} \ket{\psi_\text{a}^q} 
=  \bra{\psi_\text{a}^k} e^{-\ii \hat{\mathcal{H}}_\text{o} t} \sum_{q} \ket{\psi_\text{a}^q} \bra{\psi_\text{a}^q} e^{\ii \hat{\mathcal{H}}_\text{o}^\dagger t} \ket{\psi_\text{a}^k} 
= \bra{\psi_\text{a}^k} e^{-\ii \hat{\mathcal{H}}_\text{o} t} \hat{N}_\text{a} e^{\ii \hat{\mathcal{H}}_\text{o}^\dagger t} \ket{\psi_\text{a}^k}. 
\end{equation}
Then based on Theorem~\ref{Them3} which states that $\bra{\psi_\text{a}^k} e^{\ii \hat{\mathcal{H}}_\text{o}^\dagger t} \hat{N}_\text{a} e^{-\ii \hat{\mathcal{H}}_\text{o} t} \ket{\psi_\text{a}^k} 
= \bra{\psi_\text{a}^k} e^{\ii \hat{\mathcal{H}}_\text{p}^\dagger t} \hat{N}_\text{a} e^{-\ii \hat{\mathcal{H}}_\text{p} t} \ket{\psi_\text{a}^k}$ in the thermodynamic limit
when the bulk preserves the translation symmetry and the hopping range is finite, we have
\begin{equation}
\begin{aligned}
N_{\text{a},k} 
&= \bra{\psi_\text{a}^k} e^{-\ii \hat{\mathcal{H}}_\text{p} t} \hat{N}_\text{a} e^{\ii \hat{\mathcal{H}}_\text{p}^\dagger t} \ket{\psi_\text{a}^k}
= \bra{\psi_\text{a}^k} e^{-\ii \hat{\mathcal{H}}_\text{p} t} \ket{\psi_\text{a}^k} \bra{\psi_\text{a}^k} e^{\ii \hat{\mathcal{H}}_\text{p}^\dagger t} \ket{\psi_\text{a}^k}
= \bra{\psi_\text{a}^k} e^{\ii \hat{\mathcal{H}}_\text{p}^\dagger t} \ket{\psi_\text{a}^k} \bra{\psi_\text{a}^k} e^{-\ii \hat{\mathcal{H}}_\text{p} t} \ket{\psi_\text{a}^k} 
\\&= \bra{\psi_\text{a}^k} e^{\ii \hat{\mathcal{H}}_\text{p}^\dagger t} \hat{N}_\text{a} e^{-\ii \hat{\mathcal{H}}_\text{p} t} \ket{\psi_\text{a}^k}.
\end{aligned}
\end{equation}
These properties allow us to use a single-particle state $\ket{\psi_0}=\hat{a}_k^\dagger \ket{0}$ evolved by the effective Hamiltonian $\hat{\mathcal{H}}_\text{p}$ and the observable $\hat{N}_\mathrm{a}$ to derive the protocol; the derivation is given in the next section.
We have also numerically confirmed that the master equation and the non-Hermitian approach give the same results.
\end{proof}
	
\subsection{B. Case 2}
For bosons, more than one atoms are allowed to occupy a single state. We thus may consider an initial state with $N_{0,k}$ atoms on the state at momentum $k$,
i.e., $\ket{\psi_0^{(M)}} = \frac{1}{\sqrt{N_{0,k}!}}(\hat{a}_{k}^\dagger)^{N_{0,k}}\ket{0}$. Then
\begin{equation}
C_{j j^\prime}^{(k)}(t=0)=\bra{\psi_0^{(M)}} \hat{c}_j^\dagger \hat{c}_{j^\prime} \ket{\psi_0^{(M)}}=n_{0,k} e^{\ii k(j^\prime-j)},
\end{equation}
where $n_{0,k}=N_{0,k}/N$ for $1\le j,j^\prime \le N$. $C_{j j^\prime}(t=0)$ vanishes when $j > N$ or $j^\prime > N$. In the following, we will prove that
\begin{equation}
C_{j j^\prime}^{(k)}(t)=N_{0,k} \bra{0}\hat{a}_k e^{\ii \hat{\mathcal{H}}^\dagger t} \hat{a}_j^\dagger \hat{a}_{j^\prime} e^{-\ii \hat{\mathcal{H}} t} \hat{a}_k^\dagger\ket{0}.
\end{equation}
\begin{proof}
With the initial correlation function, we write it in the form of a column vector, that is, 
\begin{equation}
V_C^{(k)}(0)=n_{0,k} (\sum_{j_1=1}^N e^{ikj_1} \bm{e}_{j_1})
\otimes (\sum_{j_2=1}^N e^{-ikj_2} \bm{e}_{j_2} ).
\end{equation}
Based on Eq.~(\ref{Vcevolution}), we have
\begin{equation}
V_C^{(k)}(t)=(F^* \otimes F) V_C^{(k)} (0)=n_{0,k}(\sum_{j_1=1}^N e^{ikj_1} F^* \bm{e}_{j_1} )
\otimes (\sum_{j_2=1}^N e^{-ikj_2} F\bm{e}_{j_2} ).
\end{equation}
It follows that
\begin{equation}
C_{j j^\prime}^{(k)}(t)=n_{0,k} 
\sum_{j_1,j_2=1}^N e^{ik(j_1-j_2)} F^*_{j^\prime j_1} F_{j j_2}.
\end{equation}
One can also prove that
\begin{equation}
N_{0,k} \bra{0}\hat{a}_k e^{\ii \hat{\mathcal{H}}^\dagger t} \hat{a}_j^\dagger \hat{a}_{j^\prime} e^{-\ii \hat{\mathcal{H}} t} \hat{a}_k^\dagger\ket{0}
=n_{0,k} 
\sum_{j_1,j_2=1}^N e^{ik(j_1-j_2)} F^*_{j^\prime j_1} F_{j j_2}=C_{j j^\prime}^{(k)}(t).
\end{equation}
\end{proof}

The result indicates that for multiple bosons on a single state, we can still reduce the problem to the dynamics of a single particle with the correlation function multiplied by the initial particle number. 

We are now in a position to consider a finite temperature ensemble for the initial state [the temperature $T=1/(k_B \beta)$], i.e., 
$\hat{\rho}_0 = e^{-\beta(\hat{\mathcal{H}}_\text{a} -\mu \hat{N}_\text{a})}/Z$ with $\hat{\mathcal{H}}_\text{a}$ being the Hamiltonian of the auxiliary levels, $\mu$ being the chemical potential, $\hat{N}_\text{a}$ being the total 
particle number operator on the auxiliary levels,
and 
$Z=\text{Tr}[e^{-\beta(\hat{\mathcal{H}}_\text{a} -\mu \hat{N}_\text{a})}]$ being the partition function. With the initial state, the correlation function at $t=0$ is given by 
\begin{equation}
C_{j j^\prime }(t=0) = \text{Tr}(\hat{\rho}_0 \hat{a}_j^\dagger \hat{a}_{j'}) 
= \sum_n \langle j' | n \rangle  \langle n | j \rangle f_{\text{BE}}(E_n),
\end{equation}
where $\ket{j} = \hat{a}_j^\dagger \ket{0}$, $\ket{n}$ is the $n$th single-particle eigenstate of $\hat{\mathcal{H}}_\text{a}$ with eigenenergy $E_n$, and 
$f_{\text{BE}}(E_n)=\frac{1}{e^{\beta(E_n - \mu)} - 1}$ is the Bose-Einstein distribution with 
the chemical potential $\mu$ determined by the total particle number $N_0$, that is, 
$\sum_{n} f_{\text{BE}}(E_n) = N_0$. Given the fact that the physics for open boundaries is equivalent to the physics for periodic boundaries in the thermodynamic limit, 
we consider the periodic boundaries so that we can replace the $n$ index by the momentum $k$, that is,
\begin{equation}
	C_{j j^\prime }(t=0) = \text{Tr}(\hat{\rho}_0 \hat{a}_j^\dagger \hat{a}_{j'}) 
	= \frac{1}{N}\sum_k e^{\ii k (j^\prime -j)} f_{\text{BE}}(E_k).
\end{equation}
We write $C_{j j^\prime }(t=0)=\sum_{k} C_{j j^\prime}^{(k)}(t=0) [f_{\text{BE}}(E_k)/N_{0,k}]$, then
$V_C(0)=\sum_k V_C^{(k)}(0) [f_{\text{BE}}(E_k)/N_{0,k}]$, so that
$V_C(t)=(F^* \otimes F) V_C (0)=\sum_k (F^* \otimes F) V_C^{(k)}(0) [f_{\text{BE}}(E_k)/N_{0,k}]$.
We thus obtain
\begin{equation}\label{CFtemp}
C_{j j^\prime }(t)=\sum_k f_{\text{BE}}(E_k) \bra{0}\hat{a}_k e^{\ii \hat{\mathcal{H}}^\dagger t} \hat{a}_j^\dagger \hat{a}_{j^\prime} e^{-\ii \hat{\mathcal{H}} t} \hat{a}_k^\dagger\ket{0},
\end{equation}
which reads in momentum space
\begin{equation}
N_{\text{a},k} = \text{Tr}(\rho(t) \hat{a}_k^\dagger \hat{a}_k)=
f_{\text{BE}}(E_k) \bra{0}\hat{a}_k e^{\ii \hat{\mathcal{H}}^\dagger t} \hat{a}_k^\dagger \hat{a}_{k} e^{-\ii \hat{\mathcal{H}} t} \hat{a}_k^\dagger\ket{0}.
\end{equation}
Thus, we can use the finite temperature ensemble for bosons to perform the measurement of the energy spectrum (see Section S-4 B for numerical results).

\subsection{C. Experimental realizations}
Finally, we would like to show that in our protocol, a non-Hermitian Hamiltonian $\hat{\mathcal{H}}$ can be naturally realized by involving atomic levels that experience spontaneous emissions. 
For example, suppose that we want to realize the non-Hermitian Hamiltonian given by
\begin{equation}
\label{EffHam_example}
\hat{\mathcal{H}}_t=\hat{\mathcal{H}}_h-\ii\gamma \sum_j \hat{c}_{j2}^\dagger \hat{c}_{j2},
\end{equation}
where $\hat{\mathcal{H}}_h$ is the Hermitian part of the Hamiltonian involving two hyperfine states and the coupling between one hyperfine level and an auxiliary level used for measurements. To realize the Hamiltonian, we consider three atomic levels at site $j$ described by the creation operators 
$\hat{c}_{j1}^\dagger$, $\hat{c}_{j2}^\dagger$ and $\hat{a}_{j}^\dagger$. With lasers and microwaves, one can realize the Hermitian Hamiltonian $\hat{\mathcal{H}}_h$. For $^{87}$Rb atoms, we may use $|\uparrow\rangle=|F=1,m_F=-1\rangle$ and $|\downarrow\rangle=|F=1,m_F=0\rangle$ as system levels and $|a\rangle=|F=2,m_F=-2\rangle$ 
as the auxiliary level in the ground state manifold $^{2}$S$_{1/2}$. The auxiliary level can be coupled to one of the system levels via microwaves. 
To implement the non-Hermitian part, we consider applying a resonant laser to couple the second hyperfine state to one of the 
P$_{1/2}$ or P$_{3/2}$ energy levels (described by $\hat{p}_j^\dagger$), which spontaneously decays. 
For fermions such as $^{173}$Yb, one may choose $|\uparrow\rangle=|F=5/2,m_F=3/2\rangle$ and $|\downarrow\rangle=|F=5/2,m_F=1/2\rangle$ 
for the system levels and $|a\rangle=|F=5/2,m_F=5/2\rangle$ for 
the auxiliary level in the ground state manifold $^{1}$S$_{0}$. The loss term can be realized by driving atoms on the $|\downarrow\rangle$ level to one of the 
$^3$P$_1$  levels~\cite{JoarXiv2021}.
The entire system is thus described by the master equation with
\begin{equation}
\hat{\mathcal{H}}=\hat{\mathcal{H}}_h+\frac{\Omega_p}{2}(\hat{c}_{j2}^\dagger \hat{p}_j+\hat{p}_j^\dagger \hat{c}_{j2})-i\Gamma \sum_j \hat{p}_j^\dagger \hat{p}_j,
\end{equation}		
and $\hat{L}_j=\sqrt{\Gamma}\hat{p}_j$ where $\Gamma$ denotes the loss rate of the P level.
Based on our proof, the particle number on the auxiliary levels (a special correlation function) 
$N_{\text{a}}=\langle \psi (t)|\hat{N}_{\text{a}}|\psi (t)\rangle$ with $|\psi (t)\rangle=e^{-i\hat{\mathcal{H}}t}\hat{a}_k^\dagger|0\rangle$,
indicating that the dynamics of the particle number is completely determined by $\hat{\mathcal{H}}$.
We can further adiabatically eliminate the P level, leading to an effective target Hamiltonian~(\ref{EffHam_example}) with $\gamma =\Omega_p^2/(4\Gamma$) from $\hat{\mathcal{H}}$,
given the fact that $\Gamma$ (in the order of MHz, e.g., $\Gamma \approx 5.75\ \text{MHz}$ for $5\,^{2}P_{1/2}$ levels of $^{87}$Rb atoms) is usually much larger than the energy scale (in the order of kHz) of the system.
We have also checked the validity of the adiabatic elimination by numerical simulations.

\section{S-2. Derivation of non-Hermitian absorption spectroscopy protocol based on linear response theory}
In this section, we will prove the following theorem using linear response theory~\cite{TormaBook}.
Note that we here consider a time-dependent Hamiltonian without first applying rotating wave approximations
(in the main text, we first apply rotating wave approximations so that the Hamiltonian is time independent).
While the methods are slightly different, the results are the same.

\begin{theorem}\label{Them1}
Consider a one-dimensional (1D) translation-invariant non-Hermitian system consisting of $N$ unit cells described by the Hamiltonian,
\begin{equation}
\label{SysHam}
\hat{\mathcal{H}}_\text{s} = \sum_{i\alpha,j\beta} [H_\text{s}]_{i\alpha,j\beta} \hat{c}_{i\alpha}^{\dagger} \hat{c}_{j\beta},
\end{equation}
where $\hat{c}_{i\alpha}^\dagger$ ($\hat{c}_{i\alpha}$) is the fermionic or bosonic creation (annihilation) operator acting on the $\alpha$th degree of freedom of the $i$th unit cell (each unit cell contains $p$ degrees of freedom).
The non-Hermitian Hamiltonian $\hat{\mathcal{H}}_\text{s}$ is purely dissipative, i.e. $\text{Im}(\lambda) < 0$ for any eigenvalue $\lambda$ of $H_\text{s}$.
Suppose that the first degree of freedom of each unit cell
is coupled to an auxiliary energy level by a microwave field such that the
full Hamiltonian reads
\begin{equation}
\label{full_Hamiltonian}
\hat{\mathcal{H}}(t)=\hat{\mathcal{H}}_\text{s}+\sum_{ij}[T_\text{a}]_{ij}\hat{a}_{i}^{\dagger}\hat{a}_{j}+\omega_\text{a}\sum_{i}\hat{a}_{i}^{\dagger}\hat{a}_{i}+\Omega\cos(\omega t)\sum_{i}(\hat{c}_{i1}^{\dagger}\hat{a}_{i}+\hat{a}_{i}^{\dagger}\hat{c}_{i1}),
\end{equation}
where $\hat{a}_{i}^{\dagger}$ ($\hat{a}_{i}$) creates (annihilates) a fermion or a boson on the $i$th auxiliary level,
$T_a$ describes the hopping between auxiliary levels, $\omega_\text{a}$ is the energy difference
between two energy levels described by $\hat{c}_{i1}$ and $\hat{a}_i$ for an atom, and $\omega$ is
the frequency of the microwave field. The final term depicts the interaction between system and auxiliary
levels with $\Omega$ being the Rabi frequency of the microwave field.
If we initially prepare $N_a(0)$ particles on auxiliary levels at momentum $k$
with $k=2\pi n_k/N$ ($n_k=0,1,...,N-1$), then under periodic boundary conditions, the population of the auxiliary levels is given by
\begin{equation}
\label{thrm1}
N_\text{a}(t)
= N_\text{a}(0)\exp (- \frac{\Omega^{2}t}{2} \sum_m \frac{a_{km}^{(1)}  \gamma_{km} - b_{km}^{(1)} \Delta_{km} }{\Delta_{km}^2 + \gamma_{km}^2})
\end{equation}
when $t \gg 1/\gamma_{km}$ and $\Omega$ is sufficiently small compared with the decay rates of system states.
Here, $\Delta_{km} = E_k - \varepsilon_{km}$ where $E_k=-\delta+\varepsilon_{k}^{(a)}$ with $\delta=\omega-\omega_a$ being
the detuning and $\varepsilon_{k}^{(a)}$ being the energy spectrum of auxiliary levels (without including $\omega_a$),
and $\varepsilon_{km}$ and $-\gamma_{km}$ (labeled by the momentum $k$ and the band index $m$) are the real and imaginary parts
of eigenenergies of ${\mathcal{H}}_\text{s}$, respectively.
The parameters $a_{km}^{(1)}=\mathrm{Re} [c_{km}^{(1)}]$ and $b_{km}^{(1)}=\mathrm{Im} [c_{km}^{(1)}]$ where
$c_{km}^{(\alpha)} = \langle \psi_\text{s}^{k\alpha} | u_\text{R}^{km} \rangle \langle u_\text{L}^{km} | \psi_\text{s}^{k\alpha} \rangle$
with $\ket{\psi_\text{s}^{k \alpha}} = \frac{1}{\sqrt{N}} \sum_j e^{\ii k j} \hat{c}_{j\alpha}^\dagger \ket{0}$ and
$\ket{u_\text{R}^{km}}$ being the right eigenstate of the system Hamiltonian, i.e., $\hat{\mathcal{H}}_\text{s} \ket{u_\text{R}^{km}} = (\varepsilon_{km}-\ii \gamma_{km}) \ket{u_\text{R}^{km}}$, and $\bra{u_\text{L}^{km}}$ being the corresponding left one.
\end{theorem}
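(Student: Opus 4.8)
The plan is to use the reduction established in Section~S-1: the master-equation dynamics of the auxiliary occupation collapses onto single-particle dynamics generated by the effective non-Hermitian Hamiltonian, which here is the time-dependent $\hat{\mathcal H}(t)$ of Eq.~(\ref{full_Hamiltonian}) (the argument of Section~S-1 carries over with $X\to X(t)$ and the time-ordered propagator). Under periodic boundary conditions and translation invariance every term of $\hat{\mathcal H}(t)$ is block-diagonal in the crystal momentum $k$, and the microwave term couples the auxiliary mode $\ket{\psi_\text{a}^k}=\hat a_k^\dagger\ket{0}$ only to the orbital-$1$ system mode $\ket{\psi_\text{s}^{k1}}$ at the same $k$; hence a state started from $\ket{\psi_\text{a}^k}$ remains in a $(p+1)$-dimensional sector. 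Because $\hat N_\text{a}$ annihilates every system state, in this sector $\hat N_\text{a}=\ket{\psi_\text{a}^k}\bra{\psi_\text{a}^k}$, so $N_\text{a}(t)=N_\text{a}(0)\,|A(t)|^{2}$ with the survival amplitude $A(t)=\bra{\psi_\text{a}^k}\,U(t)\,\ket{\psi_\text{a}^k}$, where $U(t)$ is the propagator generated by $\hat{\mathcal H}(t)$. The problem is thereby reduced to evaluating $A(t)$.

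Next I would pass to the interaction picture with respect to $\hat{\mathcal H}_0=\hat{\mathcal H}(t)-\Omega\cos(\omega t)\hat V$, with $\hat V=\sum_i(\hat c_{i1}^{\dagger}\hat a_i+\hat a_i^{\dagger}\hat c_{i1})$, and Dyson-expand $A(t)$ in powers of $\Omega$. Since $\ket{\psi_\text{a}^k}$ is an eigenstate of $\hat{\mathcal H}_0$ with the real energy $E_\text{a}=\omega_\text{a}+\varepsilon_k^{(a)}$ while $\hat V\ket{\psi_\text{a}^k}=\ket{\psi_\text{s}^{k1}}$ lies entirely in the system sector (so $\bra{\psi_\text{a}^k}$ is orthogonal to it), the first-order term vanishes identically and the second-order term equals $-\Omega^{2}\int_0^t dt_1\int_0^{t_1}dt_2\,\cos(\omega t_1)\cos(\omega t_2)\,e^{\ii E_\text{a}(t_1-t_2)}\,G_k(t_1-t_2)$, where $G_k(\tau)=\bra{\psi_\text{s}^{k1}}e^{-\ii\hat{\mathcal H}_\text{s}\tau}\ket{\psi_\text{s}^{k1}}$ is the system's single-particle memory kernel. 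Expanding $e^{-\ii\hat{\mathcal H}_\text{s}\tau}$ in the biorthogonal eigenbasis $\{\ket{u_\text{R}^{km}},\bra{u_\text{L}^{km}}\}$ of the momentum-$k$ block yields $G_k(\tau)=\sum_m c_{km}^{(1)}e^{-\ii(\varepsilon_{km}-\ii\gamma_{km})\tau}$ with $c_{km}^{(1)}=\langle\psi_\text{s}^{k1}|u_\text{R}^{km}\rangle\langle u_\text{L}^{km}|\psi_\text{s}^{k1}\rangle$, exactly the coefficients appearing in the statement; the purely dissipative hypothesis $\gamma_{km}>0$ ensures $G_k$ decays.

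I would then substitute $\cos\omega t=\tfrac{1}{2}(e^{\ii\omega t}+e^{-\ii\omega t})$ and keep only the near-resonant contribution, whose phase is $e^{(\ii\Delta_{km}-\gamma_{km})(t_1-t_2)}$ (using $E_\text{a}-\omega=-\delta+\varepsilon_k^{(a)}=E_k$ and $\Delta_{km}=E_k-\varepsilon_{km}$), discarding the remaining terms which oscillate at frequencies of order $\omega$ or $2E_\text{a}$ --- this is the rotating-wave step. With $\tau=t_1-t_2$ the double integral becomes $\int_0^t(t-\tau)\,e^{(\ii\Delta_{km}-\gamma_{km})\tau}\,d\tau$; for $t\gg 1/\gamma_{km}$ one may extend the upper limit to infinity and retain the term linear in $t$, giving $A(t)\simeq e^{-\ii E_\text{a}t}(1-\Gamma t)$ with $\Gamma=\tfrac{\Omega^{2}}{4}\sum_m c_{km}^{(1)}/(\gamma_{km}-\ii\Delta_{km})$. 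A Born--Markov (Wigner--Weisskopf) resummation then promotes $1-\Gamma t$ to $e^{-\Gamma t}$: because $G_k$ decays on the short scale $1/\gamma_{km}$ whereas $A$ evolves on the much longer scale $\gamma_{km}/\Omega^{2}$ when $\Omega$ is small compared with the decay rates, $A$ obeys $\dot A=-\Gamma A$ to leading order. Then $N_\text{a}(t)=N_\text{a}(0)\,e^{-2\mathrm{Re}[\Gamma]\,t}$, and writing $c_{km}^{(1)}=a_{km}^{(1)}+\ii b_{km}^{(1)}$ gives $2\mathrm{Re}[\Gamma]=\tfrac{\Omega^{2}}{2}\sum_m (a_{km}^{(1)}\gamma_{km}-b_{km}^{(1)}\Delta_{km})/(\Delta_{km}^{2}+\gamma_{km}^{2})$, which is \Eq{thrm1}.

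The step I expect to be the main obstacle is the exponentiation in the last paragraph: making rigorous the claim that repeated re-excitation of the fast-decaying system modes contributes only corrections of relative size $\Omega^{2}/\gamma_{km}^{2}$, and identifying precisely where the two hypotheses $t\gg 1/\gamma_{km}$ and ``$\Omega$ small compared with the decay rates'' enter. A secondary technical point is that $\hat{\mathcal H}_0$ is non-Hermitian, so in the interaction picture the bra evolves under $\hat{\mathcal H}_0^{\dagger}$ rather than being the adjoint of the ket evolution; this forces the use of the biorthogonal basis above but otherwise leaves the Dyson-series manipulations intact.
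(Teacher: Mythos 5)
Your proposal is correct and reproduces Eq.~(\ref{thrm1}) with the right coefficients, but it is organized differently from the paper's derivation. The paper works at the level of the observable: it computes the rate $\dot{N}_\text{a}(t)=\langle\psi(t)|\hat{\dot{N}}_\text{a}(t)|\psi(t)\rangle$ by a first-order Dyson expansion of the state in the (non-Hermitian) interaction picture, which forces it to (i) use the modified transformation rule $\hat{O}^I(t)=e^{\ii\hat{\mathcal{H}}_0'^\dagger t}\hat{O}(t)e^{-\ii\hat{\mathcal{H}}_0' t}$ for observables and (ii) handle the anti-Hermitian extra term $\ii(\hat{\mathcal{H}}_\text{s}^\dagger-\hat{\mathcal{H}}_\text{s})\hat{N}_\text{a}$ in $\hat{\dot{N}}_\text{a}$, showing that its contribution $\Lambda_j$ vanishes on the chosen initial state; it then integrates the resulting kernel, drops the transient for $t\gg1/\gamma_{km}$, and exponentiates by arguing $\dot{N}_\text{a}=-\kappa N_\text{a}$. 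You instead exploit momentum conservation to confine the dynamics to the $(p+1)$-dimensional $k$-sector, observe that there $\hat{N}_\text{a}$ is the rank-one projector $\ket{\psi_\text{a}^k}\bra{\psi_\text{a}^k}$, and reduce everything to the survival amplitude $A(t)$, which you expand to second order in $\Omega$ and resum Wigner--Weisskopf style to $e^{-\Gamma t}$, giving $N_\text{a}=N_\text{a}(0)|A|^2=N_\text{a}(0)e^{-2\mathrm{Re}\Gamma t}$ with $2\mathrm{Re}\Gamma=\kappa$. The physical ingredients (rotating-wave approximation, biorthogonal expansion producing $c_{km}^{(1)}$, Markovian long-time limit) are identical, but your amplitude-based route sidesteps the paper's interaction-picture subtleties for non-Hermitian $\hat{\mathcal{H}}_0'$ and the $\Lambda_j=0$ step, at the cost of relying on the sector reduction (valid only for the PBC, translation-invariant case, which is all the theorem claims) and on the exponentiation step, which you correctly flag as the weakest link --- the paper's own closing argument ($\dot N_\text{a}\propto N_\text{a}$, hence exponential decay) is no more rigorous than your Born--Markov resummation, so your proposal meets the paper's standard.
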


\begin{proof}
We write the full Hamiltonian as
\begin{equation}
\hat{\mathcal{H}}(t) = \hat{\mathcal{H}}_{0}' + \hat{\mathcal{V}}(t),
\end{equation}
where
\begin{equation}
\hat{\mathcal{H}}_{0}' = \hat{\mathcal{H}}_\text{s}+\sum_{ij}[T_\text{a}]_{ij}\hat{a}_{i}^{\dagger}\hat{a}_{j}+\omega_\text{a}\sum_{i}\hat{a}_{i}^{\dagger}\hat{a}_{i}
\end{equation}
is time independent,
and
\begin{equation}
\hat{\mathcal{V}}(t)=\Omega\cos(\omega t) \sum_{i} (\hat{c}_{i1}^{\dagger}\hat{a}_{i}+\hat{a}_{i}^{\dagger}\hat{c}_{i1})
\end{equation}
is time dependent.
In the interaction picture, we have
\begin{equation}
\label{H_I}
\hat{\mathcal{V}}^{I}(t)
=
e^{\ii \hat{\mathcal{H}}_{0}' t}
\hat{\mathcal{V}}(t)
e^{-\ii \hat{\mathcal{H}}_{0}' t}
=
\Omega\cos(\omega t)
\sum_{i}
e^{\ii \hat{\mathcal{H}}_{0}' t}
(\hat{c}_{i1}^{\dagger}\hat{a}_{i}+\hat{a}_{i}^{\dagger}\hat{c}_{i1})
e^{-\ii \hat{\mathcal{H}}_{0}' t},
\end{equation}
which is usually non-Hermitian.

We now evaluate the first-order derivative of the population of the auxiliary levels with respect to time,
\begin{equation}
\label{hatNadot}
\hat{\dot{N}}_\text{a} (t) =
\ii \big( \hat{\mathcal{H}}^\dagger(t) \hat{N}_\text{a} - \hat{N}_\text{a} \hat{\mathcal{H}}(t) \big)
= \ii \Omega\cos(\omega t)\sum_{i}(\hat{c}_{i1}^{\dagger}\hat{a}_{i}-\hat{a}_{i}^{\dagger}\hat{c}_{i1}) + \ii (\hat{\mathcal{H}}_\text{s}^\dagger - \hat{\mathcal{H}}_\text{s}) \hat{N}_\text{a}
\end{equation}
where $\hat{N}_\text{a} = \sum_{i}\hat{a}_{i}^{\dagger}\hat{a}_{i}$ satisfying $[\hat{\mathcal{H}}_\text{s}, \hat{N}_\text{a}]=0$.
The operator $\hat{\dot{N}}_\text{a} (t)$ is obtained by
\begin{equation}
\begin{aligned}
\dot{N}_\text{a} (t) &= \frac{d}{dt} \bra{\psi(t)} \hat{N}_\text{a} \ket{\psi(t)} \\
&= \Big( \frac{d}{dt}  \bra{\psi(t)} \Big) \hat{N}_\text{a} \ket{\psi(t)} +  \bra{\psi(t)} \hat{N}_\text{a}  \Big( \frac{d}{dt} \ket{\psi(t)} \Big) \\
& = \bra{\psi(t)} \ii \hat{\mathcal{H}}^\dagger(t) \hat{N}_\text{a} - \ii \hat{N}_\text{a} \hat{\mathcal{H}}(t) \ket{\psi(t)} \\
&= \bra{\psi(t)} \hat{\dot{N}}_\text{a} (t) \ket{\psi(t)},
\end{aligned}
\end{equation}
where $\ket{\psi(t)}$ is a state in the Schr{\"o}dinger picture.
We remark that \Eq{hatNadot} is valid for both fermions and bosons.

In the interaction picture, we have
\begin{equation}
\label{Na_I}
\hat{\dot{N}}_\text{a}^{I}(t)
=
e^{ \ii \hat{\mathcal{H}}_{0}'^\dagger t}
\hat{\dot{N}}_\text{a}(t)
e^{- \ii \hat{\mathcal{H}}_{0}' t}
=
\ii \Omega\cos(\omega t) \sum_{i}
e^{ \ii \hat{\mathcal{H}}_{0}'^\dagger t} (\hat{c}_{i1}^{\dagger}\hat{a}_{i}-\hat{a}_{i}^{\dagger}\hat{c}_{i1})
e^{-\ii \hat{\mathcal{H}}_{0}' t}
+ \ii e^{ \ii \hat{\mathcal{H}}_{0}'^\dagger t} (\hat{\mathcal{H}}_\text{s}^\dagger - \hat{\mathcal{H}}_\text{s}) \hat{N}_\text{a} e^{-\ii \hat{\mathcal{H}}_{0}' t}.
\end{equation}
Note that the transformations of the Hamiltonian (\ref{H_I}) and the observable (\ref{Na_I}) to the interaction representations are different
when the Hamiltonian $\hat{\mathcal{H}}_{0}'$ is non-Hermitian.
Based on the Schr{\"o}dinger equation
$\ii \frac{d}{dt} \ket{\psi(t)} = \hat{\mathcal{H}}(t) \ket{\psi(t)}$
and the transformation of the state vector
$\ket{\psi(t)} = e^{-\ii \hat{\mathcal{H}}_{0}' t} \ket{\psi^{I}(t)}$,
we can derive
$\ii \frac{d}{dt} \ket{\psi^{I}(t)} = \hat{\mathcal{V}}^{I}(t) \ket{\psi^{I}(t)}$
with
$\hat{\mathcal{V}}^{I}(t) = e^{\ii \hat{\mathcal{H}}_{0}' t} \hat{\mathcal{V}}(t) e^{-\ii \hat{\mathcal{H}}_{0}' t}$.
However, if we want $\bra{\psi(t)}\hat{O}(t)\ket{\psi(t)} = \bra{\psi^I(t)}\hat{O}^I(t)\ket{\psi^I(t)}$ for an observable $\hat{O}$,
then we must require that
$\hat{O}^I(t) = e^{ \ii \hat{\mathcal{H}}_{0}'^\dagger t} \hat{O} (t) e^{- \ii \hat{\mathcal{H}}_{0}' t}$.

For an initial state $|\psi_{0}\rangle$, the state at time
$t$ in the interaction picture is given by $|\psi^{I}(t)\rangle=\hat{U}^{I}(t,0)|\psi_{0}\rangle$
where $\hat{U}^{I}(t,t_{0})=\mathcal{T}\exp(-\ii \int_{t_{0}}^{t}dt'\hat{\mathcal{V}}^{I}(t'))$
is the time evolution operator with $\mathcal{T}$ denoting the time-ordering operator.
Expanding the time evolution operator up to the first order with respect to $\Omega$, we get $\hat{U}^{I}(t,t_{0})=1 - \ii \int_{t_{0}}^{t}dt'\hat{\mathcal{V}}^{I}(t')+\mathcal{O}(\Omega^2)$ and
\begin{equation}
\begin{aligned}
\dot{N}_\text{a}(t)&
= \langle \psi(t) | \hat{\dot{N}}_\text{a}(t) | \psi(t) \rangle
= \langle\psi^{I}(t)|\hat{\dot{N}}_\text{a}^{I}(t)|\psi^{I}(t)\rangle
\\& = \langle\hat{\dot{N}}_\text{a}^{I}(t)\rangle - \ii \int_{0}^{t}dt'\langle
\hat{\dot{N}}_\text{a}^{I}(t) \hat{\mathcal{V}}^{I}(t') - \hat{\mathcal{V}}^{I\dagger}(t') \hat{\dot{N}}_\text{a}^{I}(t) \rangle + \mathcal{O}(\Omega^3)
\\& = \Omega^{2} \int_{0}^{t} dt'\cos(\omega t)\cos(\omega t')
\sum_{ij} \langle e^{ \ii \hat{\mathcal{H}}_{0}'^\dagger t}
(\hat{c}_{i1}^{\dagger}\hat{a}_{i}-\hat{a}_{i}^{\dagger}\hat{c}_{i1})
e^{-\ii \hat{\mathcal{H}}_{0}' t}
e^{\ii \hat{\mathcal{H}}_{0}' t'}
(\hat{c}_{j1}^{\dagger}\hat{a}_{j}+\hat{a}_{j}^{\dagger}\hat{c}_{j1})
e^{-\ii \hat{\mathcal{H}}_{0}' t'} \rangle
\\& \ \ \ \ + \Omega \int_{0}^{t} dt' \cos (\omega t') \sum_j \langle
e^{ \ii \hat{\mathcal{H}}_{0}'^\dagger t} (\hat{\mathcal{H}}_\text{s}^\dagger - \hat{\mathcal{H}}_\text{s}) \hat{N}_\text{a} e^{-\ii \hat{\mathcal{H}}_{0}' t} e^{\ii \hat{\mathcal{H}}_{0}' t'}
(\hat{c}_{j1}^{\dagger}\hat{a}_{j}+\hat{a}_{j}^{\dagger}\hat{c}_{j1})
e^{-\ii \hat{\mathcal{H}}_{0}' t'} \rangle + \mathrm{H.c.},
\end{aligned}
\end{equation}
where we have adopted the notation $\langle \hat{A} \rangle=\langle \psi_0 | \hat{A} | \psi_0 \rangle$ and omitted higher order terms
with respect to $\Omega$.

Next, we apply the rotating wave approximation.
Let $\hat{\mathcal{H}}_{0}' = \hat{\mathcal{H}}_0 + \hat{\mathcal{H}}'$ with $\hat{\mathcal{H}}'=\omega \sum_i \hat{a}_i^\dagger \hat{a}_i$, then we have $\hat{\mathcal{H}}_0 = \hat{\mathcal{H}}_\text{s} + \hat{\mathcal{H}}_\text{a}$ with
\begin{equation}
\label{AuxHam}
\hat{\mathcal{H}}_\text{a} = \sum_{ij}[T_\text{a}]_{ij}\hat{a}_{i}^{\dagger}\hat{a}_{j}-(\omega-\omega_\text{a})\sum_{i}\hat{a}_{i}^{\dagger}\hat{a}_{i}.
\end{equation}
Using $[\hat{\mathcal{H}}_0,\hat{\mathcal{H}}']=[\hat{\mathcal{H}}_0 ^{\dagger},\hat{\mathcal{H}}']=0$ and $e^{\ii \hat{\mathcal{H}}' t} \hat{a}_i e^{-\ii \hat{\mathcal{H}}' t} = e^{-\ii \omega t} \hat{a}_i$, we obtain
\begin{equation}
\label{Eq11}
\begin{aligned}
\dot{N}_\text{a}(t) &=
\frac{\Omega^{2}}{4} \int_{0}^{t} dt'
\sum_{ij}
\langle
e^{ \ii \hat{\mathcal{H}}_{0}^\dagger t}
(\hat{c}_{i1}^{\dagger}\hat{a}_{i}-\hat{a}_{i}^{\dagger}\hat{c}_{i1})
e^{-\ii \hat{\mathcal{H}}_{0} (t - t')}
(\hat{c}_{j1}^{\dagger}\hat{a}_{j}+\hat{a}_{j}^{\dagger}\hat{c}_{j1})
e^{-\ii \hat{\mathcal{H}}_{0} t'}
\rangle
\\&  \ \ \ \ \ \  + \frac{\Omega}{2} \int_{0}^{t} dt' \sum_j \langle
e^{ \ii \hat{\mathcal{H}}_{0}^\dagger t} (\hat{\mathcal{H}}_\text{s}^\dagger - \hat{\mathcal{H}}_\text{s}) \hat{N}_\text{a} e^{-\ii \hat{\mathcal{H}}_{0} (t-t')}
(\hat{c}_{j1}^{\dagger}\hat{a}_{j}+\hat{a}_{j}^{\dagger}\hat{c}_{j1})
e^{-\ii \hat{\mathcal{H}}_{0} t'} \rangle
 + \mathrm{H.c.}
\\&= \frac{\Omega^{2}}{4} \int_{0}^{t} dt'  \sum_{ij} \Gamma_{ij} (t,t')
+ \frac{\Omega}{2} \int_{0}^{t} dt' \sum_{j}  \Lambda_{j} (t,t')
 + \mathrm{H.c.},
\end{aligned}
\end{equation}
where we have neglected the fast oscillating terms like $e^{-2\ii \omega t'}$.
Note that here we have already omitted the term which contains only $e^{2\ii \omega t}$; otherwise, it can be omitted in the calculation of $N_\text{a} (t)$.

We first consider the initial state with one particle at momentum $k$, that is,
$\ket{\psi_0} =\ket{\psi_\text{a}^k}= \hat{a}_k^\dagger|0\rangle=\frac{1}{\sqrt{N}} \sum_j e^{\ii kj} \hat{a}_j^\dagger \ket{0}$, and we have
$\hat{\mathcal{H}}_\text{a} \ket{\psi_\text{a}^k} = E_k \ket{\psi_\text{a}^k}$ and
$\hat{\mathcal{H}}_\text{s} \ket{\psi_\text{a}^k}=\hat{\mathcal{H}}_\text{s}^{\dagger} \ket{\psi_\text{a}^k} = 0$. Using
these properties including $[\hat{\mathcal{H}}_\text{s},\hat{\mathcal{H}}_\text{a}]=[\hat{\mathcal{H}}_\text{s}^{\dagger},\hat{\mathcal{H}}_\text{a}] = 0$ and $\hat{\mathcal{H}}_\text{a}^\dagger = \hat{\mathcal{H}}_\text{a}$, we obtain
\begin{equation}
\begin{aligned}
\Gamma_{ij} (t,t') &=   \langle
e^{ \ii \hat{\mathcal{H}}_{0}^\dagger t}
(\hat{c}_{i1}^{\dagger}\hat{a}_{i}-\hat{a}_{i}^{\dagger}\hat{c}_{i1})
e^{-\ii \hat{\mathcal{H}}_{0} (t - t')}
(\hat{c}_{j1}^{\dagger}\hat{a}_{j}+\hat{a}_{j}^{\dagger}\hat{c}_{j1})
e^{-\ii \hat{\mathcal{H}}_{0} t'}
\rangle\\
& =
-\langle
e^{ \ii \hat{\mathcal{H}}_\text{a} t}
\hat{a}_{i}^{\dagger}\hat{c}_{i1}
e^{-\ii \hat{\mathcal{H}}_{0} (t - t')}
\hat{c}_{j1}^{\dagger}\hat{a}_{j}
e^{-\ii \hat{\mathcal{H}}_\text{a} t'}
\rangle \\
& =
-e^{\ii E_k (t-t')}
\langle
\hat{a}_{i}^{\dagger} \hat{c}_{i1} e^{-\ii \hat{\mathcal{H}}_{0} (t-t')} \hat{c}_{j1}^{\dagger} \hat{a}_{j} \rangle \\
&= -e^{\ii E_k (t-t')}
\bra{0}
(\frac{1}{\sqrt{N}} e^{-\ii k i} \hat{c}_{i1}) \cdot
e^{-\ii \hat{\mathcal{H}}_\text{s} (t-t')} \cdot
(\frac{1}{\sqrt{N}} e^{\ii k j} \hat{c}_{j1}^{\dagger})
\ket{0}
\end{aligned}
\end{equation}
and
\begin{equation}
\label{Eq15}
\begin{aligned}
\Lambda_{j} (t,t') &= \langle
e^{ \ii \hat{\mathcal{H}}_{0}^\dagger t} (\hat{\mathcal{H}}_\text{s}^\dagger - \hat{\mathcal{H}}_\text{s}) \hat{N}_\text{a} e^{-\ii \hat{\mathcal{H}}_{0} (t-t')}
(\hat{c}_{j1}^{\dagger}\hat{a}_{j}+\hat{a}_{j}^{\dagger}\hat{c}_{j1})
e^{-\ii \hat{\mathcal{H}}_{0} t'} \rangle
\\&= \langle
e^{ \ii \hat{\mathcal{H}}_\text{a} t} (\hat{\mathcal{H}}_\text{s}^\dagger - \hat{\mathcal{H}}_\text{s}) \hat{N}_\text{a} e^{-\ii \hat{\mathcal{H}}_{0} (t-t')}
(\hat{c}_{j1}^{\dagger}\hat{a}_{j}+\hat{a}_{j}^{\dagger}\hat{c}_{j1})
e^{-\ii \hat{\mathcal{H}}_\text{a} t'} \rangle
\\&= \langle
(\hat{\mathcal{H}}_\text{s}^\dagger - \hat{\mathcal{H}}_\text{s}) e^{ \ii \hat{\mathcal{H}}_\text{a} t}  \hat{N}_\text{a} e^{-\ii \hat{\mathcal{H}}_{0} (t-t')}
(\hat{c}_{j1}^{\dagger}\hat{a}_{j}+\hat{a}_{j}^{\dagger}\hat{c}_{j1})
e^{-\ii \hat{\mathcal{H}}_\text{a} t'} \rangle
\\&= 0.
\end{aligned}
\end{equation}
Since $\ket{\psi_\text{s}^{k\alpha}} = \sum_{j} \frac{1}{\sqrt{N}} e^{\ii k j} \hat{c}_{j\alpha}^{\dagger} \ket{0}$ is a k-space basis vector of the system, we obtain
\begin{equation}
\label{Eq2}
\begin{aligned}
\dot{N}_\text{a}(t)
& = -\frac{\Omega^{2}}{4} \int_{0}^{t}dt' e^{\ii E_k (t-t')}
\bra{0}
(\sum_{i} \frac{1}{\sqrt{N}} e^{-\ii k i} \hat{c}_{i1}) \cdot
e^{-\ii\hat{\mathcal{H}}_\text{s} (t-t')} \cdot
(\sum_{j} \frac{1}{\sqrt{N}} e^{\ii k j} \hat{c}_{j1}^{\dagger})
\ket{0}
+ \mathrm{H.c.} \\
& = -\frac{\Omega^{2}}{4} \int_{0}^{t}dt' e^{\ii E_k (t-t')}  \bra{\psi_\text{s}^{k1}} e^{-\ii \hat{\mathcal{H}}_\text{s} (t-t')} \ket{\psi_\text{s}^{k1}} + \mathrm{H.c.}.
\end{aligned}
\end{equation}
We write $\ket{\psi_\text{s}^{k\alpha}}=\sum_{k^\prime m} \ket{u_\text{R}^{k^\prime m}}\langle u_\text{L}^{k^\prime m} | \psi_\text{s}^{k\alpha} \rangle=\sum_{m} \ket{u_\text{R}^{km}}\langle u_\text{L}^{km} | \psi_\text{s}^{k\alpha} \rangle$
where $\ket{u_\text{R}^{km}}$ is the right eigenstate of the Hamiltonian $\hat{\mathcal{H}}_\text{s}$ satisfying $\hat{\mathcal{H}}_\text{s} \ket{u_\text{R}^{km}} = (\varepsilon_{km} - \ii \gamma_{km}) \ket{u_\text{R}^{km}}$, and $\bra{u_\text{L}^{km}}$ is the corresponding left one.
We further reduces Eq.~(\ref{Eq2}) to
\begin{equation}
\label{Eq1}
\begin{aligned}
\dot{N}_\text{a}(t)
& = -\frac{\Omega^{2}}{4} \sum_m \int_{0}^{t}dt' e^{\ii E_k (t-t')}  \bra{\psi_\text{s}^{k1}} e^{-\ii \hat{\mathcal{H}}_\text{s} (t-t')} \ket{u_\text{R}^{km}}\langle u_\text{L}^{km} | \psi_\text{s}^{k1} \rangle + \mathrm{H.c.} \\
& = -\frac{\Omega^{2}}{4} \sum_m \int_{0}^{t}dt' e^{-\gamma_{km} (t-t')} e^{\ii \Delta_{km} (t-t')} (a_{km}^{(1)} + \ii b_{km}^{(1)}) + \mathrm{H.c.} \\
& = -\frac{\Omega^{2}}{2} \sum_m \int_{0}^{t}dt' e^{-\gamma_{km} (t-t')} \{ a_{km}^{(1)} \cos [\Delta_{km} (t-t')] - b_{km}^{(1)} \sin [\Delta_{km} (t-t')] \} \\
& = -\frac{\Omega^{2}}{2} \sum_m \frac{a_{km}^{(1)}  \gamma_{km} - b_{km}^{(1)} \Delta_{km} - e^{-\gamma_{km} t} [(a_{km}^{(1)} \gamma_{km} - b_{km}^{(1)} \Delta_{km} ) \cos (\Delta_{km} t) - (a_{km}^{(1)} \Delta_{km} + b_{km}^{(1)} \gamma_{km}) \sin (\Delta_{km} t)] }{\Delta_{km}^2 + \gamma_{km}^2}.
\end{aligned}
\end{equation}
Since we only consider the first-order contribution of the time evolution operator $\hat{U}^{I}(t,t_{0})$ in the derivation,
the results agree well with numerical results only in a short time [see blue and black lines in Fig.~\ref{figS1}(a)].
In a longer time scale $t \gg 1/\gamma_{km} $ such that $e^{-\gamma_{km} t} \approx 0$, we neglect the oscillation terms
and approximate $\dot{N}_\text{a}(t)$ by
\begin{equation}
\label{1storder}
\dot{N}_\text{a}(t) \approx -\kappa
= -\frac{\Omega^{2}}{2} \sum_m \frac{a_{km}^{(1)}  \gamma_{km} - b_{km}^{(1)} \Delta_{km} }{\Delta_{km}^2 + \gamma_{km}^2}.
\end{equation}
If an initial state is $\ket{\psi_0} = \sqrt{N_0} \ket{\psi_\text{a}^k}$ with $\langle \psi_0 | \psi_0 \rangle = N_0 \le 1$, then by a similar derivation we would get $\dot{N}_\text{a}(t) = -\kappa N_0$.
In light of this result and the fact that $N_\text{a} (t)$ decreases with time, we conclude that
\begin{equation}
\label{Neq}
\dot{N}_\text{a}(t) = -\kappa N_\text{a}(t).
\end{equation}
Solving Eq.~(\ref{Neq}), we get $N_\text{a}(t) = N_\text{a}(0) e^{-\kappa t}$ when $\gamma_{km} t \gg 1$.

We remark that while we consider the 1D case for simplicity, Eq.~(\ref{thrm1}) can be easily generalized to two or three dimensional case by changing
the momentum $k$ to a vector $\bm k$.
\end{proof}

\bigskip
For a Hermitian system, we have $\gamma_{km} = 0$, $\ket{u_\text{R}^{km}} = \ket{u_\text{L}^{km}}$ and $b_{km} = 0$.
Using the formula $\lim_{K \rightarrow \infty} \sin(Kx)/x = \pi \delta (x)$ reduces \Eq{Eq1} to
\begin{equation}
\label{Eq18}
\dot{N}_\text{a}(t)
= -\frac{\Omega^{2}}{2} \sum_m \frac{a_{km} \sin (\Delta_{km} t) }{\Delta_{km}}
\overset{t \rightarrow \infty}{=\joinrel=}
-\frac{\pi \Omega^{2}}{2} \sum_m a_{km} \delta (\Delta_{km}),
\end{equation}
which is consistent with the result derived in a continuum model \cite{TormaBook}.

We have also numerically confirmed the theorem. Figure~\ref{figS1} illustrates that $\dot{N}_\text{a}(t)$
and $N_\text{a}(t)$ obtained from Eq.~(\ref{thrm1}) agree well with numerical results in a long time.
In a short time, $\dot{N}_\text{a}(t)$ from Eq.~(\ref{thrm1}) deviates from the numerical results which exhibit strong oscillations.
In Fig.~\ref{figS1}(b), we also plot the population of the system which is defined as
$N_\text{s}(t) = \bra{\psi(t)} \sum_{i\alpha} \hat{c}_{i\alpha}^\dagger \hat{c}_{i\alpha} \ket{\psi(t)}$,
indicating that the population is close to zero during the time evolution.

\begin{figure}[t]
\includegraphics[width=4.8in]{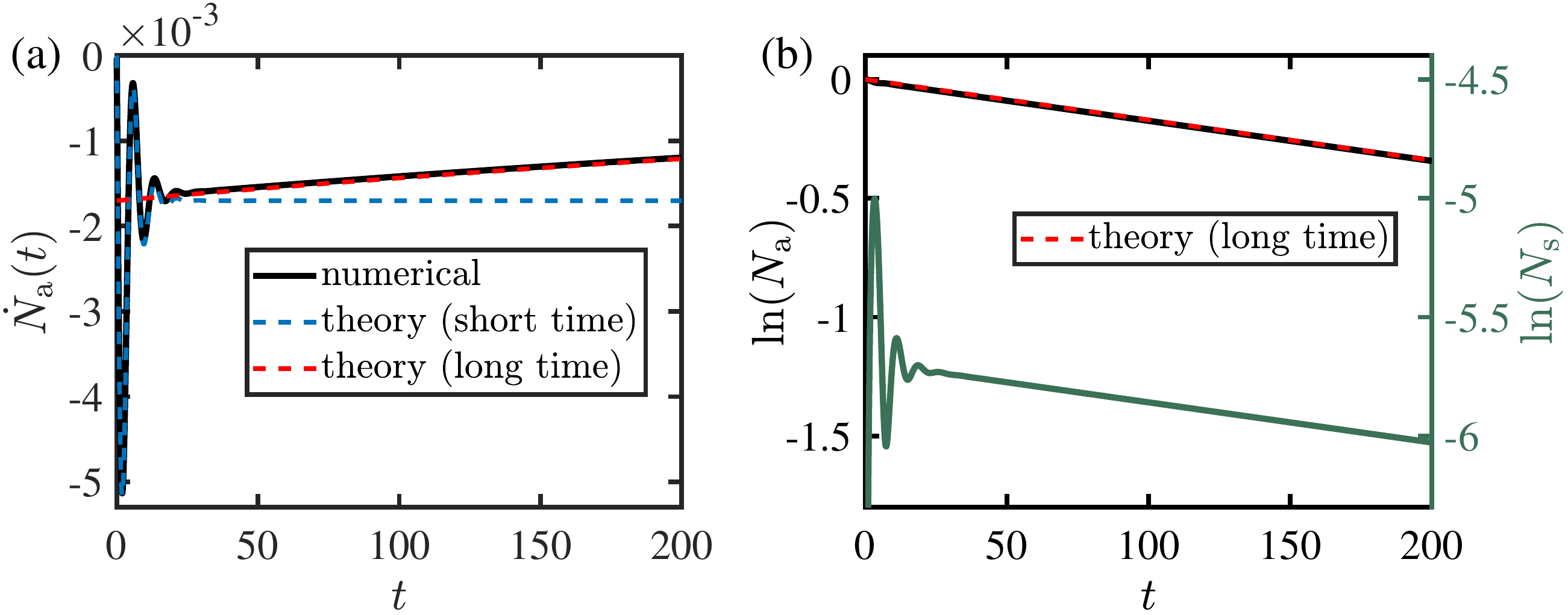}
\caption{
(a) $\dot{N}_\text{a}(t)$ versus $t$ with the black line obtained by numerical calculations,
the dashed red line from Eq.~(\ref{thrm1}) and the dashed blue line from
Eq.~(\ref{Eq1}).
(b) $\ln(N_\text{a})$ and $\ln(N_\text{s})$ as a function of $t$.
The black line is obtained by numerical calculations, and the dashed red line is obtained from Eq.~(\ref{thrm1}).
The results are obtained for the Hatano-Nelson (HN) model with parameters
$J_s=-1$, $g=-0.5$, $\gamma=0.6$, $\delta=-0.2$, $\Omega=0.1$ and $J=0$ for an initial state with $k=8\pi/5$ and $N_{\text{a}}(0)=1$.
}
\label{figS1}
\end{figure}

\section{S-3. Proof of insensitivity of non-Hermitian absorption spectroscopy on boundary conditions in the thermodynamic limit}

In this section, we will prove that the dynamics of $N_\text{a}(t)$ is independent of boundary conditions in the thermodynamic limit.

\subsection{A. Insensitivity of momentum-space propagators on boundary conditions in the thermodynamic limit}

We first prove that the time evolution operators in $k$-space under open boundary conditions
and periodic boundary conditions are the same in the thermodynamic limit.

\begin{lemma}\label{lemma1}
For a 1D translation-invariant system, if the hopping range is finite, then we have
\begin{equation}
D = \bra{k \alpha} e^{-\ii \hat{\mathcal{H}}_\text{o} t} - e^{-\ii \hat{\mathcal{H}}_\text{p} t} \ket{k' \alpha'} \propto \frac{1}{N},
\end{equation}
where $\hat{\mathcal{H}}_\text{o}$ and $\hat{\mathcal{H}}_\text{p}$ denote the Hamiltonian (which can be non-Hermitian) under
OBCs and PBCs, respectively, $\ket{k \alpha}$ denotes the k-space basis vector with momentum $k$ on the $\alpha$th degree of freedom, and $N$ is the number of unit cells.
Since $D$ is proportional to $1/N$, we have $D = 0$ when $N \rightarrow \infty$.
\end{lemma}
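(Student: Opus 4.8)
The plan is to expand both propagators in the momentum-sublattice basis and compare term by term. For the periodic Hamiltonian, translation invariance gives a block-diagonal form $\hat{\mathcal{H}}_\text{p} = \bigoplus_k h_\text{p}(k)$ where $h_\text{p}(k)$ is the $p \times p$ Bloch matrix, so $\bra{k\alpha} e^{-\ii \hat{\mathcal{H}}_\text{p} t} \ket{k'\alpha'} = \delta_{kk'} [e^{-\ii h_\text{p}(k) t}]_{\alpha\alpha'}$. For the open Hamiltonian, I would write $\hat{\mathcal{H}}_\text{o} = \hat{\mathcal{H}}_\text{p} + \hat{\mathcal{B}}$, where $\hat{\mathcal{B}}$ is the boundary correction: because the hopping range is finite (say bounded by $R$), $\hat{\mathcal{B}}$ only connects the $\mathcal{O}(R)$ unit cells near the two ends of the chain, so it has $\mathcal{O}(1)$ nonzero matrix elements in real space, each of $\mathcal{O}(1)$ magnitude, independent of $N$.

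The key step is a Duhamel/Dyson expansion: $e^{-\ii \hat{\mathcal{H}}_\text{o} t} - e^{-\ii \hat{\mathcal{H}}_\text{p} t} = -\ii \int_0^t ds\, e^{-\ii \hat{\mathcal{H}}_\text{o}(t-s)} \hat{\mathcal{B}}\, e^{-\ii \hat{\mathcal{H}}_\text{p} s}$, or equivalently the full Dyson series in powers of $\hat{\mathcal{B}}$. Sandwiching between $\bra{k\alpha}$ and $\ket{k'\alpha'}$ and inserting real-space resolutions of identity, every term carries a factor $\bra{k\alpha}(\cdots)\ket{j_1\beta_1}$ on the left and $\bra{j_2\beta_2}(\cdots)\ket{k'\alpha'}$ on the right, where $j_1,j_2$ are among the $\mathcal{O}(1)$ boundary sites pinned by $\hat{\mathcal{B}}$. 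Since $\ket{k\alpha} = \frac{1}{\sqrt N}\sum_j e^{\ii k j}\ket{j\alpha}$, each such overlap with a fixed real-space site contributes a factor $1/\sqrt N$, giving $1/N$ overall once both ends are accounted for. One then needs the sum over the Dyson series to converge uniformly in $N$; this follows because $e^{-\ii \hat{\mathcal{H}}_\text{o} t}$ and $e^{-\ii \hat{\mathcal{H}}_\text{p} t}$ have operator norms bounded independently of $N$ (the Hamiltonians are banded with $N$-independent entries, and $t$ is fixed), so the geometric-type bound on the $n$-th Dyson term is $(\|\hat{\mathcal{B}}\| t)^n/n!$ times the boundary-localization factor, and the series is dominated by $e^{\|\hat{\mathcal{B}}\| t}$.

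The main obstacle is controlling the propagators uniformly: for a non-Hermitian $\hat{\mathcal{H}}$ the evolution $e^{-\ii \hat{\mathcal{H}} t}$ is not unitary, and in the presence of the non-Hermitian skin effect its norm can in principle grow; one must argue that for banded matrices with $N$-independent coefficients the spectral abscissa (hence $\|e^{-\ii \hat{\mathcal{H}} t}\|$) stays bounded as $N\to\infty$, e.g. via a similarity transform to a Hermitian-spectrum reference or a Gershgorin-type estimate, and that the boundary-localization argument survives the non-normality. A subtlety worth flagging is that the naive Dyson estimate gives only $\|\hat{\mathcal{B}}\|\,t\, e^{Ct}$-type control, so the constant hidden in "$\propto 1/N$" grows with $t$; this is harmless here because $t$ is held fixed while $N\to\infty$, but it should be stated. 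Once the bound $|D| \le \mathrm{const}(t)/N$ is in hand, letting $N\to\infty$ gives $D\to 0$, completing the proof.
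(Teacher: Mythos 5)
Your proposal is correct and follows essentially the same route as the paper: decompose $\hat{\mathcal{H}}_\text{o}=\hat{\mathcal{H}}_\text{p}-\hat{B}$ with a boundary term whose momentum-space matrix elements scale as $1/N$, expand the propagator difference in powers of $\hat{B}$ (the paper Taylor-expands the exponentials and treats $(\hat{\mathcal{H}}_\text{p}-\hat{B})^n-\hat{\mathcal{H}}_\text{p}^n$ term by term, which is the same bookkeeping as your Dyson series), and note that each intermediate momentum sum contributes a factor $N$ against the $1/N$ per $\hat{B}$ insertion, leaving an overall $1/N$. Your added care about uniform-in-$N$ propagator bounds (handled simply since the matrices are banded with $N$-independent entries and $t$ is fixed) and the $t$-dependence of the constant is a refinement the paper glosses over, not a different method.
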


\begin{proof}
We first write down the general form of a Hamiltonian under PBCs as
\begin{equation}
\hat{\mathcal{H}}_\text{p} = \sum_{i=1}^{N} \sum_{x=-L}^{L} \sum_{\alpha,\beta=1}^{p} t_{x,\alpha \beta}
\hat{c}_{i+x,\alpha}^\dagger \hat{c}_{i,\beta},
\end{equation}
where $p$ is the number of degrees of freedom within a unit cell and $L$ is the hopping range.
The difference between the Hamiltonians under OBCs and PBCs is given by
\begin{equation}
\hat{B} = \hat{\mathcal{H}}_\text{p} - \hat{\mathcal{H}}_\text{o} =
\sum_{i=1}^{L} \sum_{x=-L}^{-i} \sum_{\alpha,\beta=1}^{p}
t_{x,\alpha \beta} \hat{c}_{i+x+N,\alpha}^\dagger \hat{c}_{i,\beta}
+
\sum_{i=N-L+1}^{N} \sum_{x=N-i+1}^{L} \sum_{\alpha,\beta=1}^{p}
t_{x,\alpha \beta} \hat{c}_{i+x-N,\alpha}^\dagger \hat{c}_{i,\beta}.
\end{equation}
We calculate the matrix element of $\hat{B}$ in $k$-space and find that
\begin{equation}
\bra{k\alpha} \hat{B} \ket{k' \alpha'} =
\sum_{i=1}^{L} \sum_{x=-L}^{-i} \frac{1}{N}
t_{x,\alpha \alpha'} e^{-\ii k (i+x)} e^{ \ii k' i}
+
\sum_{i=N-L+1}^{N} \sum_{x=N-i+1}^{L} \frac{1}{N}
t_{x,\alpha \alpha'} e^{-\ii k (i+x)} e^{ \ii k' i}
= \frac{f_{\alpha \alpha'}(k,k')}{N},
\end{equation}
where $f_{\alpha \alpha'}(k,k')$ is independent of $N$ if the hopping range $L$ is finite.

We now prove that
\begin{equation}
\label{Eq5}
D \equiv \sum_{n=1}^{\infty} \frac{(-\ii t)^n}{n!}
\bra{k \alpha} (\hat{\mathcal{H}}_\text{p} - \hat{B})^n - (\hat{\mathcal{H}}_\text{p})^n \ket{k' \alpha'}
=\sum_{n=1}^{\infty} \frac{(-\ii t)^n}{n!} D_n \propto \frac{1}{N}
\end{equation}
with $D_n = \bra{k \alpha} (\hat{\mathcal{H}}_\text{p} - \hat{B})^n - (\hat{\mathcal{H}}_\text{p})^n \ket{k' \alpha'}$.
In fact, each $D_n$ is proportional to $1/N$.
We will show this for $D_1$ and $D_2$.
For $D_1$, we have
\begin{equation}
D_1 = - \bra{k \alpha} \hat{B} \ket{k' \alpha'} = -\frac{f_{\alpha \alpha'} (k,k')}{N} \propto \frac{1}{N}.
\end{equation}
$D_2$ can be separated into three parts:
\begin{equation}
D_2 = \bra{k \alpha} -\hat{B} \hat{\mathcal{H}}_\text{p} - \hat{\mathcal{H}}_\text{p} \hat{B} + \hat{B}^2 \ket{k' \alpha'}.
\end{equation}
For the first part,
\begin{equation}
\begin{aligned}
\bra{k \alpha} \hat{B} \hat{\mathcal{H}}_\text{p} \ket{k' \alpha'}
&= \sum_{k_1 \alpha_1 q \beta} \bra{k \alpha} \hat{B} \ket{k_1 \alpha_1} \bra{k_1 \alpha_1} \hat{\mathcal{H}}_\text{p} \ket{u_\text{R}^{q\beta}} \langle u_\text{L}^{q\beta} | k' \alpha' \rangle \\
&= \sum_{\alpha_1 \beta} \frac{1}{N} f_{\alpha \alpha_1} (k,k') E_{\beta}(k') \langle k' \alpha_1 | u_\text{R}^{k' \beta} \rangle \langle u_\text{L}^{k' \beta} | k' \alpha' \rangle \\
&= \sum_{\alpha_1 \beta} \frac{1}{N} f_{\alpha \alpha_1} (k,k') E_{\beta}(k') g_{\alpha_1 \alpha' \beta} (k')
\propto \frac{1}{N},
\end{aligned}
\end{equation}
where $\ket{u_\text{R}^{k \alpha}}$ is the right eigenvector of $\hat{\mathcal{H}}_\text{p}$ with eigenenergy $E_{\alpha} (k)$, $g_{\alpha \alpha' \beta} (k) = \langle k \alpha | u_\text{R}^{k \beta} \rangle \langle u_\text{L}^{k \beta} | k \alpha' \rangle$. In the derivation, we have also used
the fact that $\sum_{k\alpha} \ket{u_\text{R}^{k\alpha}} \bra{u_\text{L}^{k\alpha}} = \mathbf{1}$.
Similarly, one can derive that $\bra{k \alpha} \hat{\mathcal{H}}_\text{p} \hat{B}  \ket{k' \alpha'} \propto 1/N$.
For the third part, we obtain that
\begin{equation}
\begin{aligned}
\bra{k\alpha} \hat{B}^2 \ket{k' \alpha'} &= \sum_{q \beta} \bra{k\alpha} \hat{B} \ket{q \beta} \bra{q \beta} \hat{B} \ket{k' \alpha'} \\
&= \sum_{q \beta} \frac{1}{N^2} f_{\alpha \beta} (k,q) f_{\beta \alpha'} (q,k') \\
&\simeq \frac{1}{N} \int_{0}^{2\pi} dq \, \sum_{\beta} \frac{1}{2\pi} f_{\alpha \beta} (k,q) f_{\beta \alpha'} (q,k') \propto \frac{1}{N},
\end{aligned}
\end{equation}
where in the final step we have substituted $\sum_k$ with $\frac{N}{2\pi} \int_{0}^{2\pi} dk$.
Thus, we have $D_2 \propto 1/N$.
Other terms are also proportional to $1/N$. For example, in $D_3$,
\begin{equation}
\begin{aligned}
\bra{k\alpha} \hat{B} \hat{\mathcal{H}}_\text{p} \hat{B} \ket{k' \alpha'}
&= \sum_{k_1 \alpha_1  q\beta k_2 \alpha_2} \bra{k \alpha} \hat{B} \ket{k_1 \alpha_1} \bra{k_1 \alpha_1} \hat{\mathcal{H}}_\text{p} \ket{u_\text{R}^{q\beta}}  \langle u_\text{L}^{q\beta} | k_2 \alpha_2 \rangle \bra{k_2 \alpha_2} \hat{B} \ket{k' \alpha'} \\
&= \sum_{q \alpha_1 \alpha_2 \beta} \frac{1}{N^2} f_{\alpha \alpha_1} (k,q) \, E_{\beta} (q) \, g_{\alpha_1 \alpha_2 \beta} (q) \,  f_{\alpha_2 \alpha'} (q,k') \\
&\simeq \frac{1}{N} \int_{0}^{2\pi} dq \, \sum_{\alpha_1 \alpha_2 \beta} \frac{1}{2\pi} f_{\alpha \alpha_1} (k,q) \, E_{\beta} (q) \, g_{\alpha_1 \alpha_2 \beta} (q) \, f_{\alpha_2 \alpha'} (q,k') \propto \frac{1}{N}.
\end{aligned}
\end{equation}
So we have $D \propto 1/N$ because each term in $D$ is proportional to $1/N$.

We conclude that in the thermodynamic limit ($N \rightarrow \infty$), $D$ approaches zero and the time evolution operators in k-space are the same for the Hamiltonians under OBCs and PBCs.
We remark that for a $d$-dimensional system, one can easily find that $D\propto 1/N^d$.
\end{proof}

\bigskip

Our results may be beyond the intuition that since the Hatano-Nelson model exhibits real energy spectra for open boundaries [the eigenenergy can be approximated by 
$\varepsilon_k^\text{o} \simeq 2\, \text{sgn}(J_s) \sqrt{J_s^2-g^2} \cos k$ (for $|J_s|>|g|$) obtained through similar transformations]
if we set $\gamma=0$,
the dynamics would be bounded in sharp contrast to the case for periodic boundaries. 
In fact, even though the eigenenergies are purely real, the dynamics of $D_{\text{o}}=\bra{k}e^{-i\hat{\mathcal{H}}_{\text{o}} t}\ket{k}$ may still lead to gain or loss. 

To illustrate the fact, 
we first consider the state $\ket{\phi(t)} = e^{-\ii \hat{\mathcal{H}}_\text{o}^{\text{HN}} t} \ket{k}$ and see whether its norm $n_\phi (t) = \langle \phi (t) | \phi(t) \rangle$ remains at one under the evolution of the Hatano-Nelson model $\hat{\mathcal{H}}_\text{o}^{\text{HN}}$ under OBCs [see the Hamiltonian (3) in the main text]. 
Since the energies of $\hat{\mathcal{H}}_\text{o}^{\text{HN}}$ are real, one may think that this norm should remain unchanged as the Hermitian case, and thus $D_\text{o}$ is bounded. 
However, this is not true.
For simplicity, we calculate the time derivative of $n_\phi (t)$ at $t=0$ which is $n_{\phi}' (0) = \ii \bra{k} [\hat{\mathcal{H}}_\text{o}^{\text{HN}}]^\dagger - \hat{\mathcal{H}}_\text{o}^{\text{HN}} \ket{k} = 4 g \frac{N-1}{N} \sin k$.
In the Hermitian case ($g=0$), $[\hat{\mathcal{H}}_\text{o}^{\text{HN}}]^\dagger = \hat{\mathcal{H}}_\text{o}^{\text{HN}}$ so that $n_{\phi}' (0) = 0$.
But in this non-Hermitian case $g \neq 0$, when $k\neq 0$ or $\pi$, $n_\phi'(0) \neq 0$, indicating that the state $\phi (t)$ may grow (when $g\sin k>0$) 
or decay (when $g\sin k<0$) under the evolution of $\hat{\mathcal{H}}_\text{o}^{\text{HN}}$. 
It implies that $D_\text{o}$ may also grow or decay over time and thus may be unbounded, in  contrast to Hermitian cases where $n_\phi (t) = 1$ and $D_\text{o}$ is bounded by $|D_\text{o}| \leq 1$.
This example suggests that the reality of the energy spectrum cannot guarantee the conservation of the norm of a state under time evolution in the non-Hermitian case.

\begin{figure*}[t]
\includegraphics[width=2.5in]{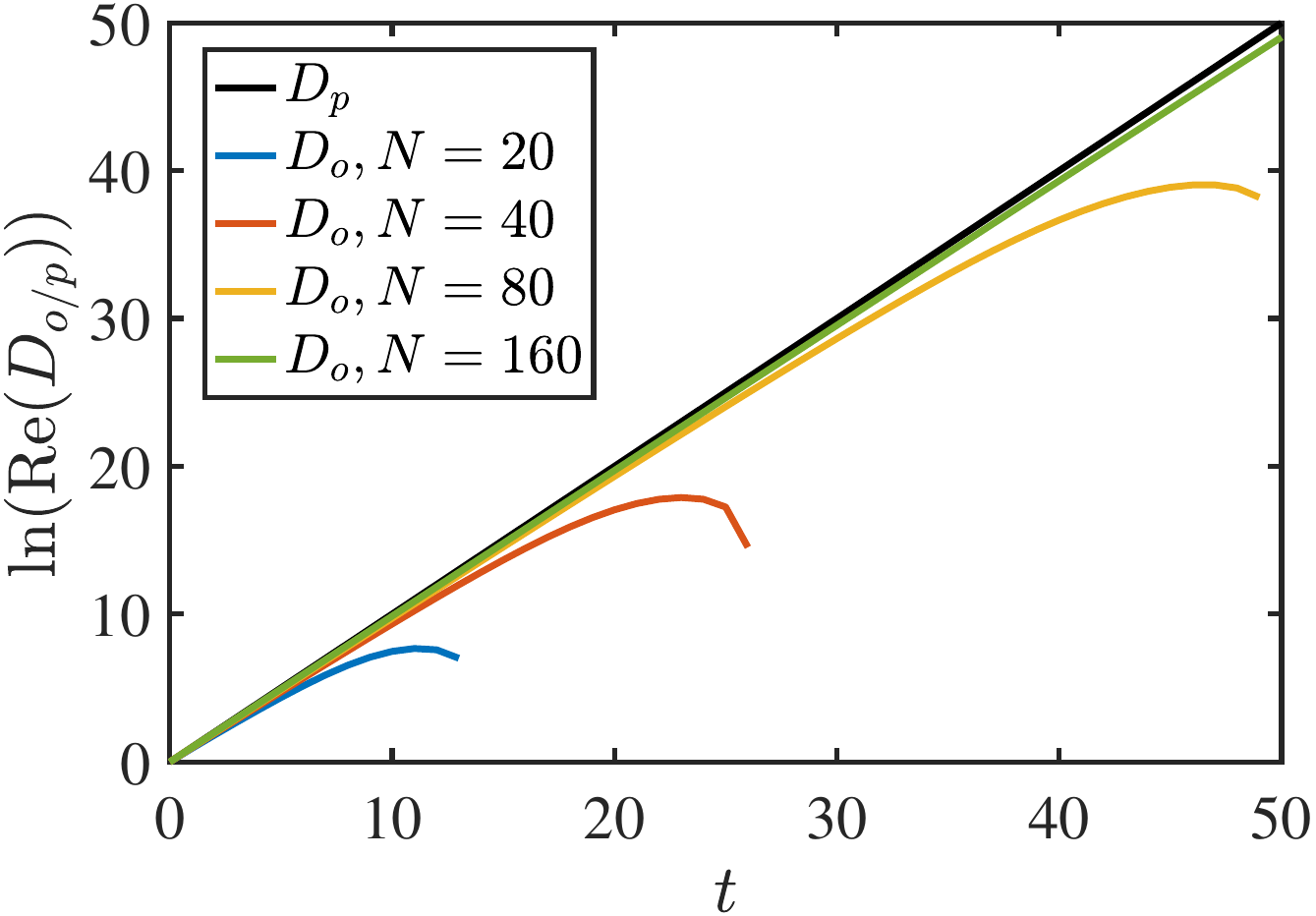}
\caption{
Comparison between $D_\text{o}$ and $D_\text{p}$ at various system sizes $N$ for the Hatano-Nelson model. Here $J_s=-1$, $g=-0.5$, $\gamma=0$ and $k=3\pi/2$.
Note that $D_\text{p} = \bra{k} e^{-\ii \hat{\mathcal{H}}_\text{p} t} \ket{k} = e^{-\ii \lambda_k^\text{p} t} = e^{2gt\sin k} e^{-\ii 2J_s t \cos k} = e^{t}$ for the parameters here, indicating that $D_\text{p}$ grows exponentially.
}
\label{figS2}
\end{figure*}

We further show the numerical results of $D_\text{o}$ and $D_\text{p}$ for the Hatano-Nelson model in Fig.~\ref{figS2}. We see that
$D_\text{o}$ also grows exponentially with time, and as $N$ increases, $D_\text{o}$ approaches $D_\text{p}$, suggesting that $D_\text{o} = D_\text{p}$ when $N$ goes to infinity, which is consistent with Lemma~\ref{lemma1}.

\subsection{B. Insensitivity of $N_\text{a}(t)$ on boundary conditions in the thermodynamic limit}

We are now in a position to prove that the dynamics of $N_\text{a}(t)$ is independent of boundary conditions in the thermodynamic limit.
\begin{theorem}\label{Them3}
For a translation-invariant system, Theorem~\ref{Them1} holds in the thermodynamic limit even if
$\hat{\mathcal{H}}_\text{s}$ and $\hat{\mathcal{H}}_\text{a}$ are under OBCs, provided that
the hopping range is finite in $\hat{\mathcal{H}}_\text{s}$ and $\hat{\mathcal{H}}_\text{a}$.
\end{theorem}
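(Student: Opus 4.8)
The plan is to rerun the proof of Theorem~\ref{Them1} with $\hat{\mathcal{H}}_\text{s}$ and $\hat{\mathcal{H}}_\text{a}$ taken under OBCs, and to show that each place where translation invariance entered can be traded, at the cost of an $O(1/N)$ error, for the corresponding periodic statement through Lemma~\ref{lemma1}. The first observation is that essentially nothing changes up to Eq.~(\ref{Eq11}): the reduction from linear response to Eq.~(\ref{Eq11}) uses only the Schr{\"o}dinger equation, the interaction-picture expansion of $\hat U^I$ to first order in $\Omega$, the identities $[\hat{\mathcal{H}}_0,\hat{\mathcal{H}}']=[\hat{\mathcal{H}}_0^\dagger,\hat{\mathcal{H}}']=0$ and $e^{\ii\hat{\mathcal{H}}'t}\hat a_i e^{-\ii\hat{\mathcal{H}}'t}=e^{-\ii\omega t}\hat a_i$ (with $\hat{\mathcal{H}}'=\omega\hat N_\text{a}$), and the rotating-wave approximation, none of which refer to the boundary. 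The vanishing $\Lambda_j(t,t')=0$ also persists, since it uses only $[\hat{\mathcal{H}}_\text{s},\hat N_\text{a}]=0$ and $\bra{\psi_\text{a}^k}(\hat{\mathcal{H}}_\text{s}^\dagger-\hat{\mathcal{H}}_\text{s})=0$ (the latter because $\ket{\psi_\text{a}^k}$ carries no system particles). So the whole task reduces to comparing $\sum_{ij}\Gamma_{ij}^{\mathrm o}(t,t')$ with $\sum_{ij}\Gamma_{ij}^{\mathrm p}(t,t')$.

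Second, I would recast $\sum_{ij}\Gamma_{ij}$ without using periodicity. Introduce the translation-invariant transfer operator $\hat P=\sum_j\hat c_{j1}^\dagger\hat a_j$ (so $\hat P\ket{\psi_\text{a}^q}=\ket{\psi_\text{s}^{q1}}$ and $\hat P^\dagger\ket{\psi_\text{s}^{q\alpha}}=\delta_{\alpha1}\ket{\psi_\text{a}^q}$), and note that between the two $\hat V$-insertions the intermediate state carries exactly one particle in the system sector, so $e^{-\ii\hat{\mathcal{H}}_0(t-t')}$ acts there as $e^{-\ii\hat{\mathcal{H}}_\text{s}(t-t')}$. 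Inserting complete sets of momentum states $\{\ket{\psi_\text{a}^q}\}$ and $\{\ket{\psi_\text{s}^{q\alpha}}\}$ (orthonormal bases regardless of the boundary) then gives
\begin{equation}
\sum_{ij}\Gamma_{ij}^{\mathrm{bc}}(t,t')=-\sum_{q_1,q_2}A^{\mathrm{bc}}_{kq_1}(t)\,S^{\mathrm{bc}}_{q_1q_2}(t-t')\,\tilde A^{\mathrm{bc}}_{q_2k}(t'),
\end{equation}
with $A^{\mathrm{bc}}_{kq}(t)=\bra{\psi_\text{a}^k}e^{\ii\hat{\mathcal{H}}_\text{a}^{\mathrm{bc}}t}\ket{\psi_\text{a}^q}$, $\tilde A^{\mathrm{bc}}_{qk}(t')=\bra{\psi_\text{a}^q}e^{-\ii\hat{\mathcal{H}}_\text{a}^{\mathrm{bc}}t'}\ket{\psi_\text{a}^k}$, $S^{\mathrm{bc}}_{q_1q_2}(\tau)=\bra{\psi_\text{s}^{q_11}}e^{-\ii\hat{\mathcal{H}}_\text{s}^{\mathrm{bc}}\tau}\ket{\psi_\text{s}^{q_21}}$, and $\mathrm{bc}\in\{\mathrm{o},\mathrm{p}\}$. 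Under PBCs these are momentum-diagonal, $A^{\mathrm p}_{kq}=\delta_{kq}e^{\ii E_kt}$, $\tilde A^{\mathrm p}_{qk}=\delta_{qk}e^{-\ii E_kt'}$, $S^{\mathrm p}_{q_1q_2}(\tau)=\delta_{q_1q_2}\sum_m c^{(1)}_{q_1m}e^{-\ii(\varepsilon_{q_1m}-\ii\gamma_{q_1m})\tau}$, and substituting reproduces exactly the integrand of Eq.~(\ref{Eq1}).

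Third, and at the crux, I would apply Lemma~\ref{lemma1} (and its variant with $t\to-t$, obtained from $e^{\ii\hat{\mathcal{H}}t}=(e^{-\ii\hat{\mathcal{H}}^\dagger t})^\dagger$) separately to $\hat{\mathcal{H}}_\text{a}$ and $\hat{\mathcal{H}}_\text{s}$, each of finite hopping range, to write $A^{\mathrm o}=A^{\mathrm p}+\epsilon^{(a)}$, $\tilde A^{\mathrm o}=\tilde A^{\mathrm p}+\tilde\epsilon^{(a)}$, $S^{\mathrm o}=S^{\mathrm p}+\epsilon^{(s)}$ with all $\epsilon$'s bounded in modulus by $C(t,t')/N$ for an $N$-independent $C$. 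Expanding the displayed product produces the leading term $-e^{\ii E_k(t-t')}\sum_m c^{(1)}_{km}e^{-\ii(\varepsilon_{km}-\ii\gamma_{km})(t-t')}=\sum_{ij}\Gamma_{ij}^{\mathrm p}(t,t')$ plus seven corrections, each with at least one $\epsilon$. The estimate I need is that every correction is $O(1/N)$: a term carrying a single $\epsilon$ keeps two of the Kronecker deltas from the PBC factors, which pin $q_1=q_2=k$, so no free momentum sum remains; a term with two $\epsilon$'s keeps one delta and has a single free sum of $\sim N$ terms each $O(1/N^2)$; the all-$\epsilon$ term has two free sums over $\sim N^2$ terms each $O(1/N^3)$. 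For this to close one also needs the surviving PBC factors bounded uniformly in momentum, in $N$, and in $\tau\ge0$: $e^{-\ii\hat{\mathcal{H}}_\text{a}^{\mathrm{bc}}\tau}$ is unitary, and since $\hat{\mathcal{H}}_\text{s}$ is purely dissipative ($\mathrm{Im}\,\lambda<0$) the $p\times p$ Bloch propagator $e^{-\ii H_\text{s}(q)\tau}$ has norm at most the ($N$-independent) condition number of its eigenvector matrix. Because $t'$ runs over the finite interval $[0,t]$, integrating preserves the $O(1/N)$ bound at fixed $t$, so $\dot N_\text{a}^{\mathrm o}(t)=\dot N_\text{a}^{\mathrm p}(t)+O(1/N)$; letting $N\to\infty$ recovers Eq.~(\ref{Eq1}), and the long-time reduction to Eq.~(\ref{thrm1}) is then word for word that of Theorem~\ref{Them1}. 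The $d$-dimensional case follows with $N\to N^d$.

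The step I expect to be the main obstacle is exactly this $1/N$ bookkeeping: naively each of the $\sim N$ (or $\sim N^2$) intermediate-momentum contributions carries an error $\sim1/N$ from Lemma~\ref{lemma1}, which would only be $O(1)$, so one must track precisely which momentum-conserving deltas survive in each correction and verify that every $\epsilon$-insertion beyond the first is compensated by an extra $1/N$ from a bounded free momentum sum — which in turn forces the $\tau$-uniform operator-norm bounds on the PBC propagators, where the purely dissipative assumption on $\hat{\mathcal{H}}_\text{s}$ is indispensable (otherwise $\|e^{-\ii\hat{\mathcal{H}}_\text{s}\tau}\|$ can grow with $\tau$ and, via the $t'$-integral, spoil the estimate). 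A minor caveat worth stating is that the implicit constant in Lemma~\ref{lemma1} grows (exponentially) with $t$, which is harmless since $t$ is held fixed before $N\to\infty$.
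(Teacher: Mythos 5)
Your proposal is correct and follows essentially the same route as the paper: both reduce the problem to comparing $\sum_{ij}\Gamma_{ij}^{\text{o}}$ and $\sum_{ij}\Gamma_{ij}^{\text{p}}$, factor the correlator into the three momentum-space propagators for $\hat{\mathcal{H}}_\text{a}$, $\hat{\mathcal{H}}_\text{s}$, $\hat{\mathcal{H}}_\text{a}$ (your $A$, $S$, $\tilde A$ are the paper's $F$, $S$, $G$), apply Lemma~\ref{lemma1} to each factor, and control the correction terms by noting that surviving PBC factors are momentum-diagonal so the free momentum sums are compensated by powers of $1/N$ (sums becoming integrals). Your added remarks on uniform boundedness of the PBC propagators and the $t$-dependence of the constants are reasonable refinements of the same bookkeeping the paper performs.
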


\begin{proof}
We first show that in Eq.~(\ref{Eq11}),
\begin{equation}
\label{Eq29}
\begin{aligned}
\sum_{ij} \Gamma_{ij} (t,t') &=
\sum_{ij}  \bra{\psi_\text{a}^k}
e^{ \ii \hat{\mathcal{H}}_{0}^\dagger t}
(\hat{c}_{i1}^{\dagger}\hat{a}_{i}-\hat{a}_{i}^{\dagger}\hat{c}_{i1})
e^{-\ii \hat{\mathcal{H}}_{0} (t - t')}
(\hat{c}_{j1}^{\dagger}\hat{a}_{j}+\hat{a}_{j}^{\dagger}\hat{c}_{j1})
e^{-\ii \hat{\mathcal{H}}_{0} t'}
\ket{\psi_\text{a}^k} \\
& = - \sum_{ij} \bra{\psi_\text{a}^k}
e^{ \ii \hat{\mathcal{H}}_\text{a} t}
\hat{a}_{i}^{\dagger}\hat{c}_{i1}
e^{-\ii \hat{\mathcal{H}}_{0} (t - t')}
\hat{c}_{j1}^{\dagger}\hat{a}_{j}
e^{-\ii \hat{\mathcal{H}}_\text{a} t'}
\ket{\psi_\text{a}^k} \\
& = - \sum_{ij q_1 k_1 \alpha k_2 \beta q_2} \bra{\psi_\text{a}^k}
e^{ \ii \hat{\mathcal{H}}_\text{a} t}
\ket{\psi_\text{a}^{q_1}} \bra{\psi_\text{a}^{q_1}}
\hat{a}_{i}^{\dagger}\hat{c}_{i1}
\ket{\psi_\text{s}^{k_1 \alpha}} \bra{\psi_\text{s}^{k_1 \alpha}}
e^{-\ii \hat{\mathcal{H}}_\text{s} (t - t')}
\ket{\psi_\text{s}^{k_2 \beta}} \bra{\psi_\text{s}^{k_2 \beta}}
\hat{c}_{j1}^{\dagger}\hat{a}_{j}
\ket{\psi_\text{a}^{q_2}} \bra{\psi_\text{a}^{q_2}}
e^{-\ii \hat{\mathcal{H}}_\text{a} t'}
\ket{\psi_\text{a}^k} \\
& = - \sum_{ij q_1 k_1 k_2 q_2}
\frac{1}{N^2} e^{\ii [(k_1 - q_1) i - (k_2 - q_2)j]}
\bra{\psi_\text{a}^k}
e^{ \ii \hat{\mathcal{H}}_\text{a} t}
\ket{\psi_\text{a}^{q_1}}
\bra{\psi_\text{s}^{k_1 1}}
e^{-\ii \hat{\mathcal{H}}_\text{s} (t - t')}
\ket{\psi_\text{s}^{k_2 1}} \bra{\psi_\text{a}^{q_2}}
e^{-\ii \hat{\mathcal{H}}_\text{a} t'}
\ket{\psi_\text{a}^k} \\
& = - \sum_{q_1 q_2}
\bra{\psi_\text{a}^k}
e^{ \ii \hat{\mathcal{H}}_\text{a} t}
\ket{\psi_\text{a}^{q_1}}
\bra{\psi_\text{s}^{q_1 1}}
e^{-\ii \hat{\mathcal{H}}_\text{s} (t - t')}
\ket{\psi_\text{s}^{q_2 1}} \bra{\psi_\text{a}^{q_2}}
e^{-\ii \hat{\mathcal{H}}_\text{a} t'}
\ket{\psi_\text{a}^k}
\end{aligned}
\end{equation}
with $\ket{\psi_\text{a}^k}$ being the initial state.

Since $\Lambda_j (t,t')=0$ [\Eq{Eq15}], based on Eq.~(\ref{Eq11}), we only need to prove that
$\sum_{ij} \Gamma_{ij}^\text{o} (t,t') = \sum_{ij} \Gamma_{ij}^\text{p} (t,t')$
in the thermodynamic limit.
Here we use a superscript `o' (`p') to denote quantities calculated under OBCs (PBCs).
Let $F_{k k'}^{\text{o}(\text{p})} = \bra{\psi_\text{a}^k} e^{ \ii \hat{\mathcal{H}}_\text{a}^{\text{o}(\text{p})} t} \ket{\psi_\text{a}^{k'}}$, $S_{k k'}^{\text{o}(\text{p})} = \bra{\psi_\text{s}^{k 1}} e^{-\ii \hat{\mathcal{H}}_\text{s}^{\text{o}(\text{p})} (t - t')} \ket{\psi_\text{s}^{k' 1}}$ and $G_{k k'}^{\text{o}(\text{p})} = \bra{\psi_\text{a}^{k}} e^{-\ii \hat{\mathcal{H}}_\text{a}^{\text{o}(\text{p})} t'} \ket{\psi_\text{a}^{k'}}$.
Using Lemma~\ref{lemma1}, we have
$F_{k k'}^{\text{o}} - F_{k k'}^{\text{p}} = \frac{1}{N} A_1 (k,k')$,
$S_{k k'}^{\text{o}} - S_{k k'}^{\text{p}} = \frac{1}{N} A_2 (k,k')$ and
$G_{k k'}^{\text{o}} - G_{k k'}^{\text{p}} = \frac{1}{N} A_3 (k,k')$,
where $A_1 (k,k')$, $A_2 (k,k')$ and $A_3 (k,k')$ are independent of $N$.
Then
\begin{equation}
\begin{aligned}
\sum_{ij} \Gamma_{ij}^\text{o} (t,t') - \Gamma_{ij}^\text{p} (t,t')
&= - \Big( \sum_{q_1 q_2} F_{k q_1}^\text{o} S_{q_1 q_2}^\text{o} G_{q_2 k}^\text{o} - F_{k q_1}^\text{p} S_{q_1 q_2}^\text{p} G_{q_2 k}^\text{p} \Big) \\
&= - \Big[ \sum_{q_1 q_2} (F_{k q_1}^\text{o} - F_{k q_1}^\text{p})
S_{q_1 q_2}^\text{o} G_{q_2 k}^\text{o} +
F_{k q_1}^\text{p} (S_{q_1 q_2}^\text{o} G_{q_2 k}^\text{o} - S_{q_1 q_2}^\text{p} G_{q_2 k}^\text{p}) \Big].
\end{aligned}
\end{equation}

Since
\begin{equation}
\begin{aligned}
& \ \ \ \, \sum_{q_1 q_2} (F_{k q_1}^\text{o} - F_{k q_1}^\text{p})
S_{q_1 q_2}^\text{o} G_{q_2 k}^\text{o} \\
& = \sum_{q_1 q_2} (F_{k q_1}^\text{o} - F_{k q_1}^\text{p})
(S_{q_1 q_2}^\text{o} - S_{q_1 q_2}^\text{p}) G_{q_2 k}^\text{o}
+ (F_{k q_1}^\text{o} - F_{k q_1}^\text{p}) S_{q_1 q_2}^\text{p}
G_{q_2 k}^\text{o} \\
& \
\begin{aligned}
= \sum_{q_1 q_2} &(F_{k q_1}^\text{o} - F_{k q_1}^\text{p})
(S_{q_1 q_2}^\text{o} - S_{q_1 q_2}^\text{p})
(G_{q_2 k}^\text{o} - G_{q_2 k}^\text{p})
+ (F_{k q_1}^\text{o} - F_{k q_1}^\text{p})
(S_{q_1 q_2}^\text{o} - S_{q_1 q_2}^\text{p})
G_{q_2 k}^\text{p} \\
&+ (F_{k q_1}^\text{o} - F_{k q_1}^\text{p}) S_{q_1 q_2}^\text{p}
(G_{q_2 k}^\text{o} - G_{q_2 k}^\text{p})
+ (F_{k q_1}^\text{o} - F_{k q_1}^\text{p}) S_{q_1 q_2}^\text{p}
G_{q_2 k}^\text{p}
\end{aligned}\\
& \
\begin{aligned}
= \Big[ \sum_{q_1 q_2} & (F_{k q_1}^\text{o} - F_{k q_1}^\text{p})
(S_{q_1 q_2}^\text{o} - S_{q_1 q_2}^\text{p})
(G_{q_2 k}^\text{o} - G_{q_2 k}^\text{p}) \Big]
+ \Big[ \sum_{q_1} (F_{k q_1}^\text{o} - F_{k q_1}^\text{p})
(S_{q_1 k}^\text{o} - S_{q_1 k}^\text{p})
G_{k k}^\text{p} \Big] \\
&+ \Big[ \sum_{q} (F_{k q}^\text{o} - F_{k q}^\text{p})  S_{q q}^\text{p}
(G_{q k}^\text{o} - G_{q k}^\text{p}) \Big]
+ (F_{k k}^\text{o} - F_{k k}^\text{p}) S_{k k}^\text{p} G_{k k}^\text{p}
\end{aligned}\\
& \
\begin{aligned}
\simeq \Big[ \frac{1}{N} \int &dq_1  dq_2 \, \frac{1}{4\pi^2} A_1 (k,q_1) A_2 (q_1,q_2) A_3 (q_2,k) \Big]
+
\frac{1}{N} G_{k k}^\text{p} \Big[ \int dq_1 \, \frac{1}{2\pi} A_1 (k,q_1) A_2(q_1,k) \Big]
\\
&+ \frac{1}{N} \Big[ \int dq \, \frac{1}{2\pi} A_1 (k,q) S^\text{p} (q,q) A_3 (q,k) \Big]
+ \frac{1}{N} A_1 (k,k) S_{k k}^\text{p} G_{k k}^\text{p}
\end{aligned}\\
& \propto \frac{1}{N},
\end{aligned}
\end{equation}
where $S^\text{p} (q,q) = S_{qq}^\text{p}$, and
\begin{equation}
\begin{aligned}
& \ \ \ \, \sum_{q_1 q_2} F_{k q_1}^\text{p} (S_{q_1 q_2}^\text{o} G_{q_2 k}^\text{o} - S_{q_1 q_2}^\text{p} G_{q_2 k}^\text{p})
= \sum_{q_2}
F_{k k}^\text{p} (S_{k q_2}^\text{o} G_{q_2 k}^\text{o} - S_{k q_2}^\text{p} G_{q_2 k}^\text{p})
\\
& = \sum_{q_2}
F_{k k}^\text{p} \big[ (S_{k q_2}^\text{o} - S_{k q_2}^\text{p}) G_{q_2 k}^\text{o} + S_{k q_2}^\text{p} (G_{q_2 k}^\text{o} - G_{q_2 k}^\text{p}) \big] \\
& = \sum_{q_2}
F_{k k}^\text{p} \big[ (S_{k q_2}^\text{o} - S_{k q_2}^\text{p})
(G_{q_2 k}^\text{o} - G_{q_2 k}^\text{p}) +
(S_{k q_2}^\text{o} - S_{k q_2}^\text{p}) G_{q_2 k}^\text{p}
+ S_{k q_2}^\text{p} (G_{q_2 k}^\text{o} - G_{q_2 k}^\text{p}) \big] \\
& = \Big[ \sum_{q_2}
F_{k k}^\text{p} (S_{k q_2}^\text{o} - S_{k q_2}^\text{p})
(G_{q_2 k}^\text{o} - G_{q_2 k}^\text{p}) \Big]
+ F_{k k}^\text{p} \big[
(S_{k k}^\text{o} - S_{k k}^\text{p}) G_{k k}^\text{p}
+ S_{k k}^\text{p} (G_{k k}^\text{o} - G_{k k}^\text{p}) \big] \\
&\simeq \frac{1}{N} F_{k k}^\text{p} \int dq_2 \, \frac{1}{2\pi} A_2(k,q_2) A_3(q_2,k) +
\frac{1}{N} F_{k k}^\text{p} \big[A_2 (k,k) G_{k k}^\text{p} + S_{k k}^\text{p} A_3 (k,k) \big] \\
& \propto \frac{1}{N},
\end{aligned}
\end{equation}
we conclude that
\begin{equation}
\sum_{ij} \Gamma_{ij}^\text{o} (t,t') - \Gamma_{ij}^\text{p} (t,t')
\propto \frac{1}{N}
\end{equation}
and thus $\dot{N}_\text{a}^\text{o} (t) - \dot{N}_\text{a}^\text{p} (t) \propto 1/N$.
Therefore, Theorem~\ref{Them1} holds in the thermodynamic limit even if
$\hat{\mathcal{H}}_\text{s}$ and $\hat{\mathcal{H}}_\text{a}$ are under OBCs.
\end{proof}

\bigskip

\begin{figure*}[t]
\includegraphics[width=4.4in]{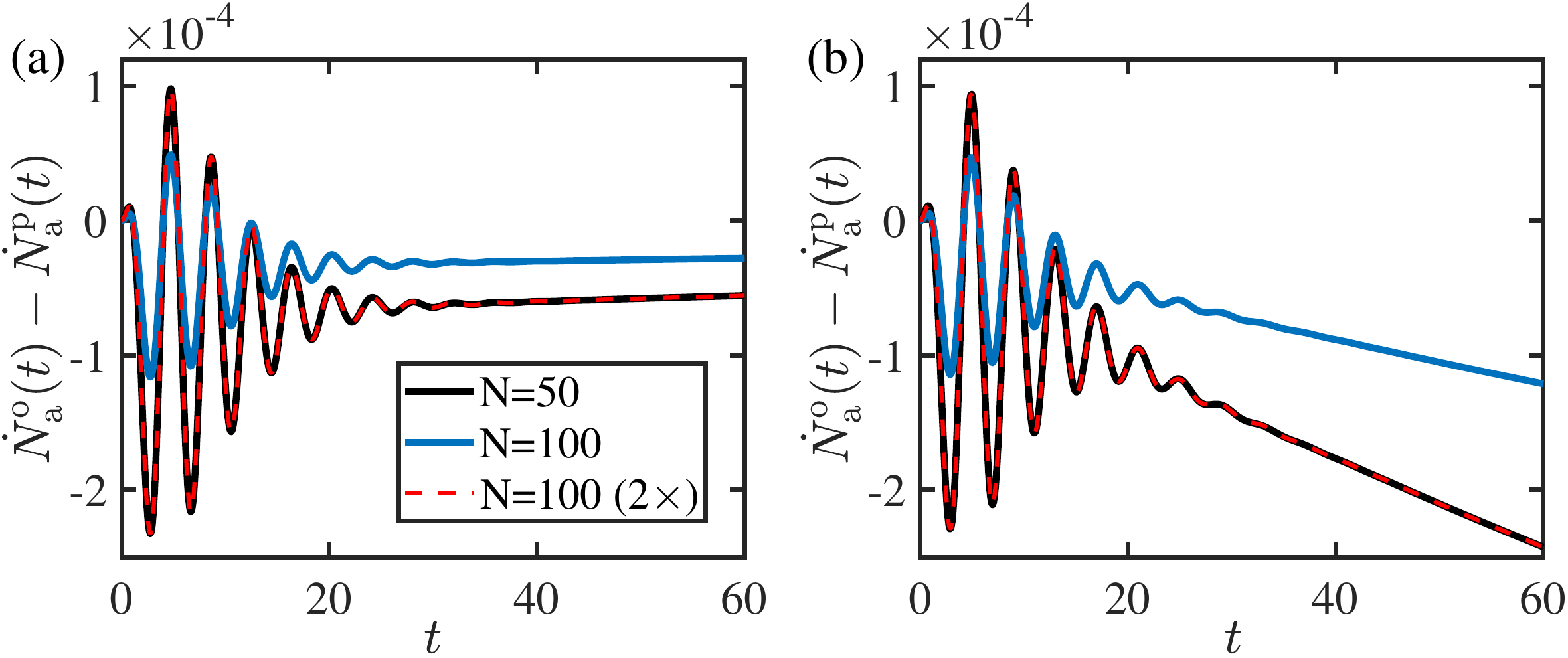}
\caption{
$\dot{N}_\text{a}^\text{o} (t) - \dot{N}_\text{a}^\text{p} (t)$ as time evolves calculated under $N=50$ and $100$ for the HN model with system parameters $J_s=-1$, $g=-0.5$, $\gamma=0.6$, $\Omega=0.1$ and $\delta=-1$ when $k=8\pi/5$.
In (a), $J=0$ and in (b), $J=-0.1$.
For comparison, we also show the doubled value for $N=100$ as a red line.
}
\label{figS3}
\end{figure*}

The theorem is also numerically confirmed.
In Fig.~\ref{figS3}, we show $\dot{N}_\text{a}^\text{o} (t) - \dot{N}_\text{a}^\text{p} (t)$ calculated under $N=50$ and $100$ for the
HN model.
The coincidence of the black line ($N=50$) and the red line (doubled value for $N=100$) agrees with the fact that $\dot{N}_\text{a}^\text{o} (t) - \dot{N}_\text{a}^\text{p} (t)$ is proportional to $1/N$.



\section{S-4. Non-Hermitian Rice-Mele model}

\begin{figure*}[t]
\includegraphics[width=7in]{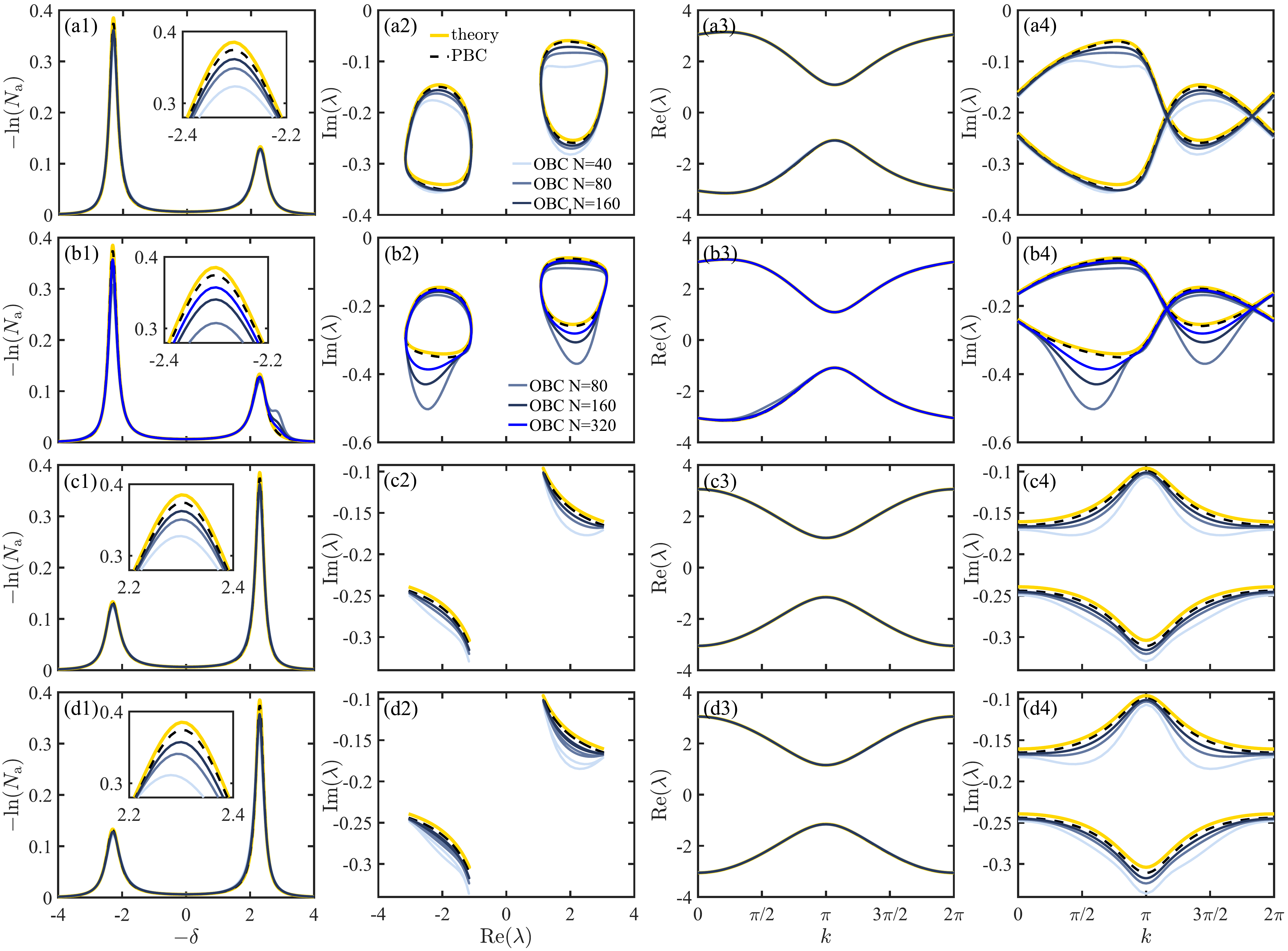}
\caption{
First column: The spectral lines of the non-Hermitian Rice-Mele model when $k=3\pi/2$.
The yellow line is obtained from Eq.~(\ref{RM_theo}),
and the dashed black lines and a set of blue lines are calculated for a system
under PBCs and OBCs, respectively.
The inset shows the zoomed-in view of the spectral line near its highest absorption peak.
The energy spectrum in the complex energy plane (second column), real (third column) and imaginary parts (fourth column)
of energies with respect to $k$.
The energies are extracted by fitting the numerically calculated spectral lines
under PBCs (dashed black lines) and OBCs (solid blue lines) based on Eq.~(\ref{RM_theo}) in comparison with
the theoretical result $\lambda_{km}$ (the yellow lines).
In the first and second rows, $J_1 = 2$, $J_2 = 1$, $m_z = 0.6$, $J_3=1.2$, $\gamma=0.2$, $\Omega=0.03$, $t=200$ and $J=-0.05$
corresponding to a system with NHSEs.
In the third and fourth rows, $J_3=0$ and the values of other parameters are the same as those in the first and second rows
corresponding to a system without NHSEs.
In the first and third rows, the auxiliary levels are under PBCs when the system is under OBCs.
In the second and fourth rows, the auxiliary levels are under OBCs when the system is under OBCs.
}
\label{figS4}
\end{figure*}

In this section, we will
discuss the influence of NHSEs on non-Hermitian absorption spectroscopy
based on the non-Hermitian Rice-Mele model~\cite{NHRM2020PRL}, which is more experimentally relevant.
The model is described by
\begin{equation}
\label{RMHam}
\hat{\mathcal{H}}_\text{s}^\text{NHRM} = \sum_j \big[ J_1 \hat{c}_{j1}^\dagger \hat{c}_{j2} + J_2 \hat{c}_{j2}^\dagger c_{j+1, 1} - \ii \frac{J_3}{2} (\hat{c}_{j1}^\dagger \hat{c}_{j+1, 1} - \hat{c}_{j2}^\dagger \hat{c}_{j+1, 2}) + \text{H.c.} \big] + m_z \hat{c}_{j1}^\dagger \hat{c}_{j1} - (m_z + 2\ii \gamma) \hat{c}_{j2}^\dagger \hat{c}_{j2},
\end{equation}
where $J_1$, $J_2$ and $J_3$ represent the hopping strength,
$m_z$ denotes the mass term, and $\gamma$ describes the decay strength of particles on the second degree of freedom in a unit cell,
which can induce the NHSEs.
The Hamiltonian in momentum space reads
\begin{equation}
h_k^{\text{NHRM}} = (J_1 + J_2 \cos k) \sigma_x + J_2 \sin k \, \sigma_y + (J_3 \sin k  + m_z) \sigma_z + \ii \gamma (\sigma_z - \sigma_0).
\end{equation}
Due to the breaking of both the time-reversal and inversion symmetry, the NHSE occurs when $J_1 J_2 J_3 \neq 0$~\cite{NHRM2020PRL}.

In the presence of auxiliary levels, the full Hamiltonian becomes
\begin{equation}
\label{full_Ham_NHRM}
\hat{\mathcal{H}}^{\text{NHRM}} = \hat{\mathcal{H}}_\text{s}^\text{NHRM} 
+ \sum_j [J(\hat{a}_{j}^{\dagger}\hat{a}_{j+1}+\hat{a}_{j+1}^{\dagger}\hat{a}_{j})-\delta \hat{a}_{j}^{\dagger}\hat{a}_{j} + \frac{\Omega}{2} (\hat{c}_{j1}^{\dagger}\hat{a}_{j}+\hat{a}_{j}^{\dagger}\hat{c}_{j1})] .
\end{equation}

\subsection{A. Case 1}
In this subsection, we will consider the case with the many particle initial state $|\psi_0^{(M)}\rangle=\prod_k\hat{a}_k^\dagger|0\rangle$ (see 
the next subsection for the finite temperature ensemble case for bosons).
Based on Eq.~(\ref{thrm1}), the population of auxiliary levels is given by
\begin{equation}
\label{RM_theo}
-\ln [N_\text{a}(t)]
= \frac{\Omega^{2}t}{2} (\frac{a_{k1}^{(1)}  \gamma_{k1} - b_{k1}^{(1)} \Delta_{k1} }{\Delta_{k1}^2 + \gamma_{k1}^2} +
\frac{a_{k2}^{(1)}  \gamma_{k2} - b_{k2}^{(1)} \Delta_{k2} }{\Delta_{k2}^2 + \gamma_{k2}^2}),
\end{equation}
where $\Delta_{km}=E_k-\varepsilon_{km}$ ($m=1,2$) with $E_k=-\delta+2J\cos k$,
$\varepsilon_{km}=\text{Re}(\lambda_{km})$,
$\gamma_{km}=-\text{Im}(\lambda_{km})$, and
$\lambda_{km}=(-1)^{m}\sqrt{ (J_1 + J_2 \cos k)^2+J_2^2 \sin^2 k+(J_3 \sin k  + m_z + \ii \gamma)^2}-\ii \gamma$.

In Fig.~\ref{figS4}, we show the boundary effects of $\hat{\mathcal{H}}_\text{s}$ and $\hat{\mathcal{H}}_\text{a}$
on non-Hermitian absorption spectroscopy in the cases with or without NHSEs.
The upper two panels refer to the results for a system with NHSEs while
the lower two ones refer to those for a system without NHSEs.
We see clearly that when the system size increases, the difference between the theoretical results and
the open boundary results declines significantly no matter whether a system has NHSEs, implying that
NHSEs cannot prevent us from measuring the complex energy spectrum in momentum space in the thermodynamic limit.
Yet, NHSEs indeed have large effects on the spectroscopy, especially
on measurements of imaginary parts of the energy spectra, when a system size is not sufficiently large
[see the second (a system with NHSEs) and forth panels (a system without NHSEs) in Fig.~\ref{figS4}].
Meanwhile, we find that when the auxiliary levels are under PBCs and the system is under OBCs,
the spectroscopy works much better than the case where the auxiliary levels are also under OBCs even though
a system has NHSEs (see the first panel in Fig.~\ref{figS4}).
The result suggests that i) the NHSE itself has minor effects on the fitted energy spectrum and ii) with the NHSE, the inaccuracy of the fitted energy spectrum under OBCs is mainly caused by the boundary effects of $\hat{\mathcal{H}}_\text{a}$, which will scatter the initial state $\ket{\psi_\text{a}^k}$ to other momentum states. Such effects may be suppressed in a realistic case where all momentum states are occupied in an initial state.
Overall, one can reduce the boundary effects of $\hat{\mathcal{H}}_\text{a}$ by increasing the system size
or decreasing $J$.

\begin{figure*}[t]
\includegraphics[width=4.5in]{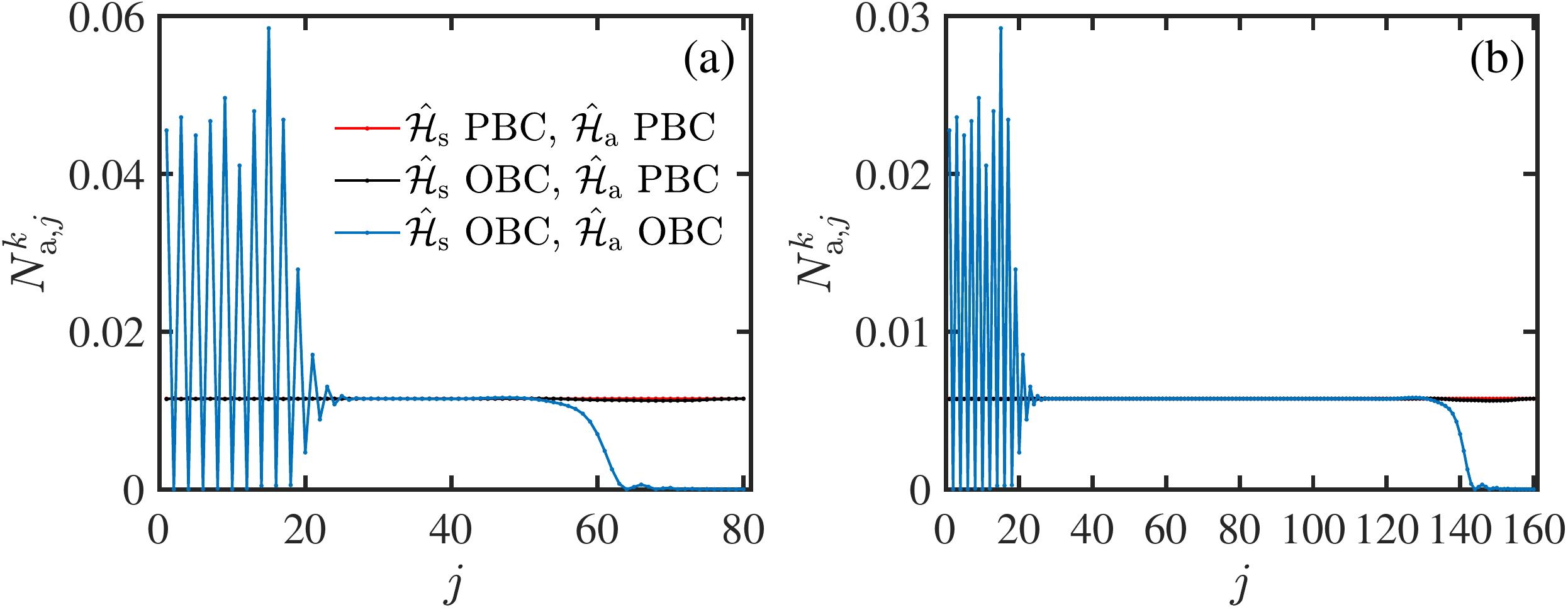}
\caption{
The occupancy of each auxiliary level for the non-Hermitian Rice-Mele model under different boundary conditions after a period of time $t$.
In (a), the system size $N=80$ and in (b), $N=160$.
Here, $J_1=2$, $J_2=1$, $J_3=1.2$, $m_z=0.6$, $\gamma=0.2$, $\Omega=0.03$, $\delta=2$, $k=3\pi/2$, $t=200$ and $J=-0.05$.
}
\label{figS5}
\end{figure*}

The boundary effects of $\hat{\mathcal{H}}_\text{s}$ and $\hat{\mathcal{H}}_\text{a}$
can also be illustrated by the occupancy of each auxiliary level,
$N_{\text{a},j}^k = \bra{\psi_\text{a}^k} e^{\ii \hat{\mathcal{H}}^\dagger t} \hat{a}_j^\dagger \hat{a}_j e^{-\ii \hat{\mathcal{H}} t} \ket{\psi_\text{a}^k}$ under different boundary conditions (see Fig.~\ref{figS5}).
We see that when we choose OBCs and PBCs for the system and auxiliary levels, respectively,
the occupancy is in good agreement with the result obtained by imposing PBCs on both systems,
which is consistent with the results in the first and third panel in Fig.~\ref{figS4}. 
Further imposing OBCs on the auxiliary levels in fact induces large discrepancies for $N_{\text{a},j}^k$ near boundaries compared with the results under PBCs; 
however, the discrepancies are mainly restricted to regions near boundaries, $j \lesssim 20$ or $j \gtrsim N-20$, implying that the boundary effects would vanish when we take an infinite size limit.
The drop (oscillations) near the right (left) boundary are mainly caused by reflection of the state $\ket{\psi_\text{a}^k}$ off an edge evolved by  $\hat{\mathcal{H}}_\text{a}$ for OBC auxiliary levels.
We thus can estimate the distance as $\Delta x \approx |\frac{\partial E_\text{a}(k)}{\partial k}| t = 2Jt |\sin k| = 20$ for the parameters in Fig.~\ref{figS4}, 
which agrees well with the fact that the oscillations and the drop of $N_{\text{a},j}^k$ mainly appear at $j \leq 20$ and $j \geq N-20$, respectively.

\subsection{B. Case 2}
In this subsection, we will use the finite temperature ensemble $\hat{\rho}_0 = e^{-\beta(\hat{\mathcal{H}}_\text{a} -\mu \hat{N}_\text{a})}/Z$ as
an initial density matrix to calculate the dynamics of the single-particle correlation function 
using Eq.~(\ref{CMatrixDynamics}). In this case, the non-Hermitian Rice-Mele Hamiltonian is the effective non-Hermitian Hamiltonian $\hat{\mathcal{H}}$ in the master equation 
with $\hat{L}_j=\sqrt{2\gamma} \hat{c}_{j2}$ as 
detailed in Section S-1.
We have also proved that the dynamics is governed by the single-particle non-Hermitian Hamiltonian in Subsection S-1 B [see Eq.~(\ref{CFtemp})]. 

We plot our numerical results in Fig.~\ref{figS11}.
We find that the spectral lines as well as the fitted spectra under OBCs also approach the PBC ones as $N$ increases, indicating that we can obtain the PBC spectra under OBCs in the thermodynamic limit for non-interacting bosons at a finite temperature.

\begin{figure*}[t]
	\includegraphics[width=7in]{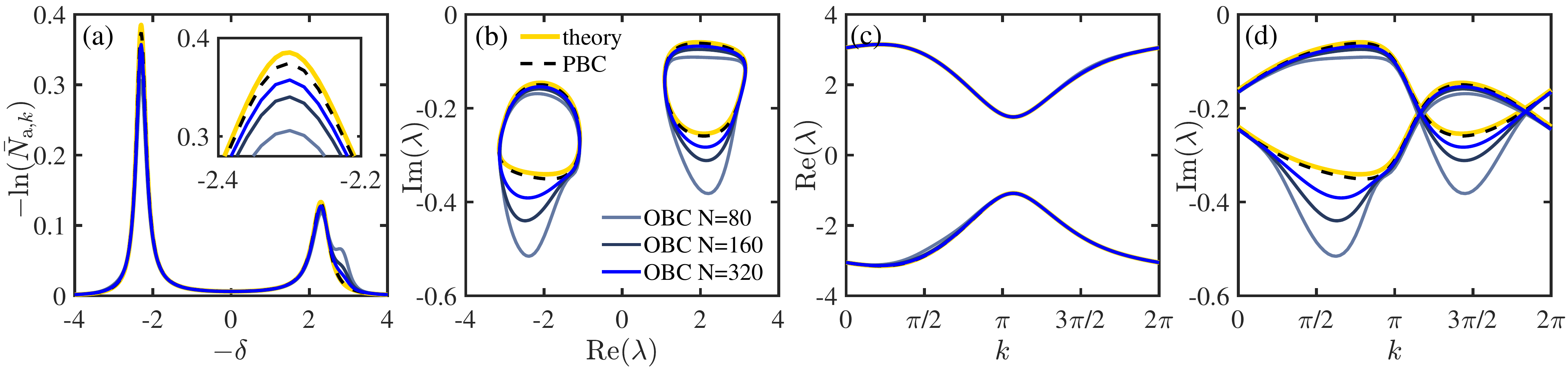}
	\caption{
		(a) The spectral lines of the non-Hermitian Rice-Mele model at $k=3\pi/2$ for bosonic systems at a finite temperature, where
		$\bar{N}_{\text{a},k} = N_{\text{a},k}/N_{0,k}$ with $N_{0,k} = \text{Tr}({\rho}_0 \hat{a}_k^\dagger \hat{a}_k)$.
		Both $N_{\text{a},k}$ and $N_{0,k}$ can be obtained by time-of-flight measurements in cold atom systems. 
		The yellow line is obtained based on Eq.~(\ref{RM_theo}) (here we need to change $N_{\text{a},k}$ to $\bar{N}_{\text{a},k}$),
		and the dashed black lines and a set of blue lines are calculated for a system
		under PBCs and OBCs, respectively.
		The inset shows the zoomed-in view of the spectral line near its highest absorption peak.
		The energy spectrum in the complex energy plane (b), real (c) and imaginary parts (d) of energies with respect to $k$.
		The energies are extracted by fitting the numerically calculated spectral lines
		under PBCs (dashed black lines) and OBCs (solid blue lines) in comparison with the theoretical result (the yellow lines).
		Here, $\beta=1$, $N_0 = 1000$ and other parameters are the same as Fig.~\ref{figS4}(b1)--(b4).
	}
	\label{figS11}
\end{figure*}

\section{S-5. Probing the non-Bloch energy}
In this section, we will discuss the possibility of probing non-Bloch energies for systems that exhibit NHSEs.
We start by preparing a single-particle non-Bloch state $\ket{\psi_0} = \ket{\psi_\text{a,NB}^\beta} = \hat{S}_\text{a}(r) \ket{\psi_\text{a}^q}/C_q (r)$ where the trajectory of $\beta$ forms the generalized Brillouin zone with the non-Bloch momentum defined as $k_\text{NB}=-\ii \ln \beta = q-\ii \ln r$, $\hat{S}_\text{a}(r) = \sum_{j=1}^N r^j \hat{a}_j^\dagger \ket{0} \bra{0} \hat{a}_j$, and $C_q (r)=\sqrt{\bra{\psi_\text{a}^q} \hat{S}_\text{a}^\dagger(r) \hat{S}_\text{a}(r) \ket{\psi_\text{a}^q}}$ is the normalization factor. 
This state with $r=\sqrt{|J_s -g|/|J_s + g|}$ is an approximate eigenstate of the Hatano-Nelson Hamiltonian under OBCs. 
In addition, we turn off the hopping between auxiliary levels such that $\hat{\mathcal{H}}_\text{a} = -\delta \sum_i \hat{a}_i^\dagger \hat{a}_i$ so that $\ket{\psi_0}$ is also an eigenstate of $\hat{\mathcal{H}}_\text{a}$.
We now focus on a non-Hermitian system that can be mapped to a Hermitian one by a similar transformation $\hat{S}_\text{s}(r)$ in terms of the same $r$ (we will clarify the reason in the following).
Suppose that the auxiliary levels are coupled to the first site of each unit cell of the system such that the full Hamiltonian has the form of Eq.~(1) in the main text.
Similar to Eq.~(2), we can also derive the population of auxiliary levels as
\begin{equation}
N_\text{a} (t) = \exp\left[-\frac{\Omega^2 t}{2} \sum_n \frac{a_{qn}^{(1)} \gamma_{qn} - b_{qn}^{(1)} (\varepsilon_{qn} + \delta)}{(\varepsilon_{qn} + \delta)^2 + \gamma_{qn}^2}\right],
\end{equation}
where $\varepsilon_{qn} - \ii \gamma_{qn}$ is the non-Bloch energy of the system Hamiltonian, and $a_{qn}^{(1)}$ and $b_{qn}^{(1)}$ are parameters depending on the system. 
For a non-Hermitian model considered here, $\gamma_{qn} = \gamma$ is a constant decay term added to ensure that the imaginary parts of energies are negative. 
In the following, we will first present the derivation details and some numerical results and then discuss the problems of this protocol.

We start from Eq.~(\ref{Eq11}) where $\Lambda_{ij} = 0$ [Eq.~(\ref{Eq15})] and
\begin{equation}
\begin{aligned}
\Gamma_{ij} (t,t') &= -\bra{\psi_0} e^{\ii \hat{\mathcal{H}}_\text{a} t} \hat{a}_i^\dagger \hat{c}_{i1} e^{-\ii \hat{\mathcal{H}}_\text{s} (t-t')}  \hat{c}_{j1}^\dagger  \hat{a}_j  e^{-\ii \hat{\mathcal{H}}_\text{a} t'}  \ket{\psi_0}
\\&= - e^{-\ii \delta(t-t')} \bra{\psi_0} \hat{a}_i^\dagger \hat{c}_{i1} \hat{S}_\text{s}(r') e^{-\ii \hat{\mathcal{H}}_\text{s}' (t-t') } \hat{S}_\text{s}^{-1} (r') \hat{c}_{j1}^\dagger  \hat{a}_j \ket{\psi_0}
\\& \approx - e^{-\ii \delta(t-t')} \bra{\psi_0} \hat{a}_i^\dagger \hat{c}_{i1} \hat{S}_\text{s}(r') e^{-\ii \hat{\mathcal{H}}_\text{s,PBC}' (t-t') } \hat{S}_\text{s}^{-1} (r') \hat{c}_{j1}^\dagger  \hat{a}_j \ket{\psi_0}.
\end{aligned}
\end{equation}
Here $\hat{\mathcal{H}}_\text{s}' = \hat{S}_\text{s}^{-1} (r') \hat{\mathcal{H}}_\text{s} \hat{S}_\text{s}(r')$ is a Hermitian Hamiltonian (up to a constant shift $-\ii \gamma$) under OBCs, where $\hat{S}_\text{s}(r)=\sum_{j=1}^{N} r^j \sum_\alpha f_\alpha \hat{c}_{j\alpha}^\dagger \ket{0} \bra{0} \hat{c}_{j\alpha}$ and $\hat{S}_\text{s}^{-1}(r) = \sum_{j=1}^{N} r^{-j} \sum_\alpha f_\alpha^{-1} \hat{c}_{j\alpha}^\dagger \ket{0} \bra{0} \hat{c}_{j\alpha}$ only acts on the system part. 
In the final step, we approximate the OBC Hermitian Hamiltonian $\hat{\mathcal{H}}_\text{s}'$ by its PBC version $\hat{\mathcal{H}}_\text{s,PBC}'$.
Note that using this approximation, a non-Bloch state appears as an approximate eigenstate of a non-Hermitian OBC Hamiltonian. 
Then we insert an identity $\sum_{kn} \ket{\psi_\text{PBC}^{kn}} \bra{\psi_\text{PBC}^{kn}} + \sum_k \ket{\psi_\text{a}^k} \bra{\psi_\text{a}^k} = 1$ into the above equation. 
Here, $\ket{\psi_\text{PBC}^{kn}}$ is an eigenstate of $\hat{\mathcal{H}}_\text{s,PBC}'$ labeled by the momentum $k$ and the band index $n$ with the corresponding eigenenergy being $\varepsilon_{kn} - \ii \gamma_{kn}$ (which is the non-Bloch energy of the OBC system Hamiltonian) and $\ket{\psi_\text{a}^k}$ is the momentum state on the auxiliary levels. 
Using the fact that $\bra{\psi_\text{a}^k} \hat{S}_\text{s}^{-1} (r')  \hat{c}_{j1}^\dagger  \hat{a}_j \ket{\psi_0} = 0$, we arrive at
\begin{equation}
\begin{aligned}
\Gamma_{ij} (t,t') & = - e^{-\ii \delta(t-t')} \bra{\psi_0} \hat{a}_i^\dagger \hat{c}_{i1} \hat{S}_\text{s}(r') e^{-\ii \hat{\mathcal{H}}_\text{s,PBC}' (t-t') } \sum_{kn} \ket{\psi_\text{PBC}^{kn}} \bra{\psi_\text{PBC}^{kn}} \hat{S}_\text{s}^{-1} (r') \hat{c}_{j1}^\dagger  \hat{a}_j \ket{\psi_0}
\\&= - \sum_{kn} e^{-\ii (\varepsilon_{kn} + \delta - \ii \gamma_{kn})(t-t')} \bra{\psi_0} \hat{a}_i^\dagger \hat{c}_{i1} \hat{S}_\text{s}(r') \ket{\psi_\text{PBC}^{kn}} \bra{\psi_\text{PBC}^{kn}} \hat{S}_\text{s}^{-1} (r') \hat{c}_{j1}^\dagger  \hat{a}_j \ket{\psi_0}.
\end{aligned}
\end{equation}
We can also derive the result that $\sum_j \hat{c}_{j1}^\dagger \hat{a}_j \ket{\psi_0} = \hat{S}_\text{s} (r) \ket{\psi_\text{s}^{q1}}/Z_{q1} (r)$ where $\ket{\psi_\text{s}^{q1}} = \frac{1}{\sqrt{N}} \sum_x e^{\ii q x} \hat{c}_{x1}^\dagger \ket{0}$ is a momentum state located on the first site of each unit cell in the system and $Z_{q1} (r) = \sqrt{\bra{\psi_\text{s}^{q1}} \hat{S}_\text{s}^\dagger (r) \hat{S}_\text{s} (r) \ket{\psi_\text{s}^{q1}}}$ is the normalization factor. 
Then we have
\begin{equation}
\label{NB_sum_kn_equation}
\sum_{ij} \Gamma_{ij} (t,t') = - \sum_{kn} e^{-\ii (\varepsilon_{kn} + \delta - \ii \gamma_{kn})(t-t')} \frac{1}{Z_{q1}^2 (r)} \bra{\psi_\text{s}^{q1}} \hat{S}_\text{s}^\dagger (r) \hat{S}_\text{s}(r') \ket{\psi_\text{PBC}^{kn}} \bra{\psi_\text{PBC}^{kn}} \hat{S}_\text{s}^{-1} (r') \hat{S}_\text{s} (r) \ket{\psi_\text{s}^{q1}}.
\end{equation}
If $r'=r$ (the imaginary parts of the non-Bloch momentum for the initial state and the non-Hermitian system are equal), then $\hat{S}_\text{s}^{-1} (r') \hat{S}_\text{s} (r) = 1$ so that we can significantly simplify the equation with only one momentum component remaining, that is,
\begin{equation}
\begin{aligned}
\sum_{ij} \Gamma_{ij} (t,t') &= - \sum_{n} e^{-\ii (\varepsilon_{qn} + \delta - \ii \gamma_{qn})(t-t')} \frac{1}{Z_{q1}^2 (r)} \bra{\psi_\text{s}^{q1}} \hat{S}_\text{s}^\dagger (r) \hat{S}_\text{s}(r) \ket{\psi_\text{PBC}^{qn}} 
\langle \psi_\text{PBC}^{qn} | \psi_\text{s}^{q1} \rangle
\\ &= - \sum_{n} e^{-\ii (\varepsilon_{qn} + \delta - \ii \gamma_{qn})(t-t')} 
(a_{qn}^{(1)} + \ii b_{qn}^{(1)}),
\end{aligned}
\end{equation}
where $a_{qn}^{(1)} = \text{Re}(c_{qn}^{(1)})$ and $b_{qn}^{(1)} = \text{Im}(c_{qn}^{(1)})$ with 
$c_{qn}^{(1)} = \frac{1}{Z_{q1}^2 (r)} \bra{\psi_\text{s}^{q1}} \hat{S}_\text{s}^\dagger (r) \hat{S}_\text{s}(r) \ket{\psi_\text{PBC}^{qn}} \langle \psi_\text{PBC}^{qn} | \psi_\text{s}^{q1} \rangle$.
Now we evaluate $\dot{N}_\text{a} (t)$ which is given by 
\begin{equation}
\begin{aligned}
\dot{N}_\text{a} (t) &= 
\frac{\Omega^{2}}{4} \int_{0}^{t} dt'  \sum_{ij} \Gamma_{ij} (t,t') + \mathrm{H.c.}
\\&= -\frac{\Omega^{2}}{4} \sum_{n} \int_{0}^{t} dt' e^{-\gamma_{qn} (t-t')} e^{-\ii (\varepsilon_{qn} + \delta)(t-t')} (a_{qn}^{(1)} + \ii b_{qn}^{(1)}) + \text{H.c.}.
\end{aligned}
\end{equation}
Since the above equation has the same form as Eq.~(\ref{Eq1}), we conclude that
\begin{equation}
\label{NB_conclusion}
N_\text{a} (t) = \exp(-\frac{\Omega^2 t}{2} \sum_n \frac{a_{qn}^{(1)} \gamma_{qn} - b_{qn}^{(1)} (\varepsilon_{qn} + \delta)}{(\varepsilon_{qn} + \delta)^2 + \gamma_{qn}^2})
\end{equation}
based on the derivation in Section S-2, which may allow us to extract the non-Bloch energy by fitting the spectral lines.

\begin{figure*}[t]
\includegraphics[width=5.5in]{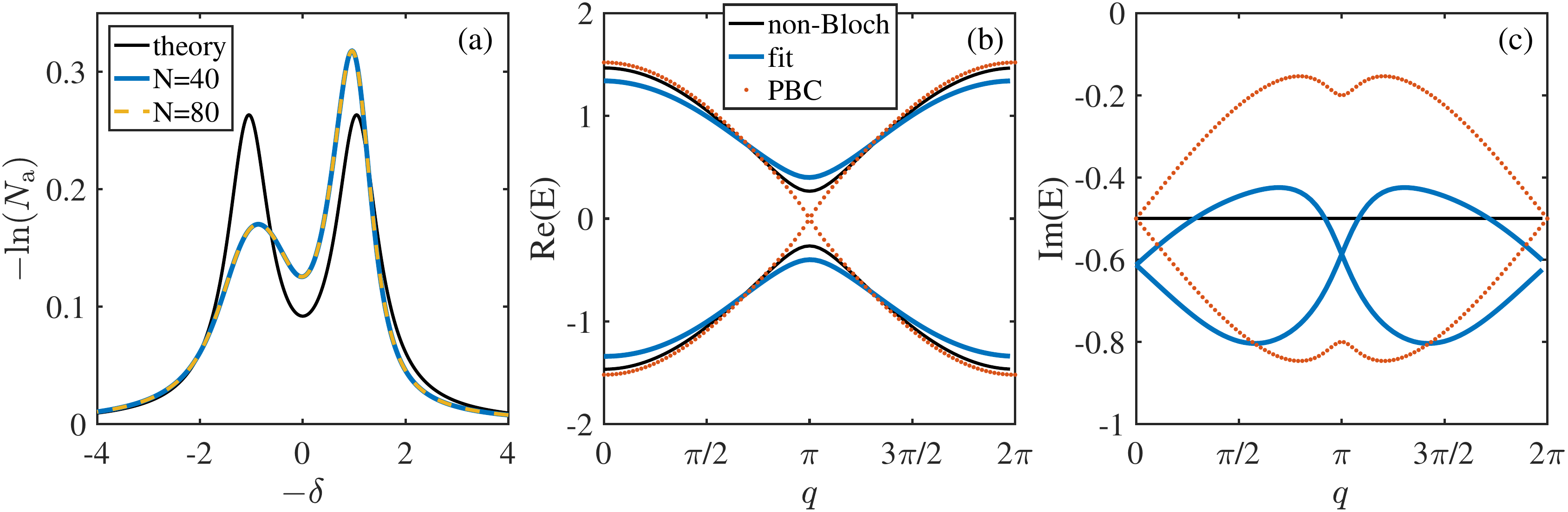}
\caption{
(a) The spectral lines of the non-Hermitian SSH model at $q=\pi/2$ with the black line obtained from Eq.~(\ref{NB_conclusion}) and the colored lines numerically computed. 
The real (b) and the imaginary (c) parts of the non-Bloch energy with respect to $k$, either obtained by non-Bloch theorem (black lines) or extracted by fitting the spectral lines (blue lines). 
In comparison, we also plot the eigenenergies under PBCs as red dots. 
Here we set $J_s = -1$, $g=-0.5$, $J_2 = -0.6$ and $\gamma = 0.5$ so that the system does not have zero-energy edge states.
}
\label{figS7}
\end{figure*}

We now take the non-Hermitian SSH model [Eq.~(6) in the main text] as an example to show whether this protocol can be used to measure the non-Bloch energy spectrum and then discuss the underlying problems for this protocol. 
We find that the measured results described by blue and yellow lines (numerically simulated) deviate significantly from the theoretical result (the black line) as shown in Fig.~\ref{figS7}(a). 
Such a deviation does not decrease as we increase the system size due to the fact that the dynamics is mainly dominated by the evolution of a state mainly localized at a boundary. 
The discrepancy may be attributed to the approximation ($\hat{\mathcal{H}}_\text{s}' \approx \hat{\mathcal{H}}_\text{s,PBC}'$) we employed to derive $N_\text{a}(t)$. 
Although one can safely use this approximation for treating the dynamics of bulk states, a slight difference at a boundary between $\ket{\psi_\text{PBC}^{kn}}$ (eigenstate of $\hat{\mathcal{H}}_\text{s,PBC}'$) and a corresponding eigenstate of $\hat{\mathcal{H}}_\text{s}'$ may be amplified due to the fact that the initial state is exponentially localized at a boundary, which is also the reason why the results do not improve with the system size.

If we insist on fitting the measured spectral line based on the formula of $N_\text{a} (t)$, we can extract the real and imaginary parts of the non-Bloch energy spectrum as shown in the Figs.~\ref{figS7}(b) and (c).
Despite some discrepancy, we find that the fitted real energy spectrum (blue lines) is quite close to the theoretical results (black lines). The fitted energy spectrum also reveals the gapped feature of the non-Bloch energy spectrum (or OBC energy spectrum) at $q=\pi$ in stark contrast to the gapless feature of the PBC energy spectrum (red dots). But for the imaginary parts, the measured ones exhibit significant difference from the theoretical ones.

To sum up, if we prepare a non-Bloch state as an initial state and suppose that its imaginary part of the non-Bloch momentum is equal to the imaginary part of the non-Bloch momentum for a system Hamiltonian, then we may obtain an energy spectrum whose real parts resemble those of the non-Bloch energy spectrum and whose imaginary parts exhibit significant difference.

\emph{We now discuss the problems of this protocol.}

(1) In the derivation, we have assumed that the imaginary part $-\ln r$ of the non-Bloch momentum of the initial state is equal to the imaginary part $-\ln r^\prime$ of the non-Bloch momentum of a system Hamiltonian, i.e., $r=r^\prime$. 
This means that we need to know $r'$ in advance.
In other words, we need to first measure the generalized Brillouin zone using other methods; how to obtain the information in fact still remains an open question in cold atom systems. 
Note that our method cannot be used to extract $r'$ because when $r\neq r'$, $\bra{\psi_\text{PBC}^{kn}} \hat{S}_\text{s}^{-1} (r') \hat{S}_\text{s} (r) \ket{\psi_\text{s}^{q1}} \neq  \langle \psi_\text{PBC}^{kn}  | \psi_\text{s}^{q1} \rangle$ since $\hat{S}_\text{s}^{-1} (r') \hat{S}_\text{s} (r) \neq 1$. 
Then based on Eq.~(\ref{NB_sum_kn_equation}), we have
\begin{equation}
\sum_{ij} \Gamma_{ij} (t,t') = - \sum_{kn} e^{-\ii (\varepsilon_{kn} + \delta - \ii \gamma_{kn})(t-t')} (a_{kqn}^{(1)} + \ii b_{kqn}^{(1)}),
\end{equation}
where $a_{kqn}^{(1)} + \ii b_{kqn}^{(1)} = \frac{1}{Z_{q1}^2} \bra{\psi_\text{s}^{q1}} \hat{S}_\text{s}^\dagger (r) \hat{S}_\text{s}(r') \ket{\psi_\text{PBC}^{kn}} \bra{\psi_\text{PBC}^{kn}} \hat{S}_\text{s}^{-1} (r') \hat{S}_\text{s} (r) \ket{\psi_\text{s}^{q1}}$.
Similarly, one can obtain
\begin{equation}
N_\text{a} (t) = \exp(-\frac{\Omega^2 t}{2} \sum_{kn} \frac{a_{kqn}^{(1)} \gamma_{kn} - b_{kqn}^{(1)} (\varepsilon_{kn} + \delta)}{(\varepsilon_{kn} + \delta)^2 + \gamma_{kn}^2})
\end{equation}
For each $q$, this expression does not resolve the momentum, but instead involves the contribution from all momenta, from which it is very hard to extract the information of the non-Bloch energy $\varepsilon_{kn} - \ii \gamma_{kn}$.
It suggests that without knowing the generalized Brillouin zone of the system Hamiltonian, it is very hard (if not impossible) to extract the non-Bloch energy using this method.

(2) Even we know the generalized Brillouin zone in advance, it also raises a significant challenge to prepare a non-Bloch state on the auxiliary levels in experiments. 
If we consider non-Hermitian auxiliary levels, it remains an open question of whether we can prepare a non-Bloch state by slowly tuning a system parameter, such as $g$ in the HN model. 
Even we could do this, atoms on the auxiliary levels will decay, and a significant loss of atoms after a long period of time would make it impossible to extract the energy spectrum.

(3) Finally, this protocol also requires that the hopping between auxiliary levels must vanish. 
Otherwise, the non-Bloch state is no longer an eigenstate of the auxiliary system, and an analytical results for $N_\text{a} (t)$ would become very complicated, making it hard to extract the spectrum information.

\section{S-6. More results about the non-Hermitian SSH model}
In this section, we show more results about the non-Hermitian SSH model in order to address the following three questions: 
\begin{enumerate}
\item[i)] Does the fact of the equivalence between the results of $k$-resolved measurements for open and periodic boundaries in the thermodynamic limit
mean that the information about zero-energy edge states is not measurable given that the boundary effects are invisible?
\item[ii)] Why is there no signal of the right edge state in Fig.~4 in the main text?
\item[iii)] How well does the protocol work under OBCs in the case where the eigenspectra host exceptional points?
\end{enumerate}

\subsection{A. Boundary effects in the non-Hermitian SSH model}

\begin{figure*}[t]
\includegraphics[width=4.0in]{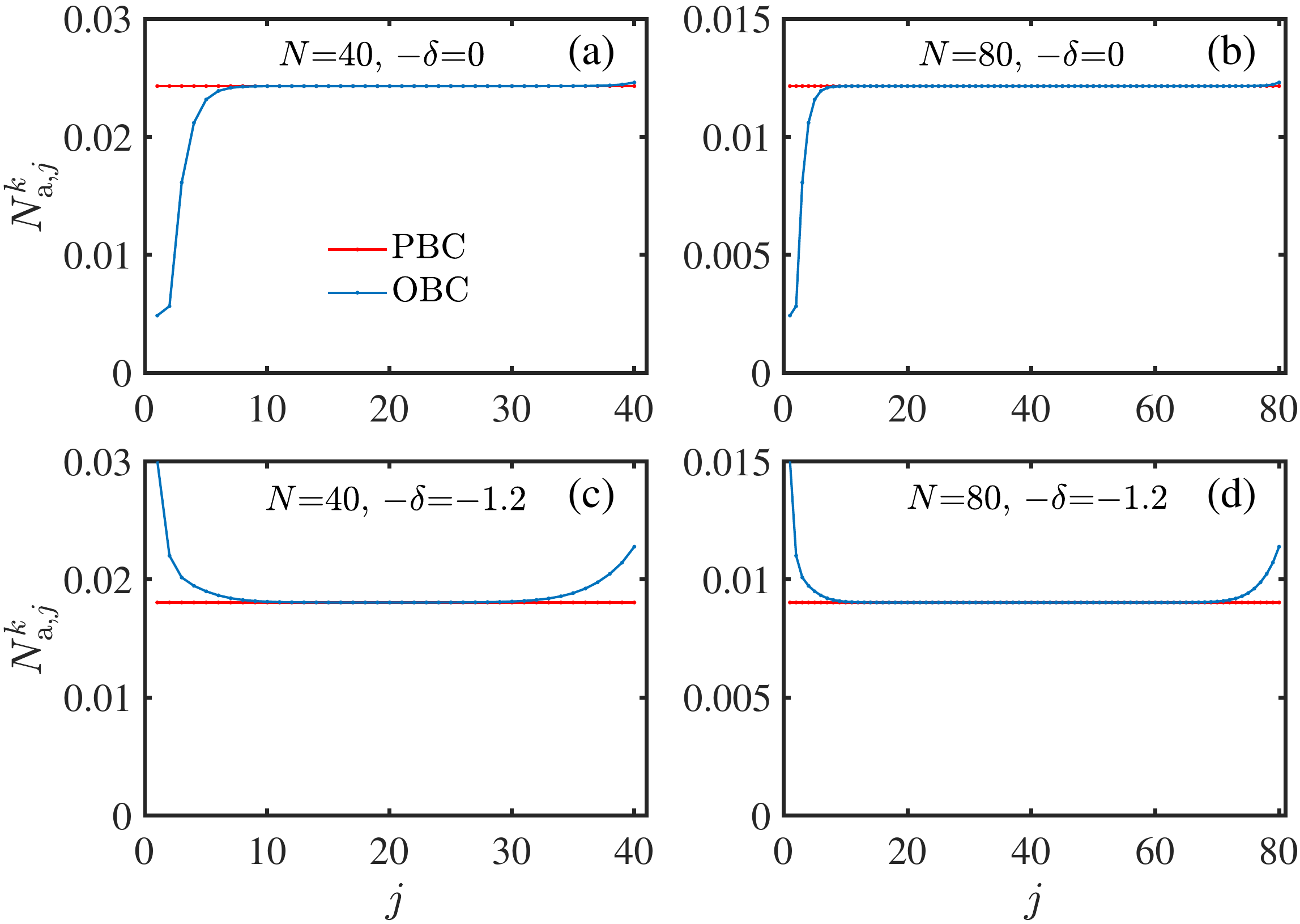}
\caption{
The occupancy of each auxiliary level $N_{\text{a},j}^k$ for the non-Hermitian SSH model under OBCs and PBCs at different system sizes $N$ and detuning $\delta$.
In (a) and (b), the detuning ($-\delta=0$) matches the real energy of zero-energy edge states, leading to a large absorption at the left boundary.
In (c) and (d), there is no contribution of the edge states due to the large detuning $-\delta=-1.2$.
Here, the parameters are $J_s=-1$, $J_2=-2.5$, $g=-0.1$, $\gamma=0.1$ and $J=0$, which are the same as those in Fig.~4(a) in the main text.
}
\label{figS8}
\end{figure*}

To address the first question, we prepare a momentum state $\ket{\psi_\text{a}^k}$ (here we take $k=\pi$) as an initial state on the auxiliary levels and observe the local occupancy 
($N_{\text{a},j}^k = \bra{\psi_\text{a}^k} e^{\ii \hat{\mathcal{H}}^\dagger t} \hat{a}_j^\dagger \hat{a}_j e^{-\ii \hat{\mathcal{H}} t} \ket{\psi_\text{a}^k}$)
of each auxiliary level for the non-Hermitian SSH model under PBCs and OBCs after a long period of time (see Fig.~\ref{figS8}).

Fig.~\ref{figS8} illustrates that $N_{\text{a},j}^k$ under PBCs and OBCs indeed differ significantly at the boundaries, indicating that the OBC Hamiltonian has contributions at the edges. In the presence of zero-energy edge states, $N_{\text{a},j}^k$ exhibits a drop at the left boundary when $-\delta=0$ [see (a) and (b)], showing that one can measure the isolated edge states by measuring $N_{\text{a},j}^k$.

However, if we focus on the $N_{\text{a},j}^k$ in the bulk, we see that the results under PBCs and OBCs agree very well. In fact, the difference only occurs in the vicinity of boundaries. These regions are restricted to $j \lesssim 10$ or $j \gtrsim N-10$ and do not increase as we increase the system size $N$. For the $k$-resolved measurement, we probe the particle number, $N_\text{a} = \sum_j N_{\text{a},j}^k$. It is clear to see that in the thermodynamic limit, this quantity is completely determined by the particle number distribution in the bulk as the boundary contribution to 
$N_{\text{a}}$ becomes vanishingly small compared with the contribution of the bulk. If we consider many particle occupations, we can resolve the momentum state through time-of-flight measurements so that the problem is reduced to the single-particle one (see our proof in Supplementary Section S-1).

We thus conclude that bulk and edge states mainly affect the dynamics of the particle distribution in the bulk and near boundaries, respectively. If we measure the occupancy near a boundary, we may obtain the information about zero-energy edge states; if we measure the occupancy at each momentum through time-of-flight measurements, we can extract the Bloch energy spectrum under PBCs.

\subsection{B. Probing the right edge state of the non-Hermitian SSH model}
In the main text, we find that only the left zero-energy edge state responds to our probe although
there are two zero-energy edge states [one (the other) is localized at the left (right) boundary] for the system parameters we considered in Fig.~4.
The reason why we only probe the left edge state is due to the fact that the edge states at the left and right edges are 
mainly localized at the A and B sites, respectively.
In Fig.~4, we only couple the auxiliary levels to the A sites, and thus only the left edge state will respond to the probe. 
If we want to probe the right edge state, we need to couple the auxiliary level to the B sites and then the response of the edge state would appear at the right boundary (see the red line in Fig.~\ref{figS9}).
The results for both cases can be found in Fig.~\ref{figS9}, corresponding to the inset of Fig.~4(a) in the main text.

\begin{figure*}[t]
\includegraphics[width=2.2in]{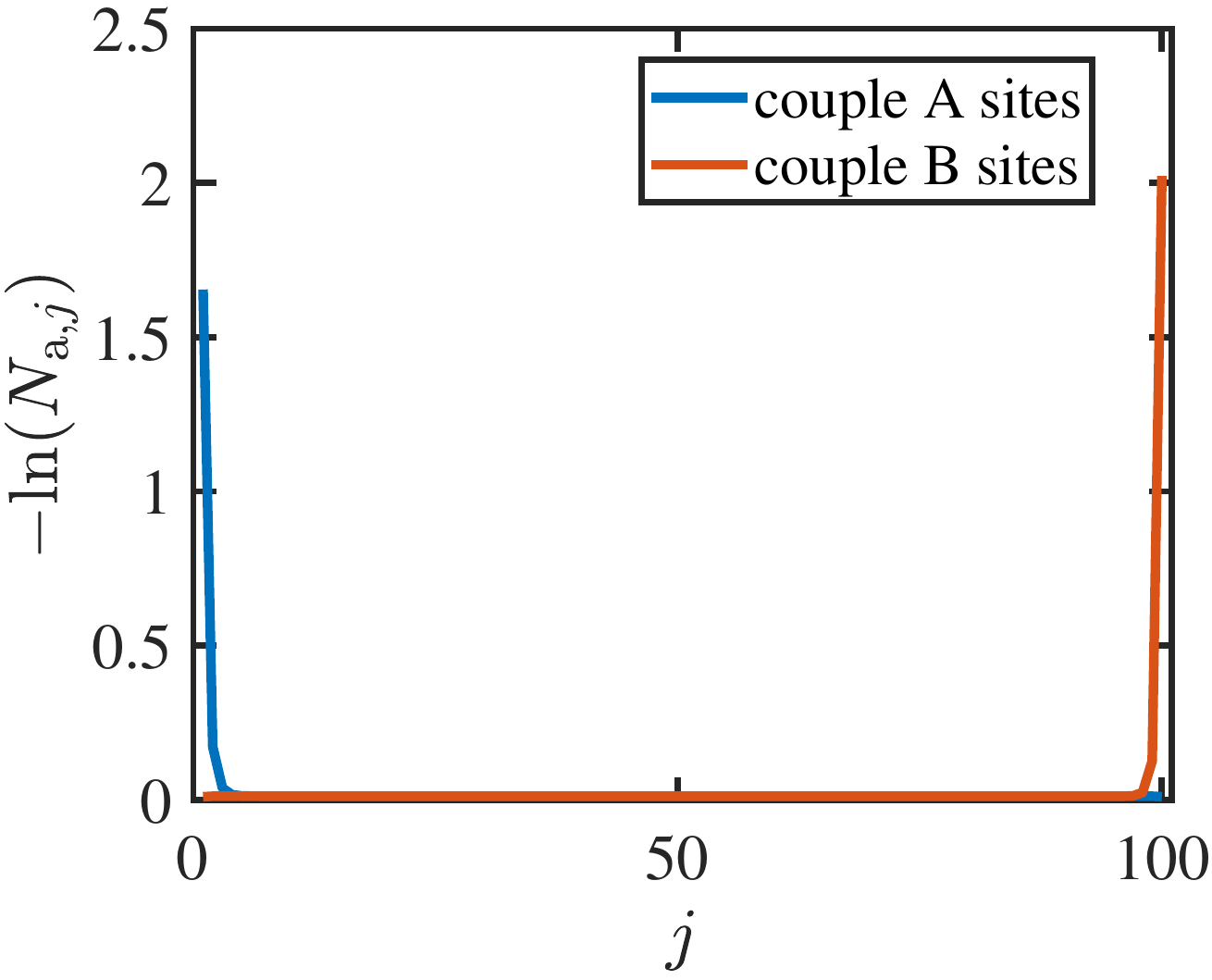}
\caption{
The indicator of local absorption $-\ln (N_{\text{a},j})$ on the $j$th auxiliary level at $\delta=0$ when the auxiliary levels are coupled to A sites (blue line) or B sites (red line).
Here, the parameters are $N=100$, $J_s=-1$, $J_2=-2.5$, $g=-0.1$, $\gamma=0.1$ and $J=0$.
}
\label{figS9}
\end{figure*}

\subsection{C. Extracting the bulk spectra for a system under OBCs with exceptional points}

\begin{figure*}[t]
\includegraphics[width=6.1in]{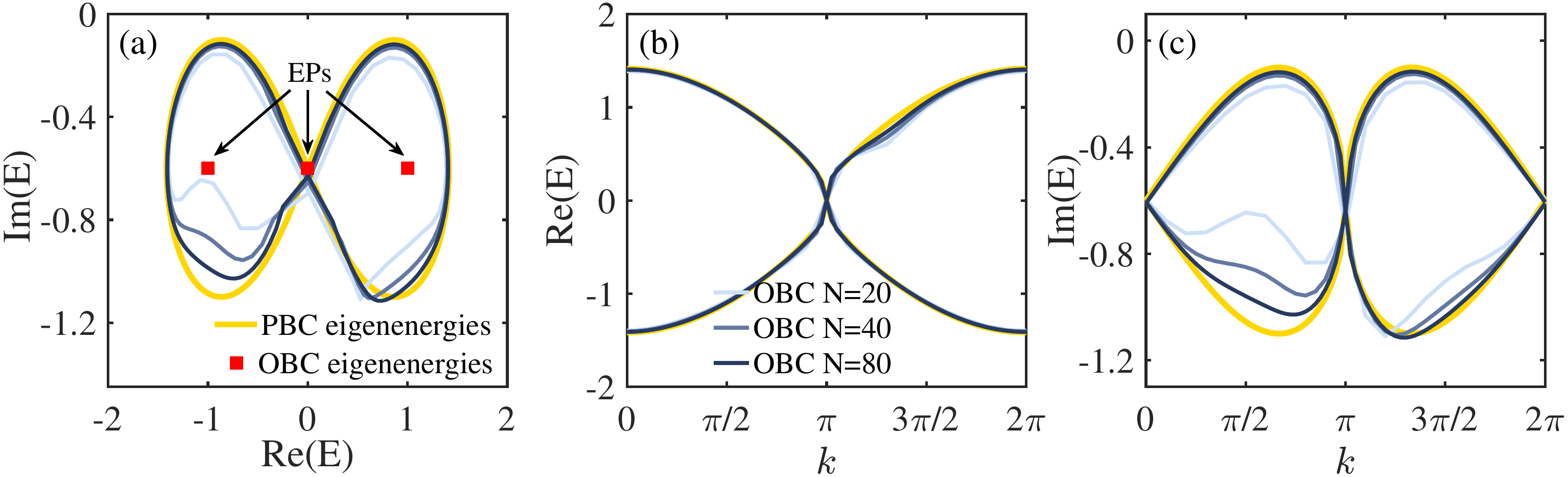}
\caption{
(a) The energy spectra under PBCs in the complex energy plane. The eigenenergies under OBCs are plotted as red squares. 
Real (b) and imaginary parts (c) of eigenenergies with respect to $k$.
The yellow lines are
obtained by diagonalizing the k-space Hamiltonian, and the blue lines are extracted by fitting the spectral lines under OBCs for different system sizes. 
Here $J_s = -0.5$, $g=-0.5$, $J_2 = -1$ and $\gamma = 0.6$.
}
\label{figS10}
\end{figure*}

In this subsection, we will show that the bulk spectra can be probed in a limiting case where all OBC eigenenergies collapse into multiple exceptional points.
Here we choose a specific set of parameters for the non-Hermitian SSH model where all the OBC eigenenergies collapse into three exceptional points [shown in Fig.~\ref{figS10}(a) by red squares].
We find that although the eigenenergies under PBCs and OBCs show significant differences, the probed spectra in an open boundary geometry (blue lines) still approaches the  eigenenergies under PBCs when increasing $N$.
We also note that the exceptional point under PBCs at $E=-\ii \gamma$ can also be captured by the fitted spectra.
Such an exceptional point marks a topological phase transition between $W=1$ (two zero modes) and $W=1/2$ (one or no zero mode), where the winding number $W$ is defined as \cite{Ueda2018PRX,Xu2020PRB} $W=(w_1 - w_2)/2$ with $w_{1,2} = \int_{0}^{2\pi} \frac{dk}{2\pi \ii} \partial_k \ln \det h_{1,2}(k)$ and $h_{1,2} (k)$ denoting the two off-diagonal block of chiral-symmetric Hamiltonians (the non-Hermitian SSH model preserves the chiral symmetry with respect to a constant shift $-\ii \gamma$).

\section{S-7. Absorption spectroscopy for interacting systems}
In this section, we show that the absorption spectroscopy can also be applied to cases where there are interactions for atoms on the system levels. 
In light of the fact that with interactions the problem becomes very complicated,
we consider two simplified cases.
(1) Consider that there are interactions between atoms in the non-Hermitian system (atoms on the auxiliary levels do not interact). Atoms are initially prepared on the auxiliary levels. In this case, we find that our spectroscopy can measure the single-particle part of the system Hamiltonian. In other words, the measurement outcomes are independent of interactions. A brief explanation for this result is that atoms decay rapidly after being transferred to the system part so that it is not possible to have two atoms in the system at the same time, leading to the fact that the interaction part of the system Hamiltonian has no contribution to the dynamics.
(2) Suppose that there is one atom on the system levels for the initial state. We find that our spectroscopy can give the two-particle energy spectrum.

\subsection{A. Case 1}
We now elaborate on the first case. Consider a general density-density interaction
$\hat{\mathcal{H}}_\text{int}=\frac{1}{2}\sum_{m,n\in \text{sys}} U_{mn} \hat{c}_m^\dagger \hat{c}_n^\dagger \hat{c}_n \hat{c}_m$ with $U_{mn} = U_{nm}$ and $U_{nn} = 0$.
Here we use ``sys'' and ``aux'' to denote degrees of freedom in the system and auxiliary levels, respectively. 
Based on the derivation in Section S-1 [Eqs.~(\ref{derivation_CMatrixDynamics_first})--(\ref{CMatrixDynamics})], 
the dynamics of the correlation function is described by
\begin{equation}
\label{CMatrixDynamics_MB}
\begin{aligned}
\frac{dC_{ij}(t)}{dt} 
&= [X C(t) + C(t) X^\dagger]_{ij} 
- \frac{\ii}{2} \sum_{m,n\in \text{sys}} U_{mn} \text{Tr} \big( \rho(t) 
[\hat{c}_i^\dagger \hat{c}_j,\hat{c}_m^\dagger \hat{c}_n^\dagger \hat{c}_n \hat{c}_m] \big)
\\&= [X C(t) + C(t) X^\dagger]_{ij} 
- \ii \sum_{m,n\in \text{sys}} U_{mn} \text{Tr}\big(  \rho(t) (
\delta_{mj} \hat{c}_i^\dagger \hat{c}_n^\dagger \hat{c}_n \hat{c}_m
- \delta_{im} \hat{c}_m^\dagger \hat{c}_n^\dagger \hat{c}_n \hat{c}_j
) \big).
\end{aligned}
\end{equation}
Let $D_{ij}^\text{int} (t)$ be the interaction term in the above equation. 
We apply Wick's theorem to approximate the four-point correlation function, which gives
\begin{equation}
\begin{aligned}
D_{ij}^\text{int} (t) &= - \ii \sum_{m,n\in \text{sys}} U_{mn} \text{Tr}\big(  \rho(t) (
\delta_{mj} \hat{c}_i^\dagger \hat{c}_n^\dagger \hat{c}_n \hat{c}_m
- \delta_{im} \hat{c}_m^\dagger \hat{c}_n^\dagger \hat{c}_n \hat{c}_j
) \big)
\\& \approx - \ii \sum_{m,n\in \text{sys}} U_{mn} \big[  
\delta_{mj} \big( C_{nn}(t) C_{im}(t) - C_{nm}(t) C_{in}(t) \big) 
- \delta_{im} \big( C_{nn}(t) C_{mj}(t) - C_{mn}(t) C_{nj}(t) \big)
\big].
\end{aligned}
\end{equation}
Next, we replace $C_{mn}(t)$ with $C_{mn}^0(t)$ where $C_{mn}^0(t)$ is determined by the non-interacting part of the effective Hamiltonian:
\begin{equation}
\frac{d C_{mn}^0(t)}{dt} = [X C^0 (t) - C^0 (t) X^\dagger]_{mn}.
\end{equation}
Here for clarity and simplicity, we only take, for example, $C_{nn}(t) C_{im}(t) \approx C_{nn}^0 (t) C_{im} (t)$.
The physics will not change if we let $C_{nn}(t) C_{im}(t) \approx [C_{nn}^0 (t) C_{im} (t) + C_{nn} (t) C_{im}^0 (t)]/2$, which is more accurate. 
Then we have
\begin{equation}
\begin{aligned}
D_{ij}^\text{int} (t) &\approx -\ii \sum_{m,n\in \text{sys}} U_{mn} \big[ 
\big( \delta_{mj}  C_{nn}^0(t) - \delta_{nj} C_{mn}^0(t) \big) C_{im}(t)
- \big( \delta_{im} C_{nn}^0(t) - \delta_{in} C_{nm}^0(t) \big) C_{mj}(t)
\big]
\\&= \sum_{m\in\text{sys}} Y_{im} (t) C_{mj} (t) + C_{im} (t) Y_{jm}^* (t),
\end{aligned}
\end{equation}
where $Y_{im} (t) = \ii \sum_{n \in \text{sys}} U_{mn} \big( \delta_{im} C_{nn}^0 (t) - \delta_{in} C_{nm}^0 (t) \big)$ if $m\in \text{sys}$ and $Y_{im} (t) = 0$ if $m \in \text{aux}$.
Then the full dynamics can be approximated by
\begin{equation}
\frac{d C_{ij} (t)}{dt} \approx [\tilde{X}(t) C(t) + C(t) \tilde{X}^\dagger (t)]_{ij}
\end{equation}
with $\tilde{X} (t) = X + Y(t)$.
Now we compare the magnitude of $X$ and $Y(t)$.
We first note that $C_{ij}^0 (t) \sim 0$ if $i$ or $j$ belongs to the system levels due to the dissipative nature of the system.
Given that $U_{mn}$ and $H_{mn}$ typically have the same order of magnitude, we would have $\tilde{X}(t) \approx X$ since the elements of $Y(t)$ is negligibly small compared with $X$. 
Thus, the dynamics contributed by the interaction term is insignificant compared with the non-interacting part, suggesting that we can obtain the single-particle spectrum of the non-interacting part of the effective Hamiltonian by absorption spectroscopy.
We remark that the result will not change if we consider terms like $C_{nn}(t) C_{im}^0 (t)$. 
Even if we take $i\in \text{aux}$, $C_{im}^0 (t)$ is still close to zero because it involves a system level $m$ (here $m$ can only be system levels since there is no interaction between system and auxiliary levels).
However, if the system interacts with the auxiliary levels, then $C_{im}^0 (t)$ can be finite because we can take both $i$ and $m$ belonging to the auxiliary levels, leading to $C_{im}^0 (t) \sim \delta_{im}$.
Then by the arguments the interaction cannot be neglected, since the contribution of the interaction part may have the same order of magnitude as the non-interaction part $X$.

\begin{figure*}[t]
\includegraphics[width=4.2in]{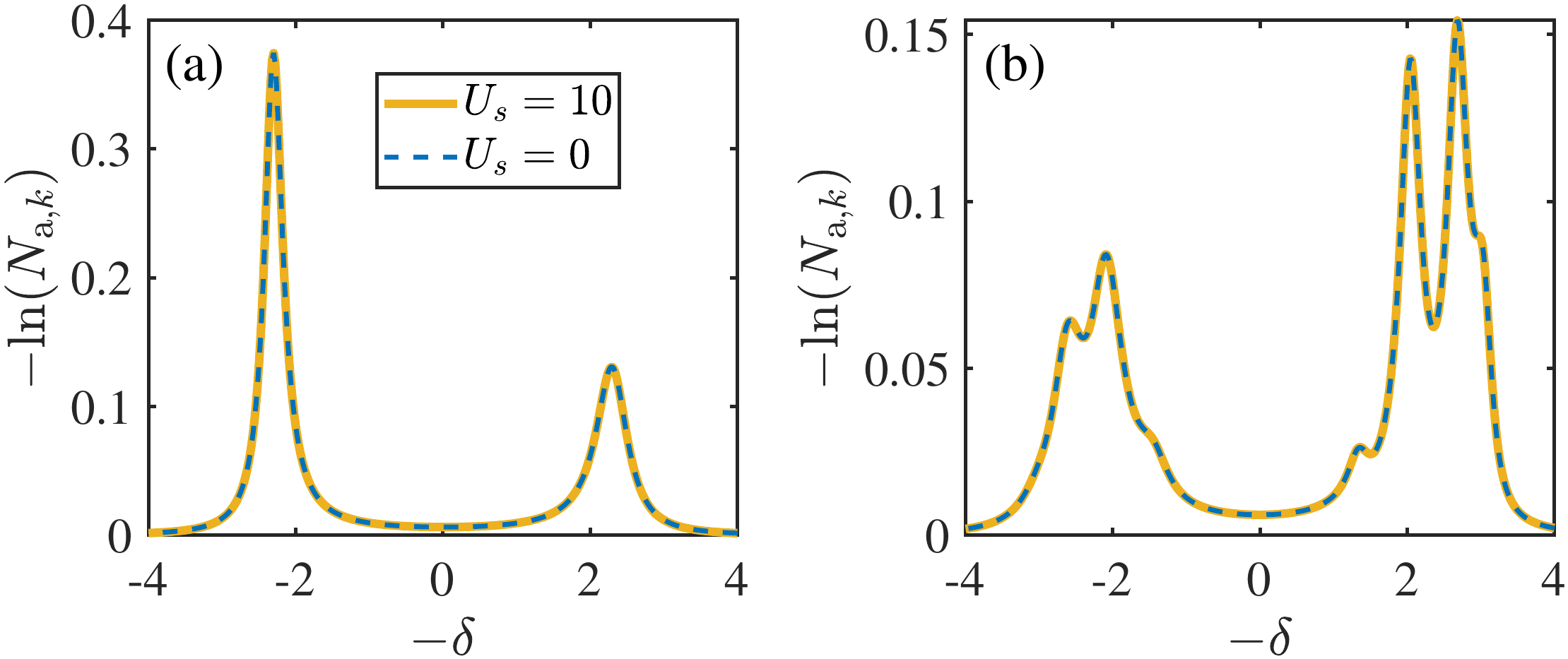}
\caption{
The spectral lines for the non-Hermitian Rice-Mele model under PBCs (a) and OBCs (b) with (solid lines) and without interactions (dashed lines).
Here $N=4$, $J_1 = 2$, $J_2 = 1$, $m_z = 0.6$, $J_3=1.2$, $\gamma=0.2$, $\Omega=0.03$, $t=200$ and $J=-0.05$.
}
\label{figS12}
\end{figure*}

We now numerically confirm that the measured spectra are independent of interactions by numerically simulating the master equation in many-particle bases
using the Krylov subspace method for the initial state $|\psi_0^{(M)}\rangle=\prod_j\hat{a}_j^\dagger|0\rangle$. 
We consider the non-Hermitian Rice-Mele model [Eq.~(\ref{full_Ham_NHRM})] with an interaction described by
$\hat{\mathcal{H}}_\text{int} = U_s \sum_j \hat{c}_{j1}^\dagger \hat{c}_{j1} \hat{c}_{j2}^\dagger \hat{c}_{j2}$ (which can be realized by short-range interactions between two hyperfine states in cold atoms). Fig.~\ref{figS12} shows the spectral lines for a system with $U_s=10$ (yellow lines) and $U_s$=0 (blue lines) under periodic (left) and open (right) boundary conditions. We see that the results with or without interactions exhibit no difference, confirming our previous argument that the measured energy spectra are independent of interactions. In other words, our method can be used to measure the single-particle energy spectrum of a dissipative interacting system. 
Here for the sake of computational simplicity, we take the system to be fermionic.
For bosons, the conclusion would be the same similar to the fermionic case if atoms on the auxiliary levels do not interact.

\subsection{B. Case 2}
To demonstrate that our methods may also be generalized to measure the many-particle energy levels in a dissipative interacting system, 
we consider a simple case where there is an atom in the system Hamiltonian and $N-1$ atoms on the auxiliary levels for the initial state, that is, 
the initial state is changed to $\ket{\psi_{0,k\alpha}^{(M)}} = \hat{c}_{k\alpha}^\dagger \hat{a}_k \ket{\psi_{0}^{(M)}}$. 
The initial state $\ket{\psi_{0,k\alpha}^{(M)}}$ may be realized in experiments by first preparing the $\ket{\psi_{0}^{(M)}}$ state and 
then transferring one particle, say $\hat{a}_k^\dagger \ket{0}$, to the system levels. We note that for more particles present in the system, the problem will become much more complicated, and we thus leave the general case for our future research.

Since the initial state contains a particle on the system levels, other particles will interact with this particle after being transferred to the system levels. As a result, the occupation $N_{\text{a},q} = \text{Tr} (\rho(t) \hat{a}_q^\dagger \hat{a}_q)$ on each momentum state after the evolution via the master equation may give some signals related to the interaction. To validate this argument, we simulate the many-body dynamics of a $N=3$ non-Hermitian Rice-Mele model with the initial state being 
$\ket{\psi_{0,k\alpha}^{(M)}}$ (here we set $k=0$ and $\alpha=1$) under PBCs.

Figure~\ref{figS13} plots the spectral lines for a system with (solid blue lines) and without (solid red lines) interactions. We see that the presence of interactions leads to additional peaks besides a sharp peak corresponding to a single-particle energy in momentum space with momentum $q$. Such small peaks result from two-particle energies due to interactions (see the following discussion). These peaks are much smaller than the peaks for the single-particle energy because the atom on the system level will decay to zero after a period of time $t_0$. When $t>t_0$, the interaction term will not contribute to the dynamics, leading to a smaller spectral response of the interaction terms at long times compared with that of the single-particle part.

\begin{figure*}[t]
\includegraphics[width=4.2in]{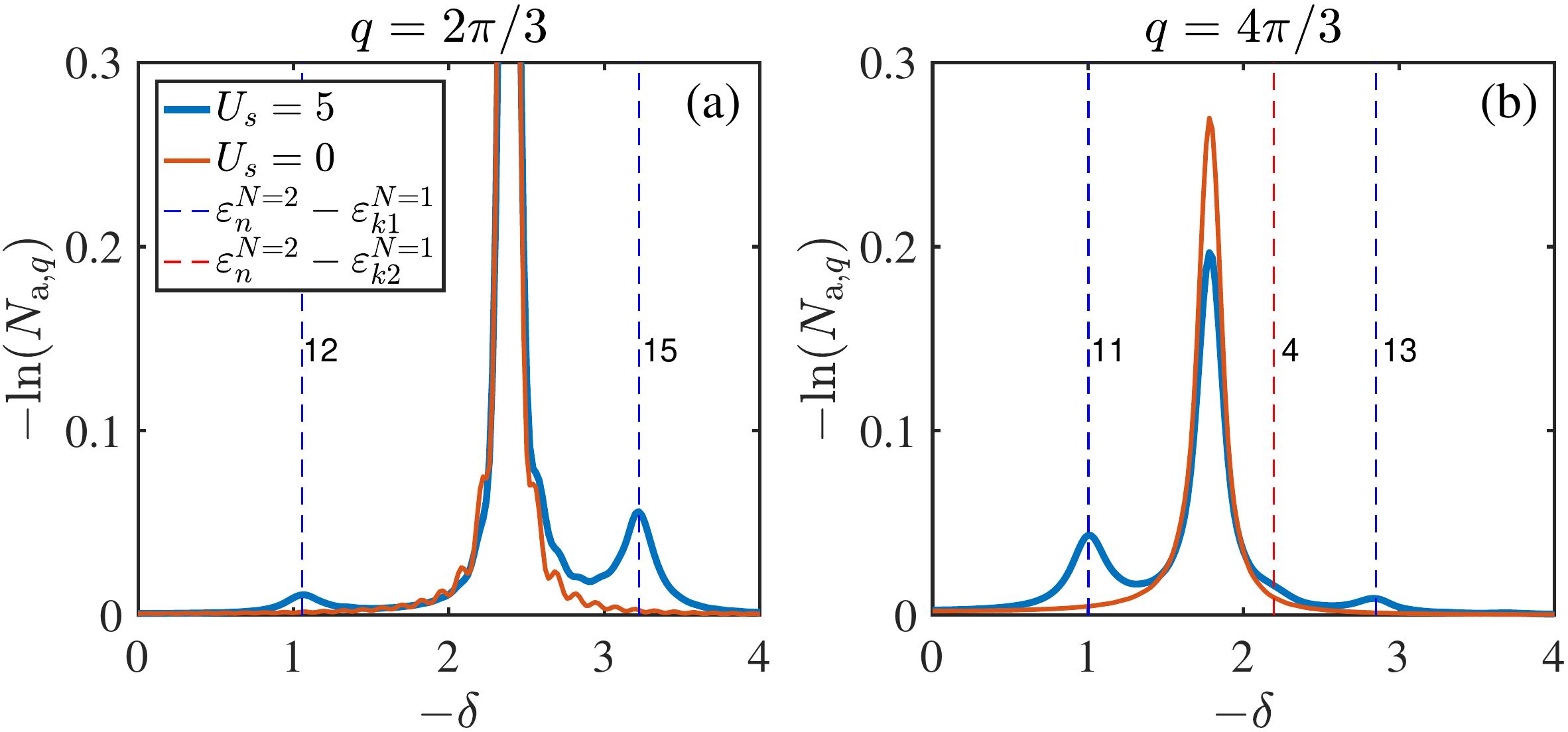}
\caption{
The spectral lines for the non-Hermitian Rice-Mele model under PBCs at (a) $q=2\pi/3$ and (b) $q=4\pi/3$ with (blue solid lines) and without interactions (red solid lines). 
We also plot the real part of two-particle eigenenergies $\varepsilon_{n}^{N=2}$ measured with respect to the real part of single-particle eigenenergies $\varepsilon_{k1}^{N=1}$ and $\varepsilon_{k2}^{N=1}$ as vertical dashed lines, with the numbers beside the vertical dashed lines denoting the index $n$ of two-particle eigenenergies.
Here $N=3$, $J_1 = 2$, $J_2 = 1$, $m_z = 0.6$, $J_3=1.2$, $\gamma=0.05$, $\Omega=0.05$, $t=200$ and $J=0$.
}
\label{figS13}
\end{figure*}

We now use the Fermi's golden rule to give a quantitative explanation. For Hermitian systems, the Fermi's golden rule states that, the transition rate from the initial 
state $\ket{i}$ to the final state $\ket{f}$ is given by  $\Gamma_{i\rightarrow f} = \frac{2\pi}{\hbar} |\bra{f} \hat{V} \ket{i}|^2 \delta(E_f - E_i -\hbar \omega)$
as a result of a weak perturbation $\hat{V}$ with $\omega$ being the frequency of the perturbation field. Here, $\ket{i}$ and $\ket{f}$ are the eigenstates of the unperturbed Hamiltonian with eigenenergies $E_i$ and $E_f$, respectively. 
For non-Hermitian systems, since the eigenenergies are complex, we write them as $E_{i/f} = \varepsilon_{i/f} - \ii \gamma_{i/f}$. 
In light of the fact that the imaginary parts of eigenenergies are related to the broadening of spectral lines, for non-Hermitian systems, we may write the transition rate as 
$\Gamma_{i\rightarrow f} \propto |\bra{f} \hat{V} \ket{i}|^2/\big[ (\varepsilon_f - \varepsilon_i -\hbar \omega)^2 + (\gamma_i + \gamma_f)^2 \big]$.
Let $\ket{u_{\text{R},km}^{N=1}}$ be the single-particle right eigenstate of the system Hamiltonian $\hat{\mathcal{H}}_\text{s}$ with eigenenergy 
$E_{km}^{N=1} = \varepsilon_{km}^{N=1} - \ii \gamma_{km}^{N=1}$ labelled by the momentum $k$ and the band index $m$. 
The initial particle on the system levels can be decomposed as $\hat{c}_{k\alpha}^\dagger \ket{0} = d_{k1} \ket{u_{\text{R},k1}^{N=1}} + d_{k2} \ket{u_{\text{R},k2}^{N=1}}$, 
where $d_{k1}$ and $d_{k2}$ are two complex coefficients.
When we measure the occupation $N_{\text{a},q}$ on momentum $q$, the initial state is effectively 
$\ket{i} = \hat{c}_{k\alpha}^\dagger \ket{0} \otimes \ket{\psi_\text{a}^q} = d_{k1} \ket{u_{\text{R},k1}^{N=1}} \otimes \ket{\psi_\text{a}^q} + d_{k2} \ket{u_{\text{R},k2}^{N=1}} \otimes \ket{\psi_\text{a}^q}$, and the final state can be any two-particle eigenstate $\ket{f_n} = \ket{u_{\text{R},n}^{N=2}}$ which is labelled by the index $n$. Since the initial state $\ket{i}$ is not an eigenstate of the system, we need to consider $\ket{i_m} = d_{km} \ket{u_{\text{R},km}^{N=1}} \otimes \ket{\psi_\text{a}^q}$ for $m=1$ and $m=2$ separately. 
The energies of the initial state $\ket{i_m}$ and the final state $\ket{f_n}$ are given by $E_{i_m} = E_{km}^{N=1} = \varepsilon_{km}^{N=1} - \ii \gamma_{km}^{N=1}$ (for simplicity, we set $J=0$ so that the energy of $\ket{\psi_\text{a}^q}$ is 0) and $E_{f_n} = E_{n}^{N=2} = \varepsilon_n^{N=2} - \ii \gamma_n^{N=2}$, respectively. In our case, 
$\omega$ is replaced by the detuning $-\delta$, and thus  
\begin{equation}
\Gamma_{i_m \rightarrow f_n} \propto |\bra{f_n} \hat{V} \ket{i_m}|^2/\big[ (\varepsilon_n^{N=2} - \varepsilon_{km}^{N=1} + \delta)^2 + (\gamma_{km}^{N=1} + \gamma_n^{N=2})^2 \big].
\end{equation}
As a result, we can observe an absorption peak at $-\delta = \varepsilon_n^{N=2} - \varepsilon_{km}^{N=1}$ if $|\bra{f_n} \hat{V} \ket{i_m}|^2 \neq 0$. 
In Fig.~\ref{figS13}, we have plotted $\varepsilon_n^{N=2} - \varepsilon_{km}^{N=1}$ as vertical dashed lines for $m=1$ (blue) and $m=2$ (red), which coincide with the center of the small peaks (the numbers beside the vertical dashed lines denote the index $n$). The imaginary parts $\tilde{\gamma}_{n}^{N=2}$ extracted by curve fitting are in good agreement with $\gamma_{n}^{N=2}$ obtained by diagonalization. For example, for $n=11$, $\tilde{\gamma}_{11}^{N=2} = 0.093$ and $\gamma_{11}^{N=2} = 0.087$, and for $n=15$, $\tilde{\gamma}_{15}^{N=2} = 0.084$ and $\gamma_{15}^{N=2} = 0.074$. These results suggest that we can obtain two-particle energies of an interacting non-Hermitian system. We may also expect that more particle energies may be measured through the method by initially loading more particles on the system levels.